\documentclass[11pt,a4paper,dvipsnames]{article}
\usepackage[margin=2.2cm]{geometry}

\setcounter{secnumdepth}{2}
\setcounter{tocdepth}{2}
\usepackage[utf8]{inputenc}

\usepackage{bbm}
\usepackage{tikz}
\usepackage[all,cmtip]{xy}
\usepackage[vcentermath]{youngtab}
\usepackage{ytableau}
\usepackage{amsmath}
\usepackage{amssymb}
\usepackage{amsfonts}
\usepackage{amsthm}
\usepackage{longtable}
\usepackage{mathtools}
\usepackage[none]{hyphenat}
\usepackage{setspace}
\usepackage{soul}

\usepackage[textsize=footnotesize,textwidth=1cm]{todonotes}

\usetikzlibrary{arrows.meta}
\usetikzlibrary{arrows}

\usepackage[hidelinks, pagebackref]{hyperref}
\usepackage{xcolor}
\hypersetup{
    colorlinks,
    linkcolor={blue!80!black},
    citecolor={green!50!black},
    urlcolor={red!40!black}
}

\renewcommand{\hat}[1]{\widehat{#1}}

\newcommand{\bbC}{\mathbb{C}}
\newcommand{\bbN}{\mathbb{N}}

\newcommand{\bbT}{\mathbb{T}}

\newcommand{\bbZ}{\mathbb{Z}}

\newcommand{\frakS}{\mathfrak{S}}

\newcommand{\IN}{\mathbb{N}}%
\newcommand{\IC}{\mathbb{C}}%
\newcommand{\eps}{\epsilon}%
\newcommand{\ELL}{\boldsymbol{\ell}}

\newcommand{\ELA}{\boldsymbol{a}}
\newcommand{\ELB}{\boldsymbol{b}}

\newcommand{\epsmod}[1]{\langle \epsilon^{#1}\rangle}

\newcommand{\wreath}{\wr}

\newcommand{\End}{\operatorname{End}}

\newcommand{\Stab}{\operatorname{Stab}}
\newcommand{\frakann}{\mathfrak{ann}}
\newcommand{\frakg}{\mathfrak{g}}
\newcommand{\frakgl}{\mathfrak{gl}}
\newcommand{\GL}{\operatorname{GL}}

\newcommand{\SL}{\operatorname{SL}}

\newcommand{\trace}{\mathrm{trace}}
\newcommand{\bKc}{\underline{\textup{\textsf{Kc}}}}
\newcommand{\rank}{\mathrm{rank}}

\newcommand{\vvirg}{,\ldots,}
\newcommand{\textfrac}[2]{{\textstyle \frac{#1}{#2}}}
\renewcommand{\textprod}{{\textstyle \prod}}
\newcommand{\WR}{\mathsf{WR}}
\newcommand{\bwr}{\underline{\mathsf{WR}}}
\newcommand{\uKc}{\underline{\textup{\textsf{Kc}}}}

\newcommand{\mult}{\textup{mult}}
\newcommand{\la}{\lambda}
\newcommand{\wt}{\mathrm{wt}}
\newcommand{\Sym}{S}

\newcommand{\y}{\boldsymbol{y}}
\newcommand{\bfell}{\boldsymbol{\ell}}

\newcommand{\bfx}{\mathbf{x}}
\newcommand{\x}{\boldsymbol{x}}

\newcommand{\degleq}{\trianglelefteq}
\newcommand{\Hom}{\textnormal{Hom}}%
\newcommand{\VBP}{\mathsf{VBP}}

\newcommand{\VWaring}{\mathsf{VWaring}}

\newcommand{\VNP}{\mathsf{VNP}}
\newcommand{\VF}{\mathsf{VF}}

\newcommand{\per}{\textup{per}}

\newcommand{\Kc}{\textup{\textsf{Kc}}}
\newcommand{\Kcl}{\underline{\Kc^-_{1}}}

\newcommand{\abpw}{\textup{\textsf{abpw}}}

\usepackage[nameinlink]{cleveref}
\crefname{definition}{Definition}{Definitions}
\crefname{theorem}{Theorem}{Theorems}
\crefname{lemma}{Lemma}{Lemmas}
\crefname{claim}{Claim}{Claims}
\crefname{equation}{Equation}{Equations}
\crefname{corollary}{Corollary}{Corollaries}
\crefname{proposition}{Proposition}{Propositions}
\crefname{pro}{Proposition}{Propositions}
\crefname{meta}{Meta-Question}{Meta-Questions}
\crefname{section}{Section}{Sections}
\crefname{conjecture}{Conjecture}{Conjectures}

\swapnumbers
\numberwithin{equation}{section}
\newtheorem{theorem}[equation]{Theorem}
\newtheorem{lemma}[equation]{Lemma}
\newtheorem{proposition}[equation]{Proposition}
\newtheorem{definition}[equation]{Definition}
\newtheorem{corollary}[equation]{Corollary}

\theoremstyle{definition}
\newtheorem{remark}[equation]{Remark}

\renewcommand{\bar}[1]{\overline{#1}}
\AtBeginDocument{%
\let\oldref\ref
\let\ref\cref
}
\renewcommand{\eqref}[1]{(\oldref{#1})}

\title{Geometric complexity theory for product-plus-power}

\author{Pranjal Dutta, Fulvio Gesmundo, Christian Ikenmeyer, \\ Gorav Jindal, Vladimir Lysikov}

\date{}

\newcommand{\affiliations}{
{\renewcommand{\thefootnote}{}
\footnote{\mbox{PD:} School of Computing, National University of Singapore (NUS); \texttt{pranjal@nus.edu.sg}}
\footnote{FG: Institut de Math\'ematiques de Toulouse; UMR5219 -- Universit\'e de Toulouse; CNRS -- UPS, F-31062 Toulouse Cedex 9, France; \texttt{fgesmund@math.univ-toulouse.fr}}
\footnote{CI: University of Warwick, UK; \texttt{christian.ikenmeyer@warwick.ac.uk}}
\footnote{GJ: Max Planck Institute for Software Systems, Saarbr{\"u}cken, Germany; \texttt{gjindal@mpi-sws.org}}
\footnote{VL: Ruhr-Universit\"at Bochum, Bochum, Germany; \texttt{Vladimir.Lysikov@ruhr-uni-bochum.de}}
}
}

\sloppy
\begin{document}
\raggedbottom

\maketitle

\thispagestyle{empty}

\begin{abstract}
According to Kumar's recent surprising result (ToCT’20), a small border Waring rank implies that the polynomial can be approximated as a sum of a constant and a small product of linear polynomials. We prove the converse of Kumar's result and establish a tight connection between border Waring rank and the model of computation in Kumar's result.
In this way, we obtain a new formulation of border Waring rank, up to a factor of the degree.

We connect this new formulation to the orbit closure problem of the product-plus-power polynomial.
We study this orbit closure from two directions:

1. We deborder this orbit closure and some related orbit closures, i.e., prove all points in the orbit closure have small non-border algebraic branching programs.

2. We fully implement the geometric complexity theory approach against the power sum by generalizing the ideas of Ikenmeyer-Kandasamy (STOC'20) to this new orbit closure. In this way, we obtain new multiplicity obstructions that are constructed from just the symmetries of the polynomials.\end{abstract}

\noindent{\footnotesize {\em Keywords}: border complexity, Waring rank, geometric complexity theory, Newton identities}

\noindent{\footnotesize {\em 2020 Math. Subj. Class.}: 68Q17, 05E05, 05E10, 22E60}

\setcounter{page}{1}
\pagestyle{plain}

\affiliations

\setcounter{footnote}{0}

\section{Introduction}
\label{sec:intro}

Waring rank is an important complexity measure for homogeneous polynomials, classically studied in geometry and invariant theory. In complexity theory, it defines a model of computation also known as the homogeneous diagonal depth-3 circuits, see e.g.~\cite{sax08}. In addition, the \emph{matrix multiplication exponent}, a fundamental constant in computational complexity, can be described in terms of the Waring rank of a particular family of polynomials, which are a symmetrized version of the matrix multiplication tensors \cite{CHIL18}. In this work, we closely relate Waring rank and its \emph{border} version to a new measure that we call \emph{border Kumar complexity}, inspired by the recent work of Kumar \cite{kum20}.

The \emph{Waring rank} of a homogeneous degree $d$ polynomial $f$, denoted $\WR(f)$, is the {\em smallest} $r$ such that there exist homogeneous linear polynomials (also called `linear forms') $\ell_1 \vvirg \ell_r$, with $f = \sum_{i \in [r]} \ell_i^d$. This is a natural generalization of the rank of a symmetric matrix, but when $d \geq 3$, a sequence of homogeneous polynomials of Waring rank $r$ may converge to a polynomial of Waring rank strictly larger than $r$. The \emph{border Waring rank} of $f$, denoted $\bwr(f)$, is the smallest $r$ such that $f$ can be written as the limit of a sequence of polynomials $f_\eps$ with $\WR(f_\eps) =r$.

Following \cite{kum20}, we introduce another measure of complexity for homogeneous polynomials: given a homogeneous $f$ of degree $d$, the \emph{Kumar complexity} of $f$, denoted $\Kc(f)$, is the smallest
$m$ such that there exists a constant $\alpha \in \IC$ and homogeneous linear polynomials $\ell_1 \vvirg \ell_m$ with the property that
\begin{equation}\label{eq:Kc}
\tag{1.\raisebox{0.3mm}{\textreferencemark}}
\textstyle
f = \alpha\big(\prod_{i=1}^m (1+\ell_i)-1\big).
\end{equation}
For instance, given a linear form $\ell$, we see that $\Kc(\ell^d)=d$, because $\ell^d= \prod_{j=1}^d (1+\zeta^j \ell)-1$, where $\zeta$ is a primitive $d$-th root of unity. However, not all polynomials have finite Kumar complexity: for example, it is easy to see that $x_1 \cdots x_n$ cannot be expressed as in \eqref{eq:Kc}. The \emph{border Kumar complexity} of $f$, denoted $\underline{\Kc}(f)$, is the smallest $m$ such that  
\[
f = \lim_{\eps \to 0} \textstyle \alpha(\eps)\big(\prod_{i=1}^m (1+\ell_i(\eps))-1\big),
\]
for $\alpha(\eps) \in \bbC[\eps^{\pm 1}]$, and linear forms $\ell_i \in \bbC[\eps^{\pm 1}][\bfx]_1$. Alternatively, one can define $\alpha \in \IC(\eps)$ and $\ell_i \in \IC(\eps)[\x]_1$, which is equivalent, as can be seen analogously to \cite[Lem.~15.22]{burgisser2013algebraic}.
We will only be working over $\bbC[\eps^{\pm 1}]$.

Kumar \cite{kum20} proved that $\uKc(f)$ is finite for every homogeneous polynomial $f$. More precisely, he proved $\underline{\Kc}(f) \leq \deg(f) \cdot \WR(f)$, and it is easy to see that in fact $\underline{\Kc}(f) \leq \deg(f) \cdot \underline{\WR}(f)$. This inequality is not tight: for instance the square-free monomial of degree $n$ can be written as $x_1\cdots x_n = \lim_{\epsilon \to 0} \epsilon^n \left(\prod_i (1 + \frac{1}{\epsilon} x_i)-1\right)$ which shows $\uKc(x_1 \cdots x_n) = n$, whereas it is a classical fact that $\bwr(x_1 \cdots x_n) \geq \binom{n}{\lfloor n/2 \rfloor}$, which is exponentially large as a function of $n$ \footnote{We remark that determining the exact value of $\bwr(x_1 \cdots x_n)$ is an open problem. It is known that $\WR(x_1 \cdots x_n)=2^{n-1}$ \cite{CARLINI20125}. The same result on cactus rank (a scheme-theoretic version of Waring rank) is proved in \cite{ranestad2011rank}. There are (at least) two incorrect/incomplete proofs available online of the same result for border rank: the
early versions of \cite{Oed16}, and the first version of \cite{CGO:19}. A discussion on the gaps in the proofs is
available in the first version of \cite[Sec~6.1]{BB-apolarityarxiv}}. We develop a border version of the Newton identities and use it to prove a converse Kumar's result, showing that completely reducible forms are essentially the \emph{only} example where Kumar's inequality is far from being tight:
\begin{theorem}[Converse of Kumar's theorem]
\label{thm:intro:conversekumar}
Let $f$ be a homogeneous polynomial. Then either $\bwr(f) \leq \underline{\Kc}(f)$ or $f$ is a product of linear forms.
\end{theorem}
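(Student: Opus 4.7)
The plan is to isolate the homogeneous degree-$d$ component of the $\underline{\Kc}$-expression using homogeneity, and then use a border version of the Newton identity to pass from elementary symmetric polynomials to power sums. Write $f \simeq f_\eps = \alpha(\eps)\bigl(\prod_{i=1}^m(1+\ell_i) - 1\bigr) = \alpha\sum_{k=1}^m e_k(\ell)$ with $\alpha \in \IC(\eps)$ and $\ell_i \in \IC(\eps)[\x]_1$. Because each $\ell_i$ is homogeneous of degree $1$ in $\x$, the $k$-th elementary symmetric polynomial $e_k(\ell)$ is homogeneous of degree $k$ in $\x$. Comparing the degree-$k$ component (in $\x$) of $f_\eps$ with that of the homogeneous polynomial $f$ yields the key relations
\[
\alpha \cdot e_d(\ell) \;\xrightarrow{\eps\to 0}\; f \qquad\text{and}\qquad \alpha \cdot e_k(\ell) \;\xrightarrow{\eps\to 0}\; 0 \quad\text{for all } k \in \{1,\dots,m\}\setminus\{d\}.
\]

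Next, the classical Newton identity rearranges to $\alpha p_d = (-1)^{d-1} d\,(\alpha e_d) + \sum_{k=1}^{d-1}(-1)^{k-1}(\alpha e_k)\cdot p_{d-k}$, where $p_k := \sum_{i=1}^m \ell_i^k$ is the $k$-th power sum. In the favourable case when the $\ell_i$ are bounded as $\eps \to 0$, i.e.\ $\ell_i \in \IC[[\eps]][\x]_1$, each $p_{d-k}(\ell)$ stays bounded, so the correction products $(\alpha e_k)\cdot p_{d-k}$ tend to $0$ for every $k<d$. Hence $\alpha p_d \to (-1)^{d-1} d\, f$. Since $\alpha p_d = \sum_{i=1}^m \alpha\, \ell_i^d$, extracting a $d$-th root of the scalar $(-1)^{d-1}\alpha/d$ (working in the extension $\IC(\eps^{1/d})$, which does not change border-rank values) and absorbing it into each $\ell_i$ expresses $f$ as a limit of sums of $m$ $d$-th powers of linear forms, giving $\bwr(f) \leq m$.

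The main obstacle is the case when some $\ell_i$ have $\eps$-poles, since then the Newton correction terms need not vanish. I would handle this by writing each $\ell_i = \eps^{-a_i}\tilde\ell_i$ with $a_i \geq 0$ and $\tilde\ell_i \in \IC[[\eps]][\x]_1$ satisfying $\tilde\ell_i|_{\eps=0}\neq 0$, and then analysing the leading $\eps$-order of $\alpha\prod_{i=1}^m(1+\ell_i) - \alpha$. Factors with $a_i > 0$ rewrite as $\eps^{-a_i}(\eps^{a_i}+\tilde\ell_i)$, contributing the homogeneous linear form $\tilde\ell_i|_{\eps=0}$ to the leading term, while factors with $a_i = 0$ stay affine and contribute $1+\ell_i|_{\eps=0}$. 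Matching the leading-order expansion against the homogeneous polynomial $f$ of degree $d$ forces the affine factors to collapse (their $\ell_i|_{\eps=0}$ must all vanish, so their product reduces to the constant $1$), while the number of homogeneous factors must equal exactly $d$, whose product is then a scalar multiple of $f$. In this degenerate case $f$ is therefore a product of $d$ linear forms, which is precisely the exceptional alternative in the statement.
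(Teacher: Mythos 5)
Your proposal is correct, and it takes a genuinely different route from the paper in the crucial case. The paper splits on the asymptotics of the scalar $\alpha$ into three cases $\underline{\Kc}^+$, $\underline{\Kc}^=$, $\underline{\Kc}^-$; you split on whether the linear forms $\ell_i$ stay bounded as $\eps\to 0$. Your ``bounded'' case simultaneously covers the paper's $\underline{\Kc}^=$ and $\underline{\Kc}^-$ cases (recall that when $\alpha$ diverges, the $\ell_i$ are forced to vanish at $\eps=0$, so they are certainly bounded), and it does so via a single, direct observation: in
$\alpha p_d = (-1)^{d-1}d\,\alpha e_d + \sum_{k<d}(-1)^{k-1}(\alpha e_k)\, p_{d-k}$
the factor $\alpha e_k\to 0$ (from homogeneity of $f$) while $p_{d-k}(\ell)\in\IC[[\eps]][\x]$ is bounded, so each correction term vanishes in the limit. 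This avoids the paper's ``Border Newton Identities'' (Theorem 3.10), which proves the intermediate statements $\eps^{-M+j}p_j(\ell)\simeq 0$ for all $j<d$ by a full induction; your argument only needs the Newton identity once, for $p_d$, and uses a weaker (but sufficient) boundedness bound on $p_{d-k}$ rather than the finer vanishing estimate the paper establishes. That is a clean simplification. For the remaining ``pole'' case, you reach the same conclusion as the paper (that $f$ is a product of linear forms), but by explicit leading-order analysis rather than by invoking closedness of the locus of products of affine linear forms; your sketch leaves a couple of steps implicit (that the pole balance $N\geq A:=\sum a_i$ is forced by convergence, hence $\alpha\to 0$; and that $\prod_j(1+M_j)=1$ forces each $M_j=0$, which follows since the only units of $\IC[\x]$ are nonzero constants), but these are routine and your overall structure is sound. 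One cosmetic remark: rather than passing to $\IC(\eps^{1/d})$ to extract a $d$-th root of $\alpha$, you can substitute $\eps\mapsto\eps^d$ throughout and stay in $\IC(\eps)$, but this does not affect correctness.
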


One important consequence of the close relationship between border Waring rank and Kumar complexity is a new characterization of the matrix multiplication exponent. The matrix multiplication exponent is defined as 
\[
\omega = \inf\{ \tau : \text{ two $n \times n$ matrices can be multiplied using $O(n^\tau)$ scalar multiplications}\}.
\]
This fundamental constant can be defined in terms of the tensor rank and the tensor border rank of the matrix multiplication tensor \cite{Strassen:Gauss_elimination_not_optimal,BCRL79}.
The results of \cite{CHIL18} show
that $\omega = \lim_{n \to \infty} \log_n  \bwr( \trace(X_n^3))$, where $X_n = (x_{ij})_{i,j = 1 \vvirg n}$ is a matrix of variables, so $\trace(X_n^3)$ is a homogeneous degree 3 polynomial in $n^2$ variables.
This result, together with \Cref{thm:intro:conversekumar}, implies the following 
\begin{corollary}\label{corol:kumar mamu}
    The exponent of matrix multiplication is characterized as 
    \[
    \omega = \lim_{n \to \infty} \log_n \uKc( \trace (X_n^3)).
    \]
\end{corollary}
\Cref{thm:intro:conversekumar}, together with \Cref{corol:kumar mamu}, serves as motivation to study upper and lower bounds methods for~$\uKc$. In order to study Kumar's model from a geometric point of view, we introduce a homogenized version of it, that we call \emph{product-plus-power model}. We provide several results on this model: 
\begin{itemize}
    \item We prove \emph{debordering results}, providing strong upper bounds to the border Waring rank of homogeneous polynomials in terms of their complexity in the product-plus-power model, see \Cref{subsec:prodpluspower}.
    \item We implement the geometric complexity theory program (GCT), providing an infinite family of \emph{obstructions}, see \Cref{sec:GCTintro}.
\end{itemize}

\subsection{The product-plus-power model}

Homogeneous models of algebraic computation and their border versions are conveniently described using the notions of restrictions and degenerations of homogeneous polynomials. Let $f(\bfx),g(\bfx)$ be homogeneous polynomials of degree $d$ in $n$ variables $x_1 \vvirg x_n$. We say that $g$ is a \emph{restriction} of $f$ if there exist linear forms $\ell_1(\bfx) \vvirg \ell_n(\bfx)$ such that $g(\bfx) = f(\ell_1(\bfx) \vvirg \ell_n(\bfx))$; in this case, write $g \leq f$. Equivalently, regarding $f,g$ as polynomial functions on $\bbC^n$, we have that $g \leq f$ is there exist a linear map $A : \bbC^n \to \bbC^n$ such that $g ( \bfx) = f(A^t \bfx)$. Here $A^t$ is the transpose of the matrix $A$. 

We say that $g$ is a \emph{degeneration} of $f$ is there exist linear forms $\ell_1(\bfx) \vvirg \ell_n(\bfx)$, depending rationally on a parameter $\eps$, such that $g(\bfx) = \lim_{\eps \to 0} f( \ell_1(\bfx) \vvirg \ell_n(\bfx))$; in this case, write $g \degleq f$. Equivalently, $g(\bfx) = \lim_{\eps \to 0} f(A_\eps \bfx)$ where $A_\eps$ is a linear map $A_\eps : \bbC^n \to \bbC^n$ depending on $\eps$. It is a classical fact that the entries of $A_\eps$, or equivalently the coefficients of $\ell_i(\bfx)$, can be taken in the ring of Laurent polynomials $\IC[\eps^{\pm 1 }]$, see e.g. \cite[Section 20.6]{burgisser2013algebraic}; in particular, the polynomial $f(A_\eps \bfx)$ is an element of $\bbC[\eps^{\pm 1 }, x_1 \vvirg x_n]$.

The number of variables does not play a role, and we will occasionally use the slightly larger ring $\IC[x_0,x_1,\ldots,x_n]$ for notational convenience.

Define the product-plus-power polynomial 
$
P^{[d]}_{1,1} = x_1 \cdots x_d + x_0^d \in \bbC[x_0, x_1 \vvirg x_n]_d.
$
Given a homogeneous degree~$d$ polynomial $f$ in the variables $x_1,\ldots,x_n$ (without $x_0$), it is clear that if $\uKc(f) \leq m$ then $x_0^{m-d} f$ is a degeneration of $P^{[m]}_{1,1}$. Therefore, lower bounds on the smallest possible $m$ such that $x_0^{m-d} f \degleq P^{[m]}_{1,1}$ provide lower bounds on $\uKc(f)$ and hence on~$\bwr$.

It is unclear to what extent the product-plus-power model is stronger than Kumar's model, so we provide \emph{debordering results} for the product-plus-power model, see \Cref{subsec:prodpluspower}.
We establish that this model is not much stronger, hence lower bounds are expected to exist.
To find such lower bounds on Kumar's complexity, we implement the geometric complexity theory approach for the product-plus-power model, see \ref{sec:GCTintro}.

\subsection{Debordering the product-plus-power model}
\label{subsec:prodpluspower}

Border Waring rank, and border Kumar complexity are \emph{border} measures of complexity, that is they are defined in terms of degeneration.
It is often unclear what the gap between a non-border complexity measure and its border complexity measure can be.
Lower bounds of a border complexity measure in terms of a non-border measure are commonly called \emph{debordering results} and they guarantee that introducing the degeneration procedure does not make a model of computation much stronger. We provide debordering results for the product-plus-power model and some generalizations.

We record one important debordering result for border Waring rank, in terms of the algebraic branching program width. Given a homogeneous polynomial $f$ of degree $d$, the \emph{algebraic branching program width} of $f$, denoted $\abpw(f)$, is the smallest $w$ such that $f$ can be expressed as a product $f(\bfx) = A_1(\bfx) \cdots A_d(\bfx)$ of matrices whose entries are linear forms, with $A_1$ of size $1 \times w$, $A_k$ of size $w \times w$ for $k = 2 \vvirg d-1$ and $A_d$ of size $w \times 1$. 
It is known that $\abpw(f) \leq \bwr(f)$ \cite{Forbes16,blaser2020complexity}, and further, these measures can be {\em exponentially} far (for e.g., see~\cite{ckw11}). In particular, \Cref{thm:intro:conversekumar} can be interpreted as a debordering result for Kumar's complexity, showing $\abpw(f)\leq\uKc(f)$. 

Similarly, other debordering results for border Waring rank, such as \cite[Theorem 1]{DGIJL:FixedPar}, and more generally classification results for secant varieties as in \cite{BucLan:ThirdSecantVar,ballico2017,BalBer:StratificationFourthSecantVeronese}, can be reinterpreted in terms of Kumar's complexity via \Cref{thm:intro:conversekumar}.

We obtain the following result for degenerations of the product-plus-power polynomial~$P^{[d]}_{1,1}$:
\begin{theorem}[Debordering product-plus-power]\label{thm:intro:productpluspower}
Let $f \in \IC[x_1,\ldots,x_n]_d$. 
If $f\trianglelefteq P_{1,1}^{[d]}$, then 
\begin{enumerate}
    \item[(i)] either $f\leq P_{1,1}^{[d]}$ 
    \item[(ii)] or $\bwr(f) \leq O(d^5)$.
\end{enumerate}
\end{theorem}
Also, we will obtain debordering results for a variant of the product-plus-power polynomial, that is the product-plus-two-powers polynomial $P_{1,2}^{[d]} = x_1 \cdots x_d + x_0^{d} + x_{d+1}^d$, as well. These debordering results are more fine-grained because the previous techniques would show that $\abpw(f)$ is polynomially-bounded~\cite{duttademystifying}. In \ref{subsec:lb}, we show exponential lower bounds for the same (see~\cref{thm:lowerboundprodpluspower}-\ref{thm:lowerboundproductplustwo}).

\subsection{The GCT program}\label{sec:GCTintro}
Mulmuley and Sohoni introduced the geometric complexity theory (GCT) program \cite{MS01,MS08} as an approach to Valiant's determinant vs permanent problem.
The approach aims to determine ``representation theoretic obstructions'' in order to prove lower bounds of certain complexity measures, typically by proving the impossibility of a degeneration.
Basic background knowledge on the representation theory of the general linear group and the symmetric group can for example be found in \cite{FH91} and \cite{Ful97}, and is especially valuable in \ref{sec:orbitclosure}.

The group $\GL_{n}$ has a natural action on $\IC[ x_1,\ldots,x_n]_d$ given by
$g \cdot f = f \circ g^t$ or equivalently $g \cdot f(\bfx) = f ( g^t \bfx)$ for any
$f \in  \IC[ x_1,\ldots,x_n]_d$, $g \in \GL_n$, $\bfx\in \bbC^n$.
Now, given two elements $f_1,f_2 \in \IC[ x_1,\ldots,x_n]_d$, the condition that a polynomial $f_1$ degenerates to a polynomial $f_2$ is equivalent to the fact that $f_2 \in \bar{ \GL_n \cdot f_1}$; here $\GL_n \cdot f_1 = \{g \cdot f_1 \mid g \in \GL_n\}$ denotes the orbit of $f_1$ under the group action and the overline indicates the closure, equivalently in the Zariski or the Euclidean topology of $\IC[ x_1,\ldots,x_n]_d$~\cite[AI.7.2 Folgerung]{Kra85}. Further, the membership statement $f_2 \in \bar{ \GL_n \cdot f_1}$ is equivalent to the inclusion of orbit-closures $\bar{\GL_n \cdot f_2} \subseteq \bar{ \GL_n \cdot f_1}$.

Let $\bbC\big[\IC[x_1,\ldots,x_n]_d\big]$ be the ring of polynomials over $\IC[x_1,\ldots,x_n]_d$: its elements are polynomials in the coefficients of the elements of $\IC[x_1,\ldots,x_n]_d$. Given $f \in \IC[x_1,\ldots,x_n]_d$, let $I( \bar{\GL_n \cdot f}) \subseteq \bbC\big[\IC[x_1,\ldots,x_n]_d\big]$ denote the \emph{vanishing ideal} of the algebraic variety $\bar{\GL_n \cdot f}$ and let $\bbC[\bar{\GL_n \cdot f}] = \bbC\big[\IC[x_1,\ldots,x_n]_d\big] / I( \bar{\GL_n \cdot f})$ be the quotient ring, called the \emph{coordinate ring} of $\bar{\GL_n \cdot f}$. It turns out that $\bbC[\bar{\GL_n \cdot f}]$ is a graded ring; denote by $\bbC[\bar{\GL_n \cdot f}] _D$ its degree $D$ component.
The condition $\bar{\GL_n \cdot f_2} \subseteq \bar{ \GL_n \cdot f_1}$ is equivalent to the inclusion $I( \bar{\GL_n \cdot f_1}) \subseteq I( \bar{\GL_n \cdot f_2})$, yielding a surjection of graded rings $\bbC[ \bar{ \GL_n \cdot f_1}]  \twoheadrightarrow \bbC[\bar{ \GL_n \cdot f_2}]$. In summary, if $f_1$ degenerates to $f_2$, then for every degree $D$ there is a surjective linear map 
\[
\bbC[ \bar{ \GL_n \cdot f_1}]_D  \twoheadrightarrow \bbC[\bar{ \GL_n \cdot f_2}]_D.
\]
The group action of $\GL_n$ on $\IC[x_1,\ldots,x_n]_d$ induces an action of $\GL_n$ on the coordinate rings via canonical pullback. This makes both $\IC[\overline{\GL_n f_1}]_D$ and $\IC[\overline{\GL_n f_2}]_D$ into finite dimensional representations of $\GL_n$.
The surjection is $\GL_n$-equivariant, namely it commutes with the group action.
The group $\GL_n$ is \emph{reductive}, a condition that guarantees that both representations decompose into a direct sum of \emph{irreducible representations}.
For each partition $\la$,
the \emph{representation theoretic multiplicity} $\mult_\la$ counts the number of irreducible representations of type $\la$ in such a decomposition; this number is independent of the decomposition. In the setting described here,
$\la$ is always a partition of $Dd$ into at most $n$ parts, and
Schur's Lemma (see, e.g., \cite[Lemma 1.17]{FH91}) implies an inequality of representation theoretic multiplicities: $\mult_\la(\IC[\overline{\GL_n f_2}]_D) \geq \mult_\la(\IC[\overline{\GL_n f_1}]_D)$.

The GCT program aims to prove obstructions to the existence of the surjection by determining violations of these inequalities. 
More precisely, the existence of any $\la$ with $\mult_\la(\IC[\overline{\GL_n f_1}]_D) < \mult_\la(\IC[\overline{\GL_n f_2}]_D)$ would guarantee $\overline{\GL_n f_2} \not\subseteq \overline{\GL_n f_1}$ and therefore show that $f_2$ is \emph{not} a degeneration of $f_1$. This is called a \emph{representation theoretic multiplicity obstruction}. If additionally $\mult_\la(\IC[\overline{\GL_n f_1}]_D)=0$, then we call this an \emph{occurrence obstruction}.
It is a wide open question in geometric complexity theory in which situations orbit closure containment can be disproved by representation theoretic multiplicity obstructions.
Very few examples are known \cite{BI11,BI13,IK20}.
We determine multiplicity obstructions that show that the polynomial $x_0^d + \cdots + x_d^d$ is not a degeneration of $P_{1,1}^d$ (which for example could also be seen directly by comparing the dimensions of their orbit closures):
\begin{theorem}[New obstructions]
\label{thm:intro:obstructions}
Let $d \geq 3$, and let $\la = (5d-1,1)+((d+1)\times (10d))$.
These $\la$ are representation theoretic multiplicity obstructions that show $x_0^d + \cdots + x_d^d \not\trianglelefteq P^{[d]}_{1,1}$:
\[
\mult_\la(\IC\Bigl[\overline{\GL_{d+1} P^{[d]}_{1,1}} \Bigr]) \leq 4 < 5 = \mult_\la(\IC\Bigl[\overline{\GL_{d+1} ( x_0^d + \cdots + x_d^d) }\Bigr]).
\]
\end{theorem}
This result extends the result of \cite{IK20} from the product polynomial to product-plus-power by exhibiting multiplicity obstructions that are based entirely on the symmetries of the two polynomials; see \Cref{thm:obstruction} for details. 

Indeed, from a representation-theoretic and a combinatorial point of view, the polynomial $P_{1,1}^{[d]}= \prod_{i \in [d]} x_i + x_0^d$ looks very similar to the well-studied product polynomial $P_{1,0}^{[d]} = \prod_{i \in [d]} x_i$, which was the object of several GCT papers \cite{Kum15,BI17,DIP20, IK20}. A system of set-theoretic equations for its orbit-closure was known for over a century, due to Brill and Gordon \cite{Gor94}, and their representation theoretic structure has been recently described by Guan \cite{Gua18}. We transfer as much as possible of the known theory for $P_{1,0}^{[d]}$ to the setting of $P^{[d]}_{1,1}$, in order to mimic the proof technique of \cite{IK20}. 

In order to prove \Cref{thm:obstruction}, we obtain several results about the more general polynomial $P^{[d]}_{r,s} := 
\sum_{i=1}^r \textprod_{j=1}^{d} x_{ji} + \sum_{i=1}^s y_{i}^d
$.
\begin{enumerate}
\item We determine the stabilizer of $P^{[d]}_{r,s}$ under the action of the group $\GL_{rd+s}$, see \Cref{thm:stab}.
\item We use the stabilizer to determine the representation theoretic structure of the coordinate ring of the orbit of $P^{[d]}_{1,1}$, which is achieved in \Cref{pro:branchingrules}.
\item We prove that $P^{[d]}_{r,s}$ is {\em polystable}, in the sense of invariant theory, see \Cref{pro:polystable}.
\item Polystability implies the existence of a fundamental invariant in the sense of \cite{BI17}. In \Cref{pro:alontarsi},
in the case $P^{[d]}_{1,1}$,
we show an interesting connection between the degree of this fundamental invariant and the famous Alon-Tarsi conjecture on Latin squares in combinatorics:
the fundamental invariant appears in degree $d+1$ if and only if the Alon-Tarsi conjecture holds for $d$.
\end{enumerate}

A \emph{Latin square} is an $n \times n$ matrix with entries $1,\ldots,n$ such that each row and each column is a permutation. The \emph{column sign} of a Latin square is the product of the signs of its column permutations. If $n$ is odd, then there are exactly as many sign +1 Latin squares as sign $-1$ Latin squares, and a sign-reversing involution is obtained by switching the first two rows.
The Alon-Tarsi conjecture states that for $n$ even, the number of sign +1 and sign $-1$ Latin squares are different. The main references on the Alon-Tarsi conjecture are \cite{AT:92, Dri97, Gly10},
where it is shown that the conjecture is true for $ n = p\pm 1$ for all odd primes $p$.

\subsection{Related work and context} 
Waring rank and border Waring rank are the objects of a long history of work in classical algebraic geometry and invariant theory, beginning in the nineteenth century \cite{Cay:TheoryLinTransformations,Sylv:PrinciplesCalculusForms,Cleb:TheorieFlachen}. It is related to the classical study of secant varieties \cite{Palatini:SuperficieAlg,Terr:seganti} and today it has strong connections to the study of the Gorenstein algebras \cite{Iarrobino-Kanev,BuczBucz:SecantVarsHighDegVeroneseReembeddingsCataMatAndGorSchemes} and the geometry of the Hilbert scheme of points \cite{BB-apolarity,JelMan:LimitsSaturatedIdeals}. 

In complexity theory, one is interested in the growth of a complexity parameter in a sequence of polynomials. In this context, we say that a p-family is a sequence of polynomials $(f_n)_{n \in \bbN}$ such that the degree and the number of variables of $f_n$ are polynomially bounded as functions of $n$. The complexity classes $\VWaring$ and $\bar{\VWaring}$ consist of all p-families $(f_n)_{n \in \bbN}$ such that, respectively, $\WR(f_n)$ or $\bwr(f_n)$, are a polynomially bounded function of $n$. Important complexity classes include $\VBP$ and $\VNP$, consisting of p-families with polynomially bounded determinantal complexity or permanental complexity, respectively.
The \emph{determinantal complexity} of a homogeneous polynomial $f$ of degree $d$ in $n$ variables $x_1 \vvirg x_n$ is the smallest $N$ such that $x_0^{N-d}f$ is a restriction of the determinant polynomial $\det_N = \sum_{\sigma \in \frakS_N} (-1)^\sigma \prod_{i=1}^N x_{i,\sigma(i)}$, which is a polynomial of degree $N$ in $N^2$ variables.
The \emph{permanental complexity} is defined similarly, in terms of the permanent polynomial $\per_N =   \sum_{\sigma \in \frakS_N} \prod_{i=1}^N x_{i,\sigma(i)}$. It is known that $\VBP \subseteq \VNP$ \cite{val79, Toda92}, and Valiant's determinant vs.\ permanent conjecture states that this inclusion is strict, that is $\VNP \not\subseteq \VBP$. 

The border complexity classes are $\bar{\VBP}$ and $\bar{\VNP}$ are defined similarly, replacing the notion of restriction with the one of degeneration, that is allowing arbitrary close approximations of polynomials in terms of determinants or permanents, rather than an exact expression. These border complexity classes can be defined as topological closures as well, see \cite{IS22}. A systematic study of border complexity classes was initiated in \cite{MS01,Bur04}: the Mulmuley-Sohoni conjecture is a strengthening of Valiant's conjecture, and it predicts that $\VNP \not \subseteq \bar{\VBP}$.

It is wide open whether $\VBP = \bar{\VBP}$, hence it is unclear to what extent the Mulmuley-Sohoni conjecture is stronger than Valiant's conjecture.
As a first step towards resolving this question,
in \cite{BIZ18} it is shown that border of width-2 algebraic branching programs define the same complexity class as algebraic formulas $\overline{\VBP_2} = \overline{\VF}$, over fields of characteristic $\ne 2$, and using \cite{AW16} this implies that $\VBP_2 \subsetneqq \overline{\VBP_2}$.
Resolving similar problems of inclusion between a border complexity class and a non-border complexity class is the goal of the debordering techniques. For instance, the already mentioned inequality $\abpw(f) \leq \bwr(f)$ proves the inclusion $\overline{\VWaring} \subseteq \VBP$ \cite{Forbes16,blaser2020complexity}. 
In~\cite{duttademystifying,dutta2022separated}, it is shown that $\overline{\Sigma^{[k]}\Pi\Sigma} \subsetneqq \VBP$, where $\Sigma^{[k]}\Pi\Sigma$ is the class of p-families of polynomials which can be expressed as $\sum_{i=1}^k\prod_{j}\ell_{ij}$, for linear forms $\ell_{ij}$. Very recently, \cite{duttaabp24} showed that border width-2 ABPs over characteristic $2$ can compute any algebraic formulas.

A major difficulty to achieve debordering results is that, in general, boundaries of orbits of algebraic groups may present strong geometric pathologies. In small cases, one can achieve \emph{boundary classification} results. This was done in the case of border Waring rank at most $5$ \cite{BucLan:ThirdSecantVar,BalBer:StratificationFourthSecantVeronese,ballico2017}, for the $3 \times 3$ determinant polynomial $\det_3$ \cite{huttenhain2016boundary}, and partially for the binomial $P_{2,0}^{[d]} = x_1 \cdots x_d + x_{d+1} \cdots x_{2d}$ \cite[Ch. II.9]{Hut17}. There are however universal results \cite{Kac:RmkNilpotentOrbits,vakil2006murphy,Jeli:Pathologies} hinting towards the difficulty of such a fine classification in general.

The GCT approach discussed in \Cref{sec:GCTintro} was introduced in \cite{MS01,MS08} as a path toward the Mulmuley-Sohoni conjecture, and proposed to use occurrence obstructions to prove lower bounds on the determinantal complexity of the \emph{padded} permanent $x_0^{N-n} \per_n$. The {\em no-go theorem}
of \cite{IP17,BIP19} proved that this is impossible by making use of the fact that \cite{MS01,MS08} use the padded formulation of the Mulmuley-Sohoni conjecture. There exists {\em no such result} when the determinant is replaced, for instance, by the iterated matrix multiplication polynomial, so that the determinantal complexity is replaced by the algebraic branching program width. The potential of multiplicity obstructions is explored in \cite{DIP20, IK20}: in particular, the GCT approach is used to prove that the power sum polynomial is not a product of homogeneous linear forms, although there are easier ways to prove this.

Friedman and McGuinness~\cite{FM19} give a survey about the Alon-Tarsi conjecture.
The GCT result in \cite{Kum15} is based on the conjecture.
The conjecture has been generalized in numerous directions. \cite{SW12} prove that Drisko's proof method cannot be used without modifications to prove the Alon-Tarsi conjecture.
 The same is true for results in \cite{burgisser2013explicit, BI17}, some of which are based on generalizations or variants of the conjecture.
The Polymath Project number 12 (\url{https://polymathprojects.org}) was devoted to the study of Rota's basis conjecture, which for even $n$ is implied by the Alon-Tarsi conjecture, see \cite{HR94}.
\cite{Alp17} proves an upper bound on the difference between the even and odd Latin squares.

\section{Kumar's complexity and border Waring rank}
\label{sec:Kumarscomplexity}

In this section, we prove \Cref{thm:intro:conversekumar}, connecting Waring and border Waring rank to $\Kc$-complexity and its variants. To obtain the result, we observe that $\underline{\Kc}$ expressions fall into three different cases, depending on whether the scalar $\alpha(\eps)$ in \eqref{eq:Kc} converges to~$0$, converges to a nonzero constant, or diverges. We study these three cases independently.
For the case where $\alpha(\eps)$ converges to zero, it is easy to see that the resulting polynomial is a product of affine linear polynomials, see \Cref{lem:degfdeltafKcf}.
For the case where $\alpha(\eps)$ converges to a nonzero value, we use the Newton Identities to obtain the desired lower bound given by the Waring rank, see \Cref{pro:newtonidentities}.
The case where $\alpha(\eps)$ diverges is the most interesting one as it is the one where cancellations occur in the limit; in this case the proof is obtained via a border version of Newton relations.

 Let $e_k(x_1,\hdots,x_n)$ denotes the $k$-th {\bf elementary symmetric} polynomial, defined by
\[
e_k(x_1,\hdots,x_n)\;:=\;\sum_{1 \le j_1 < j_2 < \cdots <j_{k} \le n} x_{j_1}\cdots x_{j_k}\;;
\]
Recall that by definition $e_0 = 1$. First, we record an immediate observation that will be useful throughout:
\begin{remark}\label{rmk: elementary hom poly in Kc}
It is easy to observe that
\[
\prod_{i = 1}^m (1+x_i) = \sum_{j =0}^m e_j(\bfx)
\]
where $\bfx = (x_1 \vvirg x_m)$. In particular, given a homogeneous polynomial $f \in \bbC[\bfx]_d$ of degree $d$, if $f = \alpha (\prod_{i = 1}^m (1+\ell_i) - 1)$ for homogeneous linear forms $\ell_1 \vvirg \ell_m$, then 
\begin{align*}
e_j(\ell_1 \vvirg \ell_m) &= 0 \quad \text{for all $j \neq d$}, \\
e_d(\ell_1 \vvirg \ell_m) &= \textfrac{1}{\alpha}f.
\end{align*}
\end{remark}

{\em Newton identities} are a central tool in this section; they relate the elementary symmetric polynomials and the {\em power sum} polynomial, defined as $p_k(\x):=x_1^k+\cdots+x_n^k$. 

\begin{proposition}[Newton Identities, see e.g. \cite{Macdon:SymmetricFunctions}, Section I.2] \label{prop:Newt-Id}
Let $n,k$ be integers with $n \ge k \ge 1$. Then
\[ 
k\cdot e_k(x_1,\hdots,x_n)\;=\;\sum_{i \in [k]}\,(-1)^{i-1}e_{k-i}(x_1,\hdots,x_n) \cdot p_i(x_1,\hdots,x_n)\;.
\]
\end{proposition}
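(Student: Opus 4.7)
The plan is to prove the identity via the standard generating function manipulation. I would work in the formal power series ring $\IC[x_1,\ldots,x_n][[t]]$ and introduce the generating series $E(t) := \prod_{j=1}^n (1 + x_j t)$. Expanding the product and collecting by degree in $t$ gives immediately $E(t) = \sum_{k=0}^n e_k(x_1,\ldots,x_n)\, t^k$, so the elementary symmetric polynomials appear as coefficients of $E(t)$.

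The key step is to compute the logarithmic derivative $E'(t)/E(t)$ in two different ways. Direct termwise differentiation of the coefficient expansion yields $E'(t) = \sum_{k \ge 1} k\, e_k\, t^{k-1}$. On the other hand, from $\log E(t) = \sum_{j=1}^n \log(1 + x_j t)$ and the series expansion $\log(1 + x_j t) = \sum_{i \ge 1} \frac{(-1)^{i-1}}{i}\, x_j^i\, t^i$, differentiating and summing over $j$ gives
\[
\frac{E'(t)}{E(t)} \;=\; \sum_{i \ge 1} (-1)^{i-1} p_i(x_1,\ldots,x_n)\, t^{i-1}.
\]
Clearing the denominator, I obtain the identity $E'(t) = E(t) \cdot \sum_{i \ge 1} (-1)^{i-1} p_i\, t^{i-1}$ in $\IC[x_1,\ldots,x_n][[t]]$.

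To conclude, I would extract the coefficient of $t^{k-1}$ on both sides. For $1 \le k \le n$ the left-hand side contributes $k \cdot e_k$, while the right-hand side yields the convolution $\sum_{i=1}^k (-1)^{i-1} e_{k-i}\, p_i$ (the upper limit is $k$ because $e_{k-i}=0$ for $i>k$ in the hypothesis range, and in fact the sum truncates naturally). Equating these coefficients gives the claimed Newton identity.

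There is essentially no technical obstacle here, since each manipulation is a formal power series computation that is unconditionally valid. The only point deserving mild care is justifying the termwise logarithmic differentiation in $\IC[x_1,\ldots,x_n][[t]]$; this is standard because $E(0) = 1$ makes $E(t)$ invertible as a formal power series. A purely combinatorial alternative, weighting each squarefree monomial $x_{j_1}\cdots x_{j_k}$ by counting in how many ways it arises on the right, is available but much less transparent than the generating-function argument.
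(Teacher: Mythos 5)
The paper does not prove this proposition itself; it simply cites Macdonald's \emph{Symmetric Functions and Hall Polynomials}, Section I.2, as the source of the classical Newton identities. Your generating-function argument is a complete and correct proof of the statement, and it is in fact the same proof that appears in that reference. The formal manipulations are all valid: $E(t)=\prod_j(1+x_jt)$ is invertible in $\IC[x_1,\ldots,x_n][[t]]$ since $E(0)=1$, the logarithmic derivative computation is termwise legitimate, and extracting the coefficient of $t^{k-1}$ gives exactly $k\,e_k=\sum_{i=1}^k(-1)^{i-1}e_{k-i}p_i$ once you observe that the convolution index $i$ is forced into $\{1,\ldots,k\}$ by the requirement $k-i\ge 0$ (and, since $k\le n$, the constraint $k-i\le n$ imposes nothing further). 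No gap.
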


In light of the \Cref{rmk: elementary hom poly in Kc}, the $\Kc$ model of computation is a sum of elementary symmetric polynomials. Shpilka \cite{Shp02} studied a similar notion of circuit complexity called $s_{sym}$. For a polynomial $f$, $s_{sym}(f)$ is defined as the smallest $m$ such that $f=e_d(\ell_1,\ell_2,\ldots,\ell_m)$ where $d=\deg(f)$ and $\ell_i$ are affine linear forms. It was proved in \cite{Shp02} that $s_{sym}(f)$ is always finite, moreover several upper and lower bounds for $s_{sym}(f)$ were proven. The complexity $\Kc$ differs from $s_{sym}(f)$, as $\Kc$ can even be infinite.
In fact, the only homogeneous polynomials with finite $\Kc$-complexity are powers of linear forms, as the following lemma shows.

\begin{lemma}\label{lem:kc-not-universal}
Let $f \in \bbC[\bfx]_d$ be a homogeneous polynomial such that $\Kc(f) < \infty$. Then $\Kc(f) = d$ and $f$ is a power of a linear form.
\end{lemma}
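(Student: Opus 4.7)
The plan is to start from the definition $f = \alpha(\prod_{i=1}^m(1+\ell_i)-1)$ for a minimal $m$, and first reduce to the case where every $\ell_i$ is a nonzero linear form (removing a zero $\ell_i$ replaces a factor $1+\ell_i = 1$ and strictly decreases $m$). By \Cref{rmk: elementary hom poly in Kc}, we then have $e_j(\ell_1,\ldots,\ell_m) = 0$ for every $1 \le j \le m$ with $j \neq d$, and $e_d(\ell_1,\ldots,\ell_m) = f/\alpha$. Since $e_m(\ell_1,\ldots,\ell_m) = \prod \ell_i$ is a nonzero polynomial (the $\ell_i$ are nonzero linear forms in a polynomial ring, which is a domain), the vanishing of all $e_j$ with $j \neq d$ forces $m=d$.

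Next I would upgrade the vanishing of the $e_j$ to a clean univariate identity. Introducing a formal parameter $t$, I would observe that for every $x \in \bbC^n$,
\[
\prod_{i=1}^{d}\bigl(1+t\,\ell_i(x)\bigr) \;=\; \sum_{j=0}^{d} e_j(\ell_1(x),\ldots,\ell_d(x))\, t^j \;=\; 1 + t^d\,\frac{f(x)}{\alpha},
\]
using the previous step. For generic $x$ with $f(x)\neq 0$ and all $\ell_i(x)\neq 0$, the left hand side has the $d$ roots $t = -1/\ell_i(x)$, while the right hand side has the $d$ roots $t^d = -\alpha/f(x)$. Comparing, every $\ell_i(x)^d$ is equal to the common value $(-1)^{d+1}f(x)/\alpha$, so $\ell_i^d = \ell_j^d$ as polynomials. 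Since $\ell_i^d - \ell_j^d = \prod_{k=0}^{d-1}(\ell_i - \omega^k \ell_j)$ with $\omega$ a primitive $d$-th root of unity and $\bbC[\bfx]$ is a domain, one factor must vanish, so each $\ell_i$ is a $d$-th root of unity times a fixed linear form $\ell$. Writing $\ell_i = \zeta_i \ell$ gives
\[
\frac{f}{\alpha} \;=\; e_d(\ell_1,\ldots,\ell_d) \;=\; \prod_{i=1}^{d} \zeta_i \cdot \ell^{d},
\]
so $f$ is a scalar multiple of $\ell^d$.

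Finally I would check the matching upper bound $\Kc(\ell^d) \le d$, which together with $m=d$ above yields $\Kc(f)=d$. Using $\prod_{k=0}^{d-1}(1+\omega^k \ell) = 1 + (-1)^{d-1}\ell^d$ (compare leading coefficients and root sets), one gets
\[
\ell^d \;=\; (-1)^{d-1}\Bigl(\textprod_{k=0}^{d-1}(1+\omega^k \ell) - 1\Bigr),
\]
and scaling $\ell$ gives $a\ell^d$ for any $a\in\bbC$. The only real step that requires care is the reduction to $m=d$ and the subsequent passage from polynomial identities among the $e_j$ to the pointwise identity in $t$; the possible obstacle is handling the points where some $\ell_i(x)$ or $f(x)$ vanish, but this is avoided by restricting to a Zariski dense open subset of $\bbC^n$, which is enough since the resulting identities $\ell_i^d=\ell_j^d$ are polynomial.
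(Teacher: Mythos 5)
Your proof is correct. The skeleton matches the paper's: both reduce via \Cref{rmk: elementary hom poly in Kc} to the vanishing of $e_j(\bfell)$ for $j \neq d$, both use minimality of $m$ to guarantee every $\ell_i$ is nonzero and hence $e_m(\bfell) = \prod_i \ell_i \neq 0$, forcing $m = d$, and both then deduce that all the $\ell_i$ are proportional. Where you diverge is in that last step: the paper runs an induction on $j$ using the recursion $e_j(\bfell) = e_j(\hat{\bfell}) + \ell_d\, e_{j-1}(\hat{\bfell})$ to show $e_j(\hat{\bfell}) = (-1)^j \ell_d^j$, and then invokes unique factorization once at the end; you instead package the vanishing of the intermediate $e_j$ into the univariate identity $\prod_{i=1}^d (1 + t\,\ell_i(x)) = 1 + t^d f(x)/\alpha$ at a generic point $x$ and compare root multisets, which immediately exhibits the $\ell_i(x)$ as a common value times $d$-th roots of unity. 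Your route is a bit more conceptual — it makes visible \emph{why} the $\ell_i$ must be $\zeta_i \ell$ with $\zeta_i^d$ constant — at the cost of a genericity argument (you correctly restrict to the dense open set where $f$ and all $\ell_i$ are nonvanishing, which implicitly assumes $f \neq 0$, the only case of interest); the paper's induction is purely formal and never leaves the polynomial ring. Your verification of $\Kc(\ell^d) \le d$ via $\prod_{k=0}^{d-1}(1+\omega^k\ell) = 1 + (-1)^{d-1}\ell^d$ is the same computation the paper uses with $\zeta^i\ell$. Both arguments are sound and of comparable length.
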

\begin{proof}
If $f$ is a homogeneous polynomial of degree $d$, then it is immediate that $\Kc(f) \geq \deg(f)$. Notice that for any linear form $\ell$, we have $\ell^d = \prod_{i=1}^d ( 1 + \zeta^i \ell) -1$ where $\zeta$ is a primitive $d$-th root of $1$. This shows $\Kc(\ell^d) \leq d$, hence equality holds.

Assume $f \in \bbC[x]_d$ is a homogeneous polynomial with $\Kc(f) = m < \infty$. By definition $f = \alpha \left(\prod_{i=1}^m (1+\ell_i) -1\right)$ for some homogeneous linear forms $\ell_i \in \bbC[\bfx]$. Write $\bfell = (\ell_1 \vvirg \ell_m)$. By \Cref{rmk: elementary hom poly in Kc}, we have, $e_d(\bfell) = \frac{1}{\alpha}f$ and $e_j (\bfell) = 0$ for $j \neq d$.

First, observe $m = d$. Indeed, if $m > d$, we have $0 = e_m(\bfell) = \ell_1 \cdots \ell_m$, which implies $\ell_i=0$ for some $i$, in contradiction with the minimality of $m$. Since $\Kc(f) \geq \deg(f)$, we deduce $m=d$.

Now we show that if $\bfell = (\ell_1 \vvirg \ell_d)$ satisfies $e_1(\bfell) = \cdots = e_{d-1}(\bfell) = 0$ then $e_d(\bfell) = (-1)^{d-1}\cdot \ell_d^d$; in particular, by unique factorization, all $\ell_i$'s are equal up to scaling. Write $\hat{\bfell} = (\ell_1 \vvirg \ell_{d-1})$. We use induction on $j$ to prove that $e_j(\hat{\bfell}) = (-1)^j\cdot \ell_d^j$ for $j = 1 \vvirg d-1$. For $j = 1$, we have 
\[
0 = e_1(\bfell) = (\ell_1 + \cdots + \ell_{d-1}) + \ell_d = e_1(\hat{\bfell}) + \ell_d
\]
which proves the statement. For $j = 2 \vvirg d-1$, consider the recursive relation
\[
e_j(\bfell) = e_j(\hat{\bfell}) + \ell_d e_{j-1}(\hat{\bfell}).
\]
By assumption we have $e_j(\bfell) = 0$ and the induction hypothesis guarantees $e_{j-1}(\hat{\bfell}) = (-1)^{j-1}\cdot \ell_d^{j-1}$; we deduce  $ e_j(\hat{\bfell}) = - \ell_d \cdot (-1)^{j-1}\cdot \ell_d^{j-1} = (-1)^j \ell_d^j$ which proves the statement. Finally, notice $f = \alpha e_d(\bfell) = \alpha \ell_d \cdot (-1)^{d-1} \cdot e_{d-1}(\hat{\bfell}) = -\alpha \ell_d^d$, which concludes the proof.
\end{proof}
However, the model is complete if one allows approximations, as shown in \cite{kum20}. We introduce an equivalence relation on $\bbC[\eps^{\pm 1}][\bfx]$: given two polynomials $f_1,f_2$ whose coefficients depend rationally on $\eps$, we write $f_1 \simeq f_2$ if $\lim_\eps f_1 ,\lim_\eps f_2$ are both finite and they coincide. We often use this notation with either $f_1$ or $f_2$ not depending on $\eps$: if, for instance, $f_1$ does not depend on $\eps$, then $f_1 \simeq f_2$ means that $f_2 = f_1 
+ O(\eps)$.
\begin{proposition}[\cite{kum20}]
\label{pro:introkumar}
For all homogeneous $f$ we have
$\underline{\Kc}(f) \leq \deg(f) \cdot \WR(f)$.
\end{proposition}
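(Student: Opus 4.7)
The plan is to write down a concrete $\underline{\Kc}$ expression that realises $f$ as a limit, with exactly $\deg(f)\cdot \WR(f)$ many factors, by combining Waring rank summands multiplicatively rather than additively.

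First I would fix a Waring decomposition $f = \sum_{i=1}^r \ell_i^d$ with $r = \WR(f)$ and $d=\deg(f)$, and use the single-summand identity already implicit in \Cref{lem:kc-not-universal}: for a primitive $d$-th root of unity $\zeta$ one has
\[
\prod_{j=1}^d (1 + \eps \zeta^j \ell_i) \;=\; \sum_{k=0}^d \eps^k e_k(\zeta\ell_i,\ldots,\zeta^d\ell_i)
\;=\; 1 + (-1)^{d+1}\eps^d\,\ell_i^d,
\]
because $e_k(\zeta,\zeta^2,\ldots,\zeta^d)$ is the $k$-th coefficient of $\prod_{j=1}^d(x-\zeta^j)=x^d-1$, which vanishes for $1\le k\le d-1$ and equals $(-1)^{d+1}$ for $k=d$.

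Next, I would take the product of these $r$ factored expressions over $i=1,\ldots,r$ to obtain, in the variable $\eps$,
\[
\prod_{i=1}^r \prod_{j=1}^d \bigl(1 + \eps \zeta^j \ell_i\bigr) \;=\; \prod_{i=1}^r \bigl(1 + (-1)^{d+1}\eps^d \ell_i^d\bigr) \;=\; 1 + (-1)^{d+1}\eps^d\,f \;+\; \eps^{2d} R(\eps),
\]
for some polynomial $R(\eps)\in \IC[\eps][\x]$. Subtracting $1$ and multiplying by $\alpha_\eps := (-1)^{d+1}\eps^{-d}$ therefore yields
\[
\alpha_\eps\Bigl(\prod_{i=1}^r\prod_{j=1}^d (1 + \eps\zeta^j \ell_i) - 1\Bigr) \;=\; f \;+\; (-1)^{d+1}\eps^{d} R(\eps),
\]
which converges to $f$ as $\eps\to 0$. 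The inner product has exactly $rd$ factors of the form $1+\text{(linear form in }\eps[\x])$, witnessing $\underline{\Kc}(f)\le rd = \deg(f)\cdot \WR(f)$.

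There is no real obstacle: the two ingredients are the elementary identity $\prod_j(x-\zeta^j)=x^d-1$, which makes each block collapse exactly to $1+\text{const}\cdot\eps^d\ell_i^d$, and the observation that after the inner products collapse, the overall product is of the form $1+\eps^d (\text{stuff})$, so all higher-order cross terms are killed by the prefactor $\eps^{-d}$ in the limit. One has to be mildly careful to verify that the $\ell_{i,j,\eps}=\eps\zeta^j\ell_i$ are honest homogeneous linear forms in $\IC(\eps)[\x]_1$ (they are) and that $\alpha_\eps=(-1)^{d+1}\eps^{-d}\in\IC(\eps)$, so the expression fits the definition of $\underline{\Kc}$ given in the introduction.
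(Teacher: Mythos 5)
Your proposal is correct and follows essentially the same route as the paper's proof (which is the Shpilka/Kumar construction): both take the product $\prod_{i,j}(1+\eps\zeta^j\ell_i)$ over $rd$ scaled linear forms, use the identity $\prod_j(1-\zeta^j u)=1-u^d$ to make each block of $d$ factors collapse to $1+(\text{const})\,\eps^d\ell_i^d$, and then strip the leading $1$ and rescale by $\eps^{-d}$ to recover $f$ in the limit. Your block-by-block telescoping is a cleaner bookkeeping of the step the paper summarizes as ``one verifies that $e_i(\ldots)=0$ for $0<i<d$ and $e_d(\ldots)=\pm f$,'' but there is no substantive difference in the argument.
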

\begin{proof}
The proof is based on a construction by Shpilka \cite{Shp02}. Let $\WR(f) = r$ and write $f = \sum_{i=1}^r \ell_i^d$.
Let $\zeta$ be a primitive $d$-th root of unity.
Then one verifies that
\[
f = -e_{d}(-\zeta^0\ell_1,-\zeta^1\ell_1,\ldots,-\zeta^{d-1}\ell_1,\ldots \ldots, -\zeta^0\ell_r,-\zeta^1\ell_r,\ldots,-\zeta^{d-1}\ell_r)
\]
and for all $0<i<d$ we have
\[
e_{i}(-\zeta^0\ell_1,-\zeta^1\ell_1,\ldots,-\zeta^{d-1}\ell_1,\ldots \ldots, -\zeta^0\ell_r,-\zeta^1\ell_1,\ldots,-\zeta^{d-1}\ell_r) = 0.
\]
Hence $f \simeq -\eps^{-d}\big(\big((1-\eps\zeta^0\ell_1) \cdots (1-\eps\zeta^{d-1}\ell_r)\big)-1\big)$.
Therefore
$\underline{\Kc}(f) \leq rd$.
\end{proof}
In fact, the following slightly stronger statement is true:
\begin{proposition}\label{cor:KcdegWR}
For all homogeneous $f$ we have
$\underline{\Kc}(f) \leq \deg(f) \cdot \underline{\WR}(f)$.
\end{proposition}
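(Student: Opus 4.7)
The strategy is to upgrade Proposition~\ref{pro:introkumar} to the border setting by applying Kumar's construction fiberwise over the border Waring rank decomposition and then coupling the two approximation parameters into one. Let $r = \bwr(f)$ and write $f = \lim_{\delta \to 0} f_\delta$ with $f_\delta = \sum_{i=1}^r \ell_i(\delta)^d$ and $\ell_i(\delta) \in \IC(\delta)[\x]_1$. After truncating Laurent tails as in the paper's discussion of $\underline{\Kc}$, fix an integer $p \geq 0$ such that $\delta^p \ell_i(\delta) \in \IC[[\delta]][\x]_1$ for every $i$.

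\textbf{Fiberwise Kumar construction.} For a fresh parameter $\eps$, the same computation as in the proof of Proposition~\ref{pro:introkumar} produces the $\Kc$-form expression
\[
F(\eps,\delta) \,:=\, -\eps^{-d}\Bigl(\textprod_{i=1}^r \textprod_{j=0}^{d-1}\bigl(1 - \eps\zeta^{j}\ell_i(\delta)\bigr) - 1\Bigr) \,\in\, \IC(\delta)(\eps)[\x],
\]
with $dr$ linear factors and scalar $\alpha = -\eps^{-d}$. Using the identity $\textprod_{j=0}^{d-1}(1 - \eps\zeta^j\ell) = 1 - \eps^d\ell^d$ and then expanding the remaining product of $r$ factors, one obtains
\[
F(\eps,\delta) \,=\, \textstyle\sum_{k=1}^r (-1)^{k+1}\,\eps^{(k-1)d}\, e_k\bigl(\ell_1(\delta)^d,\ldots,\ell_r(\delta)^d\bigr).
\]
The $k=1$ term is exactly $f_\delta$, while the terms with $k \geq 2$ are junk terms that vanish as $\eps \to 0$ with $\delta$ fixed. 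So far this is just Proposition~\ref{pro:introkumar} applied over the field $\IC(\delta)$.

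\textbf{Coupling the parameters.} To turn this into a single-parameter approximation we substitute $\delta := \tau$ and $\eps := \tau^N$ for an integer $N$ to be chosen, and set $G(\tau) := F(\tau^N,\tau) \in \IC(\tau)[\x]$. The substitution preserves the $\Kc$-form: $G(\tau)$ still has $dr$ factors, with $\alpha = -\tau^{-Nd}$ and linear forms $-\tau^{N}\zeta^j\ell_i(\tau)$. It remains to choose $N$ so that $\lim_{\tau \to 0} G(\tau) = f$. Since $\ell_i(\tau)^d$ has $\tau$-valuation $\geq -pd$, the product $\ell_{i_1}(\tau)^d\cdots\ell_{i_k}(\tau)^d$ has valuation $\geq -kpd$, and hence so does $e_k(\ell_1(\tau)^d,\ldots,\ell_r(\tau)^d)$. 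The $k$-th summand of $G(\tau)$ therefore has $\tau$-valuation at least
\[
(k-1)Nd - kpd \,=\, d\bigl[(k-1)N - kp\bigr].
\]
Taking $N = 2p+1$, for $k \geq 2$ this equals $d\bigl[(p+1)k - (2p+1)\bigr] \geq d > 0$, so every higher term vanishes at $\tau = 0$. The $k=1$ contribution is $f_\tau$, which tends to $f$ by assumption. Hence $G(\tau) \simeq f$ is a $\underline{\Kc}$-approximation of $f$ with $dr$ factors, giving $\underline{\Kc}(f) \leq d\cdot r = \deg(f)\cdot\bwr(f)$.

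\textbf{Main obstacle.} The delicate point is that the individual summands $\ell_i(\delta)^d$ of a border Waring rank decomposition may carry poles of order up to $pd$ at $\delta = 0$, and cancellation of these poles is only guaranteed in the full sum $f_\delta$; the higher elementary symmetric polynomials $e_k(\ell_1^d,\ldots,\ell_r^d)$ appearing as junk terms in the Kumar expansion can retain poles of order $kpd$. The naive identification $\eps = \delta$ therefore fails, and the point of the rescaling $\eps = \tau^{2p+1}$ is precisely to make the $\eps$-prefactors overpower the worst-case $\delta$-poles of every junk term simultaneously. Once this pole-vs-prefactor comparison is set up, the verification is purely an exponent inequality.
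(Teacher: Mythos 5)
Your proof is correct and follows essentially the same route as the paper: both take the Shpilka--Kumar construction applied to a border Waring rank decomposition and choose the Kumar scaling parameter to be a sufficiently high power of the border approximation parameter, so that the prefactors $\eps^{M(i-d)}$ (your $\tau^{(k-1)Nd}$) overwhelm the bounded-order poles of the junk elementary symmetric terms. Your presentation is slightly more explicit — you keep two formal parameters before identifying them via $\delta=\tau$, $\eps=\tau^{N}$, and you factor $\prod_j(1-\eps\zeta^j\ell)=1-\eps^d\ell^d$ to work with $e_k(\ell_1^d,\ldots,\ell_r^d)$ instead of the $dr$-variable $e_i$ — but the content is the same as the paper's one-parameter argument with its unspecified ``$M$ large enough.''
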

\begin{proof}
Analogously to the proof in \ref{pro:introkumar}, let $\bwr(f) = r$ and let $\ell_1 \vvirg \ell_r$ be linear forms depending rationally on $\eps$ such that 
$f \simeq \sum_{i=1}^r \ell_i^d = -e_{d}(-\zeta^0\ell_1,\ldots, -\zeta^{d-1}\ell_r)$.
Moreover, for all $0<i<d$, we have
$e_{i}(-\zeta^0\ell_1,\ldots, -\zeta^{d-1}\ell_r) = 0$. 

Choose $M$ large enough so that for all $d < i \leq dr$ we have that $\eps^{-Md} e_i(-\eps^M\zeta^0\ell_1,\ldots, -\eps^M\zeta^{d-1}\ell_r) \simeq 0$.
We obtain
$f \simeq -\eps^{-Md}\big(\big((1-\eps^M\zeta^0\ell_1) \cdots (1-\eps^M\zeta^{d-1}\ell_r)\big)-1\big)$. Therefore
$\underline{\Kc}(f) \leq rd$.
\end{proof}
\Cref{pro:introkumar} and \Cref{cor:KcdegWR} show that if $\bwr(f)$ is small then $\bKc(f)$ is small. However, there are polynomials with large Waring (border) rank but small Kumar complexity, such as products of linear forms. For instance $\bwr(x_1 \cdots x_n)$ is exponentially large: a lower bound of $\binom{n}{\lfloor n/2 \rfloor}$ can be easily shown by partial derivative methods, see e.g.~\cite[Sec. 11]{landsberg2010}, \cite[Thm. 10.4]{ckw11}. However every completely reducible form has small Kumar complexity:
\begin{lemma}\label{lem:Kcd}
If $f = \ell_1\cdots\ell_d$ is a product of homogeneous linear forms $\ell_i$, then $\underline{\Kc}(f) = d$.
\end{lemma}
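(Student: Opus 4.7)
The plan is to establish the two inequalities $\underline{\Kc}(f) \geq d$ and $\underline{\Kc}(f) \leq d$ separately, with no real obstacle in either direction.

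For the lower bound I would rely on a general degree count: any expression of the form $\alpha(\eps)\bigl(\prod_{i=1}^m(1+\ell_i(\eps))-1\bigr)$ is a polynomial in $\bfx$ of degree at most $m$ (viewed with coefficients in $\IC(\eps)$), because each factor $1+\ell_i$ has degree at most $1$ in $\bfx$. If this expression is $\simeq f$, then its reduction mod $\langle\eps\rangle$ has degree at most $m$ in $\bfx$, so $m \geq \deg f = d$. Hence $\underline{\Kc}(f)\geq d$ for every homogeneous $f$ of degree $d$, in particular for the product of linear forms under consideration.

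For the upper bound I would exhibit an explicit approximation with $m=d$. Set $\alpha(\eps):=\eps^{d}$ and $\ell_i'(\eps):=\eps^{-1}\ell_i$ for $i=1,\dots,d$. Using \Cref{rmk: elementary hom poly in Kc} we compute
\[
f_\eps \;:=\; \eps^{d}\left(\prod_{i=1}^{d}\bigl(1+\eps^{-1}\ell_i\bigr)-1\right) \;=\; \sum_{j=1}^{d} \eps^{\,d-j}\, e_j(\ell_1,\dots,\ell_d).
\]
Since $d-j\geq 0$ for every $j\leq d$, $f_\eps$ lies in $\IC[\eps][\bfx]\subseteq \IC[[\eps]][\bfx]$, so the limit exists. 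Its reduction mod $\langle\eps\rangle$ is the $j=d$ term, which equals $e_d(\ell_1,\dots,\ell_d)=\ell_1\cdots\ell_d=f$; all other terms carry a positive power of $\eps$ and vanish in the limit. Therefore $f_\eps\simeq f$, giving $\underline{\Kc}(f)\leq d$.

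The key idea is the simple scaling trick $\ell_i\mapsto \eps^{-1}\ell_i$ compensated by $\alpha=\eps^{d}$: it makes $\alpha\cdot e_j$ carry the factor $\eps^{d-j}$, so only the top elementary symmetric polynomial survives in the limit, and this top term is exactly the product $\ell_1\cdots\ell_d$. Combining the two inequalities yields $\underline{\Kc}(f)=d$.
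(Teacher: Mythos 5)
Your proof is correct and matches the paper's own argument: the lower bound is the trivial degree bound $\underline{\Kc}(f)\ge\deg(f)$ (which you spell out slightly more explicitly), and the upper bound uses exactly the same approximation $f\simeq\eps^{d}\bigl(\prod_{i=1}^{d}(1+\eps^{-1}\ell_i)-1\bigr)$.
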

\begin{proof}
The lower bound is immediate because $\bKc(f) \geq \deg(f)$. For the upper bound, notice
$f \simeq \eps^d\big(\big(\prod_{i=1}^d (1+\eps^{-1}\ell_i)\big)-1\big)$.
\end{proof}

The main result of this section is a converse of the above statements: informally, homogeneous polynomials with small border Waring rank and product of linear forms are the only homogeneous polynomials with small border Kumar complexity.
The following result explains the relation between border Waring rank and border Kumar's complexity and completes the proof of \Cref{thm:intro:conversekumar}.
\begin{theorem}\label{thm:border-via-kumar}
If $f$ is a product of homogeneous linear forms, then $\underline{\Kc}(f)=\deg(f)$.
For all other homogeneous $f$ we have
\[
\max\{\deg(f), \ \underline{\WR}(f)\} \ \ \leq \ \ \underline{\Kc}(f) \ \ \leq \ \ \deg(f)\cdot \underline{\WR}(f). 
\]
\end{theorem}
\begin{proof}
The first statement is \ref{lem:Kcd}.
The right inequality follows from \ref{cor:KcdegWR}.
Clearly $\deg(f)\leq \underline{\Kc}(f)$.
The inequality $\underline{\WR}(f)\leq\underline{\Kc}(f)$ is a combination of \ref{lem:degfdeltafKcf},
\ref{pro:newtonidentities},
and \ref{thm:asymptoticnewtonidentities}
below.
\end{proof}

Note that in the definition of $\bKc$, the factor $\alpha$ can be assumed to be a scalar times a power of $\epsilon$, because only the lowest power of $\eps$ in $\alpha$ would contribute to the limit. We distinguish three cases, depending on the sign of the exponent of $\eps$ in $\alpha$.
\begin{itemize}
    \item $\underline{\Kc}^+(f)$ is the smallest $m$ such that $f \simeq \gamma \eps^N \big(\prod_{i=1}^m(1+\ell_i)-1\big)$ for some $N \geq 1$, $\gamma \in \bbC$ and $\ell_i \in \bbC[\eps^{\pm 1}][\bfx]_1$; set $\underline{\Kc}^+(f) = \infty$ if such an $m$ does not exist;
    \item $\underline{\Kc}^-(f)$ is the smallest $m$ such that $f \simeq \gamma \eps^{-M} \big(\prod_{i=1}^m(1+\ell_i)-1\big)$ for some $M \geq 1$, $\gamma \in \bbC$ and $\ell_i \in \bbC[\eps^{\pm 1}][\bfx]_1$; set $\underline{\Kc}^-(f) = \infty$ if such an $m$ does not exist;
    \item $\underline{\Kc}^=(f)$ is the smallest $m$ such that $f \simeq \gamma \big(\prod_{i=1}^m(1+\ell_i)-1\big)$ for some $\gamma \in \bbC$ and $\ell_i \in \bbC[\eps^{\pm 1}][\bfx]_1$; set $\underline{\Kc}^=(f) = \infty$ if such an $m$ does not exist.
\end{itemize}
We observe that
$\underline{\Kc}(f)=\min\big\{\underline{\Kc}^+(f),\ \underline{\Kc}^=(f),\ \underline{\Kc}^-(f)\big\}$.

\begin{lemma}\label{lem:degfdeltafKcf}
For all homogeneous $f$, if $\underline{\Kc}^+(f)$ is finite, then $f$ is a product of homogeneous linear forms.
\end{lemma}
\begin{proof}
Let $f \simeq \gamma\eps^{N}\big(\prod_{i=1}^m(1+\ell_i)-1\big)$ with $N \geq 1$.
Since $\eps^N\simeq 0$, we have $f  \simeq\gamma \eps^{N}\prod_{i=1}^m(1+\ell_i)$, namely $f$ is limit of a product of affine linear polynomials. The property of being completely reducible is closed, therefore we deduce that $f$ is a product of affine linear polynomials. Since $f$ is homogeneous, its factors are homogeneous as well.
\end{proof}

\begin{proposition}[Newton Identities]\label{pro:newtonidentities}
For all homogeneous $f$ we have
$\WR(f) \leq \Kc(f) = \underline{\Kc}^=(f)$.
\end{proposition}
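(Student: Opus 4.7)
Set $m = \underline{\Kc}^=(f)$ and fix an expression $f \simeq \gamma\bigl(\prod_{i=1}^m(1+\ell_i(\eps)) - 1\bigr)$ with $\gamma \in \bbC$ a constant and $\ell_i(\eps) \in \bbC[\eps^{\pm 1}][\bfx]_1$. The case $\gamma = 0$ gives $f = 0$ trivially, so assume $\gamma \neq 0$. Expanding via \Cref{rmk: elementary hom poly in Kc} yields $f \simeq \gamma\sum_{j=1}^m e_j(\bfell(\eps))$, and since each $e_j(\bfell(\eps))$ is homogeneous of degree $j$ in $\bfx$, matching $\bfx$-homogeneous components on both sides forces $e_d(\bfell(\eps)) \to f/\gamma$ and $e_j(\bfell(\eps)) \to 0$ for every $j \in \{1,\ldots,m\}\setminus\{d\}$. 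In particular, every $e_j(\bfell(\eps))$ has a finite limit as $\eps \to 0$. The strategy is to show that this alone already forces each individual $\ell_i(\eps)$ to be regular at $\eps = 0$, after which \Cref{prop:Newt-Id} can be applied exactly.

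Assume for contradiction that some $\ell_i$ has an $\eps$-pole; let $N \geq 1$ be the maximum pole order across the $\ell_i$'s, and set $I_N := \{i : \ell_i(\eps) = \eps^{-N}L_i + O(\eps^{-N+1}) \text{ with } L_i \in \bbC[\bfx]_1 \setminus \{0\}\}$. For any $I \subseteq \{1,\ldots,m\}$ with $|I|=j$, the most negative $\eps$-exponent of $\prod_{i\in I}\ell_i(\eps)$ equals $-Nj$ precisely when $I \subseteq I_N$ and is strictly larger otherwise. Therefore the coefficient of $\eps^{-Nj}$ in $e_j(\bfell(\eps))$ equals $e_j\bigl((L_i)_{i\in I_N}\bigr)$. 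Finiteness of the limit of $e_j(\bfell(\eps))$ forces this coefficient to vanish for every $j = 1,\ldots,|I_N|$ (since $Nj \geq 1$). But then, at each $\bfx_0 \in \bbC^n$, the polynomial $\prod_{i\in I_N}(T-L_i(\bfx_0)) = T^{|I_N|}$ has only $0$ as a root, so $L_i(\bfx_0) = 0$ for every $i \in I_N$ and every $\bfx_0$; hence $L_i = 0$ as a linear form, contradicting the definition of $I_N$.

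Thus $L_i := \ell_i\big|_{\eps=0} \in \bbC[\bfx]_1$ is a well-defined constant linear form for every $i$. By continuity $e_j(L_1,\ldots,L_m) = \lim_{\eps\to 0} e_j(\bfell(\eps))$, giving $e_j(L_1,\ldots,L_m) = 0$ for $1 \leq j \leq d-1$ and $e_d(L_1,\ldots,L_m) = f/\gamma$. Applying \Cref{prop:Newt-Id} to this constant tuple, an induction on $k$ via $k\,e_k = \sum_{i=1}^k (-1)^{i-1} e_{k-i}\,p_i$ gives $p_k(L_1,\ldots,L_m) = 0$ for $1\leq k \leq d-1$, and the identity at $k=d$ collapses to $p_d(L_1,\ldots,L_m) = (-1)^{d-1}d\,e_d(L_1,\ldots,L_m) = \tfrac{(-1)^{d-1}d}{\gamma}f$. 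Writing $p_d(L_1,\ldots,L_m) = \sum_{i=1}^m L_i^d$ and absorbing the scalar $\tfrac{(-1)^{d-1}\gamma}{d}$ into the $L_i$'s via a common $d$-th root produces constant linear forms $\tilde L_1,\ldots,\tilde L_m$ with $f = \sum_{i=1}^m \tilde L_i^d$, so $\WR(f) \leq m$. The main obstacle is the pole-killing step in the middle paragraph: the convergence of the symmetric functions $e_j$ alone must be leveraged to force each $\ell_i(\eps)$ to extend through $\eps=0$.
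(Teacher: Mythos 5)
Your proof is correct and follows the same high-level route as the paper: first show that every $\ell_i(\eps)$ extends through $\eps=0$, then apply the Newton identity at degree $d$ to a constant tuple. Where you differ is in how you establish regularity. You isolate the indices of maximal pole order $N$, show that for $j\leq |I_N|$ the coefficient of $\eps^{-Nj}$ in $e_j(\bfell(\eps))$ is exactly $e_j\bigl((L_i)_{i\in I_N}\bigr)$ and must vanish, and then use the vanishing of all elementary symmetric polynomials in $(L_i)_{i\in I_N}$ to force every leading form $L_i$ to be zero. The paper instead takes $j$ to be the \emph{total} number of divergent $\ell_i$'s and observes that in $e_j(\bfell(\eps))$ the single term given by the product of all divergent forms dominates: its leading Laurent coefficient is a product of nonzero linear forms and hence nonzero, while every other size-$j$ subset (which necessarily omits a divergent form and includes a convergent one) contributes a strictly smaller pole, so $e_j(\bfell(\eps))$ itself already diverges. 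This is a slightly more economical one-step contradiction (a single degree $j$ suffices), whereas your version makes explicit that the full chain $e_1=\cdots=e_{|I_N|}=0$ forces the characteristic polynomial to be $T^{|I_N|}$. Both are valid. Two small streamlinings of your final paragraph: the vanishing $p_k=0$ for $k<d$ does not actually require induction, since in the Newton identity $k\,e_k=\sum_{i=1}^k(-1)^{i-1}e_{k-i}p_i$ every term with $i<k$ already has $e_{k-i}=0$ (as $1\leq k-i\leq k-1\leq d-1$), so it directly reads $0=(-1)^{k-1}p_k$; and you do not need $p_k=0$ for $k<d$ at all to extract $p_d=(-1)^{d-1}d\,e_d$ from the identity at $k=d$, because there the terms with $i<d$ vanish thanks to $e_{d-i}=0$, not because of $p_i$.
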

\begin{proof}
Let $d:=\deg(f)$. Suppose $\underline{\Kc}^=(f) = m$ and write 
$f \simeq f_\eps := \gamma \big(\prod_{i=1}^m(1+\ell_i)-1\big)$. One can verify that if even one of the $\ell_i$ diverges, then the $j$-th homogeneous part of $f_\eps$ diverges, where $j$ is the number of diverging $\ell_i$.
Hence all $\ell_i$ converge and we can set $\eps$ to zero.
Hence, $\underline{\Kc}^=(f)=\Kc(f)$.
Now, since $f$ is homogeneous, each homogeneous degree $i$ part of $f_\eps$ vanishes, $i <d$.
In other words, $e_i(\ELL)=0$ for all $1 \leq i<d$, where $\ELL=(\ell_1,\ldots,\ell_m)$. Hence $s(\ELL)=0$ for \emph{all} symmetric polynomials of degree $< d$.
Therefore the Newton identity $p_d=(-1)^{d-1}\cdot d \cdot e_d + \sum_{i=1}^{d-1}(-1)^{d+i-1} e_{d-i}\cdot p_i$ gives that
$e_d(\ELL)$ and $p_d(\ELL)$ are same up to multiplication by a scalar.
Hence
$\WR(f)\leq m$.
\end{proof}

\begin{theorem}[Border Newton Identities]\label{thm:asymptoticnewtonidentities}
For all homogeneous $f$:
$\bwr(f) \leq \underline{\Kc}^-(f)$. \end{theorem}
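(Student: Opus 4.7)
The plan is to extract a border Waring rank decomposition of size $m$ directly from the $\underline{\Kc}^-$-expression $f \simeq \gamma\eps^{-M}\bigl(\prod_{i=1}^m(1+\ell_i)-1\bigr)$ by reversing the direction of \Cref{pro:newtonidentities}: rather than building a product expression from a power-sum decomposition via Shpilka's trick, I will read off a power-sum decomposition from the product expression, using that $p_d(\ELL) = \sum_i \ell_i^d$ is already visibly a sum of $m$ $d$-th powers. The obstacle compared to the $\underline{\Kc}^=$ case is that the individual $\ell_i$ may have arbitrarily large negative $\eps$-powers, so one cannot substitute $\eps = 0$ and apply Newton's identities directly; I need a border version that still controls the power sums.

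Setting $\ELL = (\ell_1,\ldots,\ell_m)$ and expanding $\prod_{i=1}^m(1+\ell_i)-1 = \sum_{j\geq 1} e_j(\ELL)$, the first observation is that $e_j(\ELL)$ is homogeneous of degree $j$ in $\bfx$ while $f$ is homogeneous of degree $d$, so the limit forces the $\bfx$-homogeneous components to match degree by degree. This yields $e_j(\ELL) \equiv 0 \pmod{\eps^{M+1}}$ for every $j \neq d$ and $e_d(\ELL) \equiv \gamma^{-1}\eps^M f \pmod{\eps^{M+1}}$.

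The heart of the argument, which deserves the name \emph{border Newton identity}, is the inductive claim that $p_i(\ELL) \equiv 0 \pmod{\eps^{M+1}}$ for all $1 \leq i \leq d-1$, regardless of how badly the individual $\ell_k$ diverge. I would prove this by induction on $i$ using the usual Newton recursion $p_i = (-1)^{i-1} i\, e_i + \sum_{k=1}^{i-1}(-1)^{i-k}\, e_{i-k}\, p_k$: every summand on the right contains some $e_j(\ELL)$ with $j < d$, hence vanishes modulo $\eps^{M+1}$ by the previous paragraph, and the cross terms $e_{i-k}(\ELL)\,p_k(\ELL)$ in fact vanish modulo $\eps^{2(M+1)}$ by the induction hypothesis. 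This is the only step in which the cancellations forced by the homogeneity of $f$ must be threaded through the Newton recursion to neutralize the unbounded negative $\eps$-powers potentially carried by the $\ell_k$, and it is the main obstacle of the proof.

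Having secured the induction, one more application of the Newton identity at $i = d$ shows that every cross term $e_{d-k}(\ELL)\,p_k(\ELL)$ with $1 \leq k \leq d-1$ is divisible by $\eps^{M+1}$, so $p_d(\ELL) \equiv (-1)^{d-1} d\, e_d(\ELL) \equiv (-1)^{d-1}(d/\gamma)\,\eps^M f \pmod{\eps^{M+1}}$. Rearranging gives $\tfrac{(-1)^{d-1}\gamma}{d}\,\eps^{-M}\sum_{i=1}^m \ell_i^d \to f$; choosing $\zeta \in \bbC$ with $\zeta^d = (-1)^{d-1}\gamma/d$ and setting $\ell'_i := \zeta\,\eps^{-M/d}\ell_i$ then yields $f = \lim_{\eps \to 0}\sum_{i=1}^m (\ell'_i)^d$, so $\bwr(f) \leq m$. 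The fractional power $\eps^{-M/d}$ is a cosmetic issue resolved by reparameterizing $\eps \mapsto \eps^d$ throughout so that all expressions stay in $\bbC(\eps)$.
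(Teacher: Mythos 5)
Your proof is correct and follows essentially the same route as the paper's: an induction through Newton's identities showing that $p_j(\ELL)$ vanishes to order $\eps^{M+1}$ for $j<d$, followed by one more application at $j=d$ to identify $p_d(\ELL)$ with $(-1)^{d-1}d\,e_d(\ELL)$ up to higher-order terms and read off the border Waring decomposition. The only difference is bookkeeping: the paper first normalizes each linear form to be divisible by $\eps$ (writing $\ell'_i=\eps\ell_i$) and tracks explicit prefactors $\eps^{-M+j}$, whereas you keep the possibly divergent forms and phrase everything as congruences modulo $\eps^{M+1}$ in $\bbC[[\eps]][\x]$ --- the two are equivalent, and your use of the induction hypothesis on the cross terms correctly neutralizes the potential negative $\eps$-powers carried by the individual $\ell_k$.
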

\begin{proof}
Let $d:=\deg(f)$.
Let $f \simeq f_\eps := \gamma\eps^{-M}\big(\prod_{i=1}^m(1+\ell'_i)-1\big)$ with $M \geq 1$.
From the convergence of $f_\eps$ we deduce that for each $i$ we have $\ell'_i=\eps \ell_i$ with $\ell_i\in\IC[\eps][\x]_1$, because otherwise the homogeneous degree $j$ part diverges, where $j$ is the number of $\ell'_i$ that do not satisfy this property.

Now, let $f_{\eps,j}$ denote the homogeneous degree $j$ part of $f_\eps$.
Since $f$ is homogeneous of degree~$d$, for $0 \leq j < d$ we have $f_{\eps,j}\simeq 0$. By expanding the product, observe that for all $0 < j < d$ we have $0 \simeq f_{\eps,j} = \gamma\eps^{-M}e_j(\eps\ell_1,\ldots,\eps\ell_m) = \gamma\eps^{-M+j}e_j(\ell_1,\ldots,\ell_m)$.
We now show by induction that for all $1 \leq j < d$ we have $ \eps^{-M+j}p_j(\ell_1,\ldots,\ell_m) \simeq 0$.
This is clear for $j=1$, because $p_1=e_1$.
For the step from $j$ to $j+1$ we use Newton's identities:
\[\textstyle
p_{j+1}\;=\;(-1)^{j} \, (j+1) \, e_{j+1} + \sum_{i=1}^j (-1)^{j+i} e_{j+1-i}\cdot p_i.
\]
Hence $\eps^{-M+(j+1)}p_{j+1}(\ELL) $
\[
\hspace{10000pt minus 1fil}
=(-1)^{j} \, (j+1) \, \underbrace{\eps^{-M+(j+1)}e_{j+1}(\ELL)}_{\simeq 0} + \sum_{i=1}^j (-1)^{j+i} \underbrace{\eps^{-M+(j+1)-i} e_{j+1-i}(\ELL)}_{\simeq 0}\cdot \underbrace{\eps^M}_{\simeq 0} \cdot
\underbrace{\eps^{-M+i} p_i(\ELL)}_{\simeq 0} \simeq 0.
\hfilneg
\]
This finishes the induction proof, now we use Newton's identities again in the same way
to see that $\eps^{-M+d}p_{d}(\ELL)\simeq (-1)^{d-1} \cdot d \cdot \eps^{-M+d}e_{d}(\ELL)$:
\[
\eps^{-M+d}p_{d}(\ELL) 
=(-1)^{d-1} \cdot d \cdot \eps^{-M+d}e_{d}(\ELL) + \sum_{i=1}^{d-1} (-1)^{d-1+i} \underbrace{\eps^{-M+d-i} e_{d-i}(\ELL)}_{\simeq 0}\cdot  \underbrace{\eps^M}_{\simeq 0} \cdot
\underbrace{\eps^{-M+i} p_i(\ELL)}_{\simeq 0}.
\]
We are done now, because $f \simeq f_{\eps,d} = \gamma\eps^{-M+d} e_d(\ell_1,\ldots,\ell_m) \simeq \gamma\eps^{-M+d} \cdot \frac 1 d \cdot (-1)^{d-1} p_d(\ell_1,\ldots,\ell_m)$ and
hence $\underline{\WR}(f)\leq m$.
\end{proof}

\subsection{Linear approximations and Waring rank}\label{subsec:linearWR}
We demonstrated the inequality $\underline{\Kc}(f)\leq\deg(f)\cdot\WR(f)$
in \ref{pro:introkumar}. In the proof of \ref{pro:introkumar}, only ``linear approximations'' have been used; we prove here a converse of \ref{pro:introkumar} in the restricted setting of linear approximation. Given a homogeneous polynomial $f \in \bbC[\bfx]_d$, let $\Kcl(f)$ be the smallest $m$ such that there exist linear forms $\ell_1 \vvirg \ell_m \in \bbC[\bfx]_1$ and $M \geq 1$ such that $f\simeq\gamma\eps^{-M}\big(\prod_{i=1}^{m}(1+\eps\ell_{i})-1\big)$.
\begin{proposition}\label{pro:linearapprox}
For any homogeneous polynomial $f$ of degree $d$, we have $\WR(f)\leq\Kcl(f)\leq d\cdot\WR(f)$.
\end{proposition}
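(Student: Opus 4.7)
The upper bound $\Kcl(f) \leq d\cdot\WR(f)$ is immediate from revisiting the construction in the proof of \ref{pro:introkumar}: given a Waring decomposition $f = \sum_{i=1}^r \ell_i^d$, the argument there produces $f \simeq -\eps^{-d}\bigl(\prod_{i,j}(1-\eps\zeta^{j}\ell_i)-1\bigr)$, and since each factor has the form $1 + \eps\cdot(\text{a linear form over }\bbC)$, this is already a linear approximation in the sense of $\Kcl$. So I would simply point to that construction.

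For the lower bound $\WR(f) \leq \Kcl(f)$, suppose $\Kcl(f) = m$ and write $f \simeq f_\eps := \gamma\eps^{-M}\bigl(\prod_{i=1}^m(1+\eps\ell_i)-1\bigr)$ with $\ell_i \in \bbC[\bfx]_1$ independent of $\eps$. Expanding the product, since $e_j(\bfell)$ is homogeneous of degree $j$ in $\bfx$,
\[
f_\eps \;=\; \gamma \sum_{j=1}^{m} \eps^{j-M}\, e_j(\ell_1,\ldots,\ell_m).
\]
The plan is to separate this sum according to degree in $\bfx$. Each summand lives in a distinct homogeneous component, so for $f_\eps$ to converge as $\eps \to 0$, every term with $j < M$ must vanish identically, i.e.\ $e_j(\bfell) = 0$ for $1 \leq j < M$, while terms with $j > M$ contribute $0$ in the limit. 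The only surviving contribution is the degree-$M$ component $\gamma\, e_M(\bfell)$. Since $f$ is homogeneous of degree $d$, this forces $M = d$ and $f = \gamma\, e_d(\bfell)$, together with $e_j(\bfell) = 0$ for all $1 \leq j < d$.

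At this point I would invoke the Newton identity exactly as in \ref{pro:newtonidentities}: the vanishing $e_1(\bfell) = \cdots = e_{d-1}(\bfell) = 0$ combined with $p_d = (-1)^{d-1} d\, e_d + \sum_{i=1}^{d-1} (-1)^{d+i-1} e_{d-i} p_i$ yields $p_d(\bfell) = (-1)^{d-1} d\, e_d(\bfell)$. Hence $f$ is a nonzero scalar multiple of $\sum_{i=1}^m \ell_i^d$, absorbing the scalar into one of the linear forms gives $\WR(f) \leq m = \Kcl(f)$.

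I do not anticipate a real obstacle: the linear-in-$\eps$ restriction on the approximation eliminates the phenomenon of higher-order cancellations that required the border Newton identities of \ref{thm:asymptoticnewtonidentities}, so the argument reduces cleanly to the classical Newton identity applied to $\eps$-independent linear forms. The only point requiring slight care is verifying that the $\ell_i$ can be taken in $\bbC[\bfx]_1$ rather than $\bbC[\eps^{\pm 1}][\bfx]_1$, which follows from the definition of $\Kcl$.
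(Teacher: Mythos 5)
Your proposal is correct and follows essentially the same route as the paper: the upper bound by pointing to the construction in \ref{pro:introkumar}, and the lower bound by expanding into elementary symmetric polynomials, using homogeneity to force $e_j(\bfell)=0$ for $j<M$ and $f=\gamma e_M(\bfell)$ with $M=d$, and then applying the classical Newton identity to conclude $\WR(f)\leq m$. The only cosmetic difference is that you make the identification $M=d$ explicit, which the paper leaves implicit.
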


\begin{proof}
The inequality $\Kcl(f)\leq d\cdot\WR(f)$ is clear from the proof
of \ref{pro:introkumar}, as there we obtained an expression of the
form described in the definition of $\Kcl$. Suppose $\Kcl(f)=m$ and write $f\simeq f_{\eps}:=\gamma\eps^{-M}\big(\prod_{i=1}^{m}(1+\eps\ell_{i})-1\big)$
with $M\geq1$ and $\ell_{i}\in\IC[\x]_{1}$. It is immediate that $m \geq M$, $f=\gamma e_{M}(\ELL)$ and $e_j(\bfell) = 0$ for $j < M$, where $\bfell = (\ell_1 \vvirg \ell_m)$. Via the Newton identity for the power sum polynomial, we have 
\[\textstyle
{p_{M}(\ELL)=(-1)^{M-1}Me_{M}(\ELL)+\sum_{i=1}^{M-1}(-1)^{M+i-1}e_{M-i}(\ELL)\cdot p_{i}(\ELL).}
\]
Since $e_j(\ELL)=0$ for all $1\leq j<M$, we obtain:
\[\textstyle
p_{M}(\ELL)=(-1)^{M-1}Me_{M}(\ELL)=\frac{1}{\gamma}(-1)^{M-1}Mf.
\]
We conclude $\WR(f) = \WR(p_M(\bfell)) \leq \WR(p_M) = m = \Kcl(f)$, as desired.
\end{proof}

\section{Restricted binomials: debordering and lower bounds}
\label{sec:debodering}

In this section, we study restricted binomials. A binomial $\textup{bn}_d$ is the polynomial $\textup{bn}_d(\x,\y):= P_{2,0}^{[d]} = x_1 \hdots x_d + y_1 \cdots y_d$. \Cref{thm:intro:productpluspower} is based on the presentation of $P_{1,1}^{[d]}$ and $P_{1,2}^{[d]}$ as restrictions of the binomial $P_{2,0}^{[d]}$, which follows from the fact that both $x_0^d$ and $x_0^d - x_{d+1}^d = \prod_{i=1}^d (x_0 - \zeta^i x_{d+1})$ are completely reducible; here $\zeta$ is a primitive $d$-th root of $1$. Therefore, degenerations of $P_{1,1}^{[d]}$ and $P_{1,2}^{[d]}$ arise as limits of the sum of two products
\[
\lim_{\eps \to 0} \left( \prod_{i = 1}^d \ell_i(\eps) + \prod_{i = 1}^d \ell'_i(\eps)\right)
\]
where $\ell_i(\eps), \ell'_i(\eps)$ are linear forms depending rationally on $\eps$, and the second product is {\em restricted}, in the sense that, up to change of coordinates, it has either one or two variables.

In \cref{debordering-binom}, we deborder product-plus-power ($P_{1,1}^{[d]}$) and product-plus-two-powers models ($P_{1,2}^{[d]}$). In \cref{subsec:lb}, we show exponential gaps between product-plus-power, product-plus-two-powers, and binomials (in the affine sense). Identifying explicit polynomials which are hard to
approximate, and proving it remains a major template in algebraic and geometric complexity theory.
Often, proving lower bounds on the homogeneous model turns out to be easier than in its affine
model, because of the non-trivial cancellations in the latter model. However, in the restricted setting, we are able to show {\em optimal} lower bounds, see~\cref{thm:lowerboundprodpluspower} \& \ref{thm:lowerboundproductplustwo}. 

\subsection{Debordering: Characterizing special binomials} \label{debordering-binom}

In this section we prove debordering results for product-plus-power and product-plus-two-powers models.
Our method applies also for more general computational model based on restricted binomials.
More specifically, we prove that polynomials obtained in the limit in our model have low border Waring rank.
One can then apply a debordering result for $\bwr$ such as $\abpw(f) \leq \bwr(f)$~\cite{blaser2020complexity,For14} or the results from \cite{DGIJL:FixedPar} to get a complete debordering.

\begin{definition}[Restricted binomial model]
    We say that a homogeneous degree $d$ polynomial $f$ is in the \emph{class $RB_k$} if it can be presented as
    \[
    f = \prod_{i = 1}^d \ell_i + \prod_{i = 1}^d \ell'_i
    \]
    for some linear forms $\ell_i, \ell'_i$ such that $\rank(\ell'_1, \dots, \ell'_d) \leq k$.
    We also define the corresponding approximate class $\overline{RB}_k$ in the standard way: a homogeneous degree $d$ polynomial $f$ is in $\overline{RB}_k$ if
    \begin{equation}
    \label{eq:border-rb}
    f = \lim_{\varepsilon \to 0} \left(\prod_{i = 1}^d \ell_i(\eps) + \prod_{i = 1}^d \ell'_i(\eps)\right)
    \end{equation}
    for some $\ell_i(\eps), \ell'_i(\eps) \in \bbC[\epsilon^{\pm 1}][\mathbf{x}]_1$ such that $\rank(\ell'_1(\epsilon), \dots, \ell'_d(\epsilon)) \leq k$ for every $\epsilon \neq 0$.
\end{definition}
The main theorem of this section is a debordering result of $\bar{RB}_k$ in terms of border Waring rank. \Cref{thm:intro:productpluspower} is a consequence of this result.
\begin{theorem}[Debordering $\overline{RB}_k$]
\label{thm:rb-deborder}
    Let $f$ be a homogeneous polynomial of degree $d$ in $\overline{RB}_k$. Then either $f \in RB_k$, or $\bwr(f) \leq O(d^{3k + 2})$.
\end{theorem}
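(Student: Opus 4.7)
The plan is to extend the strategy used for $k=1,2$ in~\ref{thm:intro:product+two-powers} to general $k$, tracking the dependence on $k$. First I would set up the standard dichotomy: given the border expression \eqref{eq:border-rb}, inspect the Laurent expansions at $\eps=0$ of the two products $P(\eps) := \prod_i \ell_i(\eps)$ and $Q(\eps) := \prod_i \ell'_i(\eps)$. If both $P$ and $Q$ admit finite limits $\hat P,\hat Q \in \IC[\x]$, then $f = \hat P + \hat Q$; by lower semicontinuity of rank applied to the family $\ell'_1(\eps),\ldots,\ell'_d(\eps)$, the factors of $\hat Q$ still span a subspace of dimension at most $k$, so $f \in RB_k$ and we land in the first alternative.

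Otherwise, both $P(\eps)$ and $Q(\eps)$ have poles of the same order $N$ at $\eps=0$ and their singular leading parts must cancel. Writing $Q(\eps) = \eps^{-N}Q_0 + O(\eps^{-N+1})$, the polynomial $Q_0$ factors, after absorbing suitable $\eps$-monomials into the $\ell'_i$, as a product of $d$ linear forms lying in the limit subspace $\hat W := \lim_{\eps \to 0}\Span(\ell'_i(\eps))$, which has dimension at most $k$. Consequently $P_0 = -Q_0$ also factors inside $\hat W$. After a $\IC(\eps)$-linear change of coordinates I may assume $\hat W = \langle y_1,\ldots,y_k\rangle$, split each $\ell_i,\ell'_i$ into a leading $\mathbf y$-only part plus an $\eps$-small linear perturbation in $(\x,\mathbf y)$, and specialize $\mathbf y$ to a generic constant vector $\mathbf c \in \IC^k$. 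This collapses the border expression to a two-sided Kumar-like one
\[
f(\mathbf c,\x) \;\simeq\; \alpha(\eps)\textstyle\prod_{i=1}^d(1+\eps a_i(\eps,\x)) \;-\; \beta(\eps)\prod_{i=1}^d(1+\eps b_i(\eps,\x)),
\]
with $\alpha,\beta \in \IC(\eps)$ satisfying $\alpha(0) = \beta(0)$. Adapting the border Newton identities of~\ref{thm:asymptoticnewtonidentities} to this \emph{difference} of two Kumar expressions with matching leading scalars yields $\bwr(f(\mathbf c,\x)) \leq O(d)$ uniformly for generic $\mathbf c$; this is the $k$-variable analog of~\ref{thm:kumgen-twoproducts}.

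The final step is to reconstruct $f(\mathbf y,\x)$ from its specializations $f(\mathbf c,\x)$ by Lagrange interpolation on a grid of $(d+1)^k$ generic points in the $y_j$-directions. Each coefficient $h_{\mathbf j}(\x)$ of a monomial $\mathbf y^{\mathbf j}$ in $f$ is a $\IC$-linear combination of the $f(\mathbf c,\x)$'s, giving $\bwr(h_{\mathbf j}) \leq O(d^{k+1})$ by subadditivity. Each summand $\mathbf y^{\mathbf j} h_{\mathbf j}$ in $f = \sum_{\mathbf j} \mathbf y^{\mathbf j} h_{\mathbf j}$ is then bounded by combining a border Waring decomposition of $h_{\mathbf j}$ with iterated applications of the Sylvester identity $\WR(\ell^a m^b) \leq \max(a,b)+1$ in the $k$ monomial directions (as in the proof of~\ref{thm:waring-to-maxwaring}); summing over the $O(d^k)$ monomials and combining the three sources of $d$ (specialization, interpolation, and monomial-to-power conversion) yields $\bwr(f) \leq O(d^{3k+2})$.

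The principal obstacle is formulating and verifying the two-sided border Newton identity rigorously: the matching-leading-coefficient condition $\alpha(0)=\beta(0)$ must be justified for generic $\mathbf c$, and one must ensure that the Zariski-closed sublocus of $\mathbf y$-values where the two leading scalars fail to match is absorbed back into the $f \in RB_k$ alternative. A secondary difficulty is the bookkeeping required to land exactly on the exponent $3k+2$, which involves carefully exploiting that the $\mathbf y$-monomials have \emph{total} degree at most $d$ rather than individual degree $d$ per variable, and controlling how the Sylvester cost compounds across the $k$ monomial directions.
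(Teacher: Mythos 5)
Your proposal is correct and follows essentially the same route as the paper's proof of \ref{thm:rb-deborder}: the dichotomy on whether the two products converge individually, cancellation of leading terms forcing (by unique factorization) the limit linear forms to coincide up to permutation and scaling, specialization of the $\leq k$-dimensional span to generic constants, the two-sided border Newton identity (which is precisely \ref{thm:kumgen-twoproducts} together with \ref{cor:kumgen-twoproducts}), and interpolation back via \ref{lem:interpolation-border-waring} and \ref{cor:interpolation-2-bwr}. The obstacle you flag about enforcing $\alpha(0)=\beta(0)$ for generic $\mathbf{c}$ is in fact a non-issue: once one rescales and permutes so that $\ell_{i0}=\ell'_{i0}$, one has $\alpha_i|_{\eps=0}=L_{i0}(\gamma)=L'_{i0}(\gamma)=\beta_i|_{\eps=0}$ term-by-term, and the only genericity required is $L_{i0}(\gamma)\neq 0$ (also your intermediate bound from the Kumar-like model should be $O(d^2)$ rather than $O(d)$, though the final exponent $3k+2$ is unchanged).
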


To prove this theorem, we first need some basic lemmas which will be used in the proof. We will use non-homogeneous polynomials, so instead of Waring rank we will be working with the complexity of $\Sigma \Lambda \Sigma$-circuits.
Denote by $\Sigma^{[s]} \Lambda^{[e]} \Sigma$ the class of (non-homogeneous) polynomials representable as a sum of $s$ powers of affine linear forms with exponents not exceeding $e$, and by $\overline{\Sigma^{[s]} \Lambda^{[e]} \Sigma}$ the corresponding class closed under approximation.
As the following lemma shows, for homogeneous polynomials this model is equal in power to border Waring rank.
\begin{lemma}\label{lem:nonhom-bwr-to-hom}
Let $f$ be a homogeneous polynomial of degree $d$. Then $f \in \overline{\Sigma^{[s]}\Lambda\Sigma}$ if and only if $\bwr(f) \le s$
\end{lemma}
\begin{proof}
Clearly, if $\bwr(f) \le s$ then $f \in \overline{\Sigma^{[s]}\Lambda\Sigma}$. For the converse, suppose $f \simeq \sum_{i \in [s]} (\alpha_i + \ell_i)^{e_i}$, where $\alpha_i \in \bbC[\eps^{\pm 1}]$, and $\ell_i \in \bbC[\eps^{\pm 1}][\x]_1$. Taking the degree $d$ part of each side, we obtain a border Waring rank decomposition $f \simeq \sum_{i \colon e_i \ge d} \binom{e_i}{d} \ell_i^d \alpha_i^{e_i-d}$ with at most $s$ summands.
\end{proof}

We recall a classical result on the border Waring rank of a binary monomial.
\begin{proposition}[see, e.g., {\cite[Cor.\,4.5]{landsberg2010}}]
\label{prop:monomial-bwr}
If $a \leq b$, then $\bwr(x^a y^b) = a + 1$.
\end{proposition}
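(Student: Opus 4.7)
My plan is to prove $\bwr(x^a y^b) = a+1$ by establishing the two inequalities separately. For the upper bound $\bwr(x^a y^b) \le a+1$, the plan is to produce an explicit degeneration via polynomial interpolation. Set $d := a+b$, choose $a+1$ pairwise distinct scalars $\alpha_0, \dots, \alpha_a \in \IC$, and consider the family of linear forms $\ell_i(\eps) := \eps^{-a}\, y + \alpha_i \eps^{b}\, x$. Expanding,
\[
\ell_i(\eps)^d \;=\; \sum_{k=0}^{d} \binom{d}{k}\, \eps^{d(k-a)}\, \alpha_i^k\, x^{k} y^{d-k},
\]
so the exponent of $\eps$ is negative when $k<a$, zero when $k=a$, and positive when $k>a$. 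Since the Vandermonde matrix on $\alpha_0,\dots,\alpha_a$ is invertible, I can pick $\gamma_0,\dots,\gamma_a \in \IC$ solving $\sum_i \gamma_i \alpha_i^k = \delta_{k,a}$ for all $k=0,\dots,a$; the divergent terms in $\sum_i \gamma_i\, \ell_i(\eps)^d$ then cancel and the $\eps \to 0$ limit equals $\binom{d}{a}\, x^{a} y^{b}$. Absorbing the scalar $\gamma_i/\binom{d}{a}$ into $\ell_i$ via a $d$-th root in $\IC$ turns this into an honest border Waring rank decomposition of $x^{a} y^{b}$ with $a+1$ summands and all $\eps$-exponents still integers.

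For the lower bound $\bwr(x^a y^b) \ge a+1$, the plan is to invoke the catalecticant bound from apolarity theory. For any homogeneous $f \in S^d V$ and any $k$, the $k$-th catalecticant $C_k(f)\colon T_k \to S_{d-k}$, $\partial \mapsto \partial \cdot f$, satisfies $\rank C_k(f) \le \bwr(f)$: for a single power $\ell^d$ the image $T_k \cdot \ell^d$ is at most one-dimensional (spanned by $\ell^{d-k}$), so any Waring decomposition with $r$ summands gives $\rank C_k(f) \le r$; and since $\{f \colon \rank C_k(f) \le r\}$ is Zariski closed, the bound persists under limits. Applied to $f = x^{a} y^{b}$ at $k=a$, the assumption $a \le b$ ensures that for every $0 \le i \le a$ the derivative $\partial_x^{\,a-i}\partial_y^{\,i}(x^{a} y^{b}) = \tfrac{a!\, b!}{i!\,(b-i)!}\, x^{i} y^{b-i}$ is nonzero, and the $a+1$ monomials $x^{i} y^{b-i}$ for $0 \le i \le a$ are linearly independent; hence $\rank C_a(x^{a} y^{b}) = a+1$, which finishes the proof.

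The argument is essentially short and presents no real conceptual obstacle; the one place worth flagging is the upper bound, where the specific choice of exponents $-a$ and $b$ for $\eps$ in $\ell_i(\eps)$ is what keeps all approximating data inside $\IC[\eps^{\pm 1}]$, circumventing the fractional powers of $\eps$ that a naive rescaling by $\eps^{-a/d}$ would otherwise produce.
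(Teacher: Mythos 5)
Your proof is correct and is essentially the standard argument behind the cited reference. For the lower bound you use the $a$-th catalecticant (symmetric flattening), whose rank is lower semicontinuous and hence bounds border Waring rank; for the upper bound you use a Vandermonde/interpolation degeneration that exhibits $x^a y^b$ in the limit of a span of $a+1$ powers of linear forms. The paper does not reproduce a proof of this proposition (it cites Landsberg--Teitler, which proceeds along the same lines), so there is no alternative argument in the paper to compare against. One small point that you could state explicitly but which causes no gap: the coefficients $\gamma_i$ coming from the Vandermonde inversion are $\gamma_i = \prod_{j\neq i}(\alpha_i-\alpha_j)^{-1}$, all nonzero, so the $d$-th root absorption is always legitimate (and even if some $\gamma_i$ were zero, dropping that summand only strengthens the upper bound). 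The remark about choosing the $\eps$-exponents $-a$ and $b$ to keep everything in $\IC[\eps^{\pm1}]$ is a legitimate and slightly tidier variant of the usual presentation, which often tolerates fractional powers of $\eps$ and clears them afterwards.
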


The next lemma bounds the $\Sigma\Lambda\Sigma$ complexity of a polynomial in terms of the complexity of polynomials obtained from it by substitution of variables. 

\begin{lemma}[Interpolation] \label{lem:interpolation-border-waring}
Let $f$ be a polynomial of degree $d$ such that $f(\gamma_i, x_2,\hdots,x_n) \in \overline{\Sigma^{[s]}\wedge^{[e]}\Sigma}$ for some distinct $\gamma_0, \dots, \gamma_d \in \bbC$. Then $f \in \overline{\Sigma^{[s(d+1)^3]}\wedge^{[e + d]}\Sigma}$.
\end{lemma}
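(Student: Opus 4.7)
The plan is to apply univariate Lagrange interpolation in the variable $x_1$. Since $f$ has total degree $d$, it has degree at most $d$ in $x_1$ when viewed as a polynomial over $\bbC[x_2, \ldots, x_n]$, and the $d+1$ distinct interpolation nodes $\gamma_0, \ldots, \gamma_d$ give the identity
\[
f(\x) \;=\; \sum_{i=0}^{d} L_i(x_1) \cdot f(\gamma_i, x_2, \ldots, x_n), \qquad L_i(x_1) := \prod_{j \neq i} \frac{x_1 - \gamma_j}{\gamma_i - \gamma_j}.
\]

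Next, I would substitute the hypothesized approximations $f(\gamma_i, x_2, \ldots, x_n) \simeq \sum_{j=1}^{s} (\alpha_{i,j} + \ell_{i,j})^{e_{i,j}}$ with $e_{i,j}\le e$ (using a single formal parameter $\eps$, which we can always arrange since $\eps$ is formal), and expand each Lagrange basis polynomial as $L_i(x_1) = \sum_{k=0}^{d} c_{i,k} x_1^k$ in the monomial basis. This rewrites $f$ as the $\eps \to 0$ limit of a sum of $(d+1)^2 s$ terms of the form
\[
c_{i,k} \cdot x_1^k \cdot (\alpha_{i,j} + \ell_{i,j})^{e_{i,j}}, \qquad 0 \leq i,k \leq d, \ 1 \leq j \leq s.
\]

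The main step is to bound the $\Sigma\wedge\Sigma$-complexity of each such term. Set $u := x_1$ and $v := \alpha_{i,j} + \ell_{i,j}$; the term is a scalar multiple of the homogeneous binary monomial $u^{k} v^{e_{i,j}}$ of bidegree $(k, e_{i,j})$ in $u, v$. By \cref{prop:monomial-bwr}, we have $\bwr(u^k v^{e_{i,j}}) \leq \min(k, e_{i,j}) + 1 \leq d+1$, and every summand in such a border decomposition has the form $(\beta_1 u + \beta_2 v)^{k + e_{i,j}}$. Substituting $v = \alpha_{i,j} + \ell_{i,j}$ turns each summand back into a $(k+e_{i,j})$-th power of an affine form in $x_1, \ldots, x_n$, with total exponent $k + e_{i,j} \leq d + e$. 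The constant $c_{i,k}$ is absorbed by rescaling the coefficients (taking a $(k+e_{i,j})$-th root when $k+e_{i,j} \ge 1$; the case $k=e_{i,j}=0$ is a constant summand handled trivially).

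Multiplying the counts $(d+1)^2 s \cdot (d+1) = s(d+1)^3$ yields exactly the stated bound on the number of affine powers, with exponents at most $d + e$, placing $f$ in $\overline{\Sigma^{[s(d+1)^3]}\wedge^{[e+d]}\Sigma}$. I do not anticipate any serious obstacle; the only minor care required is keeping a single formal parameter $\eps$ across the $(d+1)s$ input approximations and handling the trivial degenerate cases in the binary-monomial bound.
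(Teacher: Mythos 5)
Your argument is correct and matches the paper's proof in all essential respects: you perform univariate interpolation in $x_1$ (you phrase it via Lagrange basis polynomials, the paper phrases it via the coefficients $f_j$ of $x_1^j$, but the resulting double sum $\sum_{i,k} c_{i,k} x_1^k f(\gamma_i,\cdot)$ is the same), and both proofs then bound each term $x_1^k(\alpha+\ell)^{e'}$ via the border Waring rank of the binary monomial $x^a y^b$, yielding the same count $s(d+1)^3$ and exponent bound $e+d$.
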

\begin{proof}
Write $f(\x)=\sum_{j=0}^d x_1^j f_j(x_2, \dots, x_n)$.
By polynomial interpolation there exist $\alpha_{ij} \in \IC$ such that $f_j = \sum_{i = 0}^d \alpha_{ij} f(\gamma_i, x_2, \hdots, x_n)$. 
By assumption, $f(\gamma_i, x_2, \hdots, x_n) \simeq \sum_{j = 1}^s \ell_{ij}^{e_j}$, where $\ell_{ij}$ are affine linear forms with coefficients in $\IC[\epsilon^{\pm 1}]$, and $e_j \le e$.
Hence 
\[
f_j(\x) \simeq \sum_{i = 0}^d \sum_{j = 1}^s \alpha_{ij} \ell_{ij}^{e_j} \implies f_j(\x) \in \overline{\Sigma^{[s(d+1)]}\wedge^{[e]}\Sigma}\;.\]
Note that for any affine linear polynomial $\ell$ the polynomial $x_1^j \ell^e$ can be approximated by a $\Sigma^{[d + 1]} \wedge^{[e + j]} \Sigma$-circuit using the decomposition of the monomial $x^j y^e$ with border Waring rank equal to $\min \{j + 1, e + 1\} \leq j + 1 \leq d + 1$; this follows from \cref{prop:monomial-bwr}.
Therefore
$x_1^j f_j \in \overline{\Sigma^{[s(d+1)^2]}\wedge^{[e + j]}\Sigma}$,
and
$f(\x) = \sum_{i = 0}^d x_1^j f_j \in \overline{\Sigma^{[s(d+1)^3]}\wedge^{[e + d]}\Sigma}$.
\end{proof}

Applying \Cref{lem:interpolation-border-waring} several times we obtain the following result.
\begin{corollary}\label{cor:interpolation-2-bwr}
Let $f(\x) \in \IC[\x]$ be a polynomial of degree $d$ such that 
\[ f(\gamma_{1i_1}, \gamma_{2i_2}, \dots, \gamma_{ki_k}, x_{k + 1},\hdots,x_n) \in \overline{\Sigma^{[s]}\wedge^{[e]}\Sigma}
\] for some $\gamma_{ij} \in \IC$, $1 \leq i \leq k$, $0 \leq j \leq d$, with $\gamma_{i0}, \dots, \gamma_{id}$ distinct for each $i$.
Then 
$ f \in \overline{\Sigma^{[s(d+1)^{3k}]}\wedge^{[e + kd]}\Sigma}.$
\end{corollary}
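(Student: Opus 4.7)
The plan is to prove the corollary by straightforward induction on $k$, iteratively applying Lemma~\ref{lem:interpolation-border-waring} once for each coordinate that is being interpolated. The base case $k=0$ is vacuous (the hypothesis is exactly the conclusion). For the inductive step, I would assume the statement for $k-1$ substitutions and derive it for $k$.

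For the inductive step, first fix $i_1 \in \{0,\dots,d\}$ and consider the polynomial $g_{i_1}(x_2,\dots,x_n) := f(\gamma_{1i_1},x_2,\dots,x_n)$. This is a polynomial in $n-1$ variables of degree at most $d$, and for every choice of $i_2,\dots,i_k$ the hypothesis of the corollary tells us that $g_{i_1}(\gamma_{2i_2},\dots,\gamma_{ki_k},x_{k+1},\dots,x_n) \in \overline{\Sigma^{[s]}\wedge^{[e]}\Sigma}$. Applying the inductive hypothesis to $g_{i_1}$ with $k-1$ substitutions (using the $d+1$ distinct values $\gamma_{i0},\dots,\gamma_{id}$ for $i=2,\dots,k$), I obtain
\[
g_{i_1} \in \overline{\Sigma^{[s(d+1)^{3(k-1)}]}\wedge^{[e+(k-1)d]}\Sigma}.
\]

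Now I view $f$ as a polynomial in $x_1,\dots,x_n$ and apply Lemma~\ref{lem:interpolation-border-waring} in the variable $x_1$: since $f(\gamma_{1i_1},x_2,\dots,x_n) = g_{i_1}$ lies in $\overline{\Sigma^{[s']}\wedge^{[e']}\Sigma}$ with $s' = s(d+1)^{3(k-1)}$ and $e' = e+(k-1)d$ for each of the $d+1$ distinct values $\gamma_{10},\dots,\gamma_{1d}$, the lemma gives
\[
f \in \overline{\Sigma^{[s'(d+1)^{3}]}\wedge^{[e'+d]}\Sigma} = \overline{\Sigma^{[s(d+1)^{3k}]}\wedge^{[e+kd]}\Sigma},
\]
which is the desired bound. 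There is essentially no obstacle here: the parameters compose multiplicatively in size and additively in exponent exactly as the statement predicts, so the main thing to check is that the bookkeeping of the induction matches the claimed bounds, which it does.
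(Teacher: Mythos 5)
Your proof is correct and matches the paper's intended approach: the paper's own "proof" is the one-line remark that the corollary follows by applying \Cref{lem:interpolation-border-waring} several times, and your induction on $k$ (applying the lemma in the first variable after using the inductive hypothesis on the restricted polynomial) is exactly what is meant. One small point worth making explicit in a careful write-up: the restriction $g_{i_1}$ may have degree strictly smaller than $d$, so the inductive hypothesis as literally stated (with the parameter $d$ tied to $\deg f$) does not apply verbatim; however, both \Cref{lem:interpolation-border-waring} and the inductive statement hold unchanged for polynomials of degree at most $d$ when one uses $d+1$ interpolation points, and the resulting bound is monotone in $d$, so the claimed parameters are still valid.
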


Additionally, we need the following statement similar to \ref{thm:asymptoticnewtonidentities}, which considers an auxiliary Kumar-like model.

\begin{theorem}
\label{thm:kumgen-twoproducts} For any degree $d$ polynomial $f(\x)\in\IC[\x]$,
not necessarily homogeneous, suppose we have $f\simeq\eps^{-M}\big(\prod_{i=1}^{m}(1+\eps a_{i})-\prod_{i=1}^{m}(1+\eps b_{i})\big)$
for some linear forms $a_{i},b_{i}\in\IC[\eps][\x]_1$ 
with $M \geq 1$.
Then $f\in\overline{\Sigma^{[2md]}\wedge^{[d]}\Sigma}$.
\end{theorem}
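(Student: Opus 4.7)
The strategy is to split $f$ into its degree-$k$ homogeneous components $f^{(k)}$ in the variables $\x$ (with $f^{(k)}=0$ for $k>d$), upper bound the border Waring rank of each one by $2m$ via a ``two-product'' analogue of \cref{thm:asymptoticnewtonidentities}, and assemble. Expanding both products as formal Laurent series in $\eps$ gives
\[
\eps^{-M}\Bigl(\textstyle\prod_{i=1}^m(1+\eps a_i)-\prod_{i=1}^m(1+\eps b_i)\Bigr)\;=\;\sum_{k=1}^m \eps^{k-M}\tilde e_k,
\]
where $\tilde e_k := e_k(\ELA)-e_k(\ELB)$ is homogeneous of degree $k$ in $\x$. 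Because different $k$'s correspond to different $\x$-degrees, the hypothesis $f\simeq\sum_k \eps^{k-M}\tilde e_k$ separates into the individual limits $f^{(k)}\simeq \eps^{k-M}\tilde e_k$ for every $k\geq 1$; in particular each of these limits exists.

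The key step is to replace every $\tilde e_k$ by the power-sum difference $\tilde p_k := p_k(\ELA)-p_k(\ELB)$, which is manifestly a sum of $2m$ signed $k$-th powers of linear forms. Applying \cref{prop:Newt-Id} separately to $\ELA$ and to $\ELB$ and subtracting, I obtain
\[
\tilde p_k\;=\;(-1)^{k-1}k\,\tilde e_k\;+\;\sum_{i=1}^{k-1}(-1)^{k+i-1}\bigl(e_{k-i}(\ELA)\,\tilde p_i+\tilde e_{k-i}\,p_i(\ELB)\bigr).
\]
Multiplying by $\eps^{k-M}$ and redistributing $\eps$-powers as $\eps^{k-i}\cdot\eps^{i-M}$ or $\eps^{(k-i)-M}\cdot\eps^i$, I induct on $k=1,\ldots,d$ to establish $\eps^{k-M}\tilde p_k\simeq (-1)^{k-1}k\,f^{(k)}$. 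The base case $k=1$ is immediate from $\tilde p_1=\tilde e_1$. In the step, each cross summand carries a factor $\eps^{k-i}$ or $\eps^i$ with $1\leq i\leq k-1$; the remaining factors $e_{k-i}(\ELA)$ and $p_i(\ELB)$ are $\eps$-regular because $a_i,b_i\in\IC[[\eps]][\x]_1$, while $\eps^{i-M}\tilde p_i$ converges by the induction hypothesis and $\eps^{(k-i)-M}\tilde e_{k-i}\simeq f^{(k-i)}$ by the separation above. Hence every cross summand vanishes in the limit, leaving only $(-1)^{k-1}k\,\eps^{k-M}\tilde e_k\simeq(-1)^{k-1}k\,f^{(k)}$.

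Reading the induction in reverse, each component with $1\leq k\leq d$ admits the explicit border Waring decomposition
\[
f^{(k)}\;\simeq\;\frac{(-1)^{k-1}}{k}\,\eps^{k-M}\Bigl(\sum_{i=1}^m a_i^k-\sum_{i=1}^m b_i^k\Bigr),
\]
exhibiting at most $2m$ powers of linear forms, each of exponent $k\leq d$. Summing over $k=1,\ldots,d$ (the degree-$0$ part vanishes because $\tilde e_0=0$) places $f$ in $\overline{\Sigma^{[2md]}\wedge^{[d]}\Sigma}$, as required. The main technical point is the simultaneous vanishing of all cross terms in the limit, which is precisely what forces the careful redistribution of $\eps$-powers between the two factors of every cross product; this is the same mechanism as in \cref{thm:asymptoticnewtonidentities}, with extra bookkeeping since both tuples $\ELA$ and $\ELB$ now contribute mixed terms $e_{k-i}(\ELA)\tilde p_i$ and $\tilde e_{k-i}p_i(\ELB)$.
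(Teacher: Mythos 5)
Your proposal is correct and follows essentially the same approach as the paper's proof: split $f$ into homogeneous components, use Newton's identities to trade the differences $e_k(\ELA)-e_k(\ELB)$ for the power-sum differences $p_k(\ELA)-p_k(\ELB)$, and show by induction that all cross terms vanish in the limit. Your Leibniz-style splitting $e_{k-i}(\ELA)p_i(\ELA)-e_{k-i}(\ELB)p_i(\ELB)=e_{k-i}(\ELA)\tilde{p}_i+\tilde{e}_{k-i}\,p_i(\ELB)$ together with the explicit redistribution of $\eps$-powers is exactly equivalent to the paper's pair of congruences modulo $\epsmod{M-j}$ and $\epsmod{M-j+1}$, just phrased in terms of limits instead of ideals.
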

\begin{proof}
Let $f_{\eps}=\eps^{-M}\big(\prod_{i=1}^{m}(1+\epsilon a_{i})-\prod_{i=1}^{m}(1+ \epsilon b_{i})\big)$.
Denote by $f_{j}$ and $f_{\eps,j}$ the homogeneous degree $j$
parts of $f$ and $f_{\eps}$ respectively.
Since $f \simeq f_{\eps}$, we have
\[
f_{j} \simeq f_{\eps, j} = \eps^{-M}\left(e_{j}(\eps a_{1},\ldots,\eps a_{m})-e_{j}(\eps b_{1},\ldots,\eps b_{m})\right) = \eps^{-M+j}\left(e_{j}(\ELA)-e_{j}(\ELB)\right),
\]
where $\ELA = (a_1,\hdots,a_m)$ and similarly $\ELB =(b_1,\hdots,b_m)$.
Note that since $f_{\eps, j}$ converges, $e_{j}(\ELA)-e_{j}(\ELB)$ is divisible by $\eps^{M - j}$ for all $j \geq 1$, that is,
\[
e_{j}(\ELA)\equiv e_{j}(\ELB)\bmod\epsmod{M-j}
\]
where we consider $e_j(\ELA)$ and $e_j(\ELB)$ as elements of the ring $\bbC[\eps][\mathbf{x}]$.

We now show by induction that for all $j \geq 1$ the following additional congruences hold.
\begin{align}
p_{j}(\ELA)\equiv & p_{j}(\ELB)\bmod\epsmod{M-j}\nonumber \\
p_{j}(\ELA)-p_{j}(\ELB)\equiv & (-1)^{j-1}j\left(e_{j}(\ELA)-e_{j}(\ELB)\right)\bmod\epsmod{M-j+1}\nonumber 
\end{align}
The case $j=1$ is trivially true because $p_{1}=e_{1}$.
For the induction step from $j$ to $j+1$, we use Newton's identities
\[
{\textstyle p_{j+1}=(-1)^{j}(j+1)e_{j+1}+\sum_{i=1}^{j}(-1)^{j+i}e_{j+1-i}\cdot p_{i}.}
\]
We obtain 
\begin{align}
p_{j+1}(\ELA)-p_{j+1}(\ELB)= & (-1)^{j}(j+1)\left(e_{j+1}(\ELA)-e_{j+1}(\ELB)\right)\notag\\
\label{eq:newton-diff}
+ & \sum_{i=1}^{j}(-1)^{j+i}\left(e_{j+1-i}(\ELA)\cdot p_{i}(\ELA)-e_{j+1-i}(\ELB)\cdot p_{i}(\ELB)\right).
\end{align}
By induction hypothesis we know that for $1\leq i\leq j$
\begin{align*}
p_{i}(\ELA)\equiv & p_{i}(\ELB)\bmod\epsmod{M-i}\\
e_{j+1-i}(\ELA)\equiv & e_{j+1-i}(\ELB)\bmod\epsmod{M-(j+1)+i}.
\end{align*}
Since $M - j \leq M - i$ and $M - j \leq M - (j + 1) + i$, this can be relaxed to
\begin{align*}
p_{i}(\ELA)\equiv & p_{i}(\ELB)\bmod\epsmod{M-j}\\
e_{j+1-i}(\ELA)\equiv & e_{j+1-i}(\ELB)\bmod\epsmod{M-j}.
\end{align*}
From \eqref{eq:newton-diff} we get
\[
p_{j+1}(\ELA)-p_{j+1}(\ELB)\equiv(-1)^{j}(j+1)\left(e_{j+1}(\ELA)-e_{j+1}(\ELB)\right)\bmod\epsmod{M-j}.
\]
Weakening this to an equivalence $\bmod\epsmod{M-(j+1)}$, we obtain
\[
p_{j+1}(\ELA)-p_{j+1}(\ELB)\equiv(-1)^{j}(j+1)\left(e_{j+1}(\ELA)-e_{j+1}(\ELB)\right)\equiv 0\bmod\epsmod{M-(j+1)},
\]
or $p_{j+1}(\ELA)\equiv p_{j+1}(\ELB)\bmod\epsmod{M-(j+1)}$, finishing the induction.

Finally, we use the proved congruences to write an approximate decomposition of $f$. We have
\[
f_{j} \simeq\eps^{-M+j}\left(e_{j}(\ELA)-e_{j}(\ELB)\right)\simeq \eps^{-M+j}\cdot\frac{1}{j}\cdot(-1)^{j-1}\left(p_{j}(\ELA)-p_{j}(\ELB)\right),
\]
which shows that $\bwr(f_j) \leq 2m$.
Note that $f_0 = 0$, so $f = \sum_{j = 1}^d f_j \in \overline{\Sigma^{[2md]}\Lambda^{[d]}\Sigma}$.
\end{proof}

\begin{corollary}
\label{cor:kumgen-twoproducts} For any degree $d$ polynomial $f(\x)\in\IC[\x]$,
not necessarily homogeneous, suppose we have $f\simeq\eps^{-M}\big(\alpha\prod_{i=1}^{m}(1+\eps a_{i})-\beta\prod_{i=1}^{m}(1+\eps b_{i})\big)$
with $M \geq 1$ 
for some $a_{i},b_{i}\in\IC[\eps][\x]_1$ and $\alpha,\beta\in\IC[\eps]$ such that $\alpha \simeq \beta \not\simeq 0$.
Then $f\in\overline{\Sigma^{[2md 
+ 1]}\wedge^{[d]}\Sigma}$.
\end{corollary}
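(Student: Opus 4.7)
The proof mirrors that of \cref{thm:kumgen-twoproducts}, carrying the scalars $\alpha,\beta$ through the argument; the extra ``$+1$'' accounts for a possibly nonzero constant (degree-zero in $\x$) part of $f$. Set $f_\eps = \eps^{-M}(\alpha A - \beta B)$ with $A = \prod_{i=1}^m(1+\eps a_i)$ and $B = \prod_{i=1}^m(1+\eps b_i)$, and write $f_j$, $f_{\eps,j}$ for the homogeneous degree-$j$ parts in $\x$. Define
\[
E_j := \alpha\,e_j(\ELA) - \beta\,e_j(\ELB), \qquad P_j := \alpha\,p_j(\ELA) - \beta\,p_j(\ELB).
\]
Then $f_{\eps,j} = \eps^{-M+j}E_j$ for all $j \geq 0$ (note $E_0 = \alpha-\beta$). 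Since $f \simeq f_\eps$ is independent of $\eps$, convergence of each graded piece forces $E_j \equiv 0 \bmod \epsmod{M-j}$.

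The pivotal observation is that, although the hypothesis $\alpha \simeq \beta$ only yields $E_0 \equiv 0 \bmod \epsmodsimple$, the convergence of the degree-zero part $f_{\eps,0} = \eps^{-M}E_0$ automatically upgrades this to the much stronger $E_0 \equiv 0 \bmod \epsmod{M}$. With this, the plan is to prove by induction on $j \geq 1$ the congruence
\[
P_j \;\equiv\; (-1)^{j-1}\,j\, E_j \bmod \epsmod{M-j+1},
\]
which (combined with $E_j \equiv 0 \bmod \epsmod{M-j}$) also gives $P_j \equiv 0 \bmod \epsmod{M-j}$. The base case $j=1$ is immediate from $P_1 = E_1$. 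For the inductive step, apply Newton's identity to both $\ELA$ and $\ELB$ and rewrite each cross term via the algebraic identity
\[
\alpha UV - \beta U'V' \;=\; (\alpha U - \beta U')\,V \;+\; \tfrac{\beta}{\alpha}\, U'\bigl(P_i - E_0\, V'\bigr),
\]
which follows from $\alpha(V-V') = P_i - E_0 V'$. Substituting $U=e_{j+1-i}(\ELA)$, $U'=e_{j+1-i}(\ELB)$, $V=p_i(\ELA)$, $V'=p_i(\ELB)$, the three resulting pieces satisfy (a)~$E_{j+1-i}\,p_i(\ELA) \equiv 0 \bmod \epsmod{M-(j+1)+i}$, (b)~$\tfrac{\beta}{\alpha}\,e_{j+1-i}(\ELB)\,P_i \equiv 0 \bmod \epsmod{M-i}$, and (c)~$\tfrac{\beta}{\alpha}\,e_{j+1-i}(\ELB)\,E_0\,p_i(\ELB) \equiv 0 \bmod \epsmod{M}$. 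Since $1 \leq i \leq j$, each of these moduli dominates $\epsmod{M-j}$, so the entire cross-term sum is $\equiv 0 \bmod \epsmod{M-j}$, which is precisely what is needed to propagate the induction.

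Once the congruence is established, $f_j \simeq \tfrac{(-1)^{j-1}}{j}\,\eps^{-M+j}\bigl(\alpha\,p_j(\ELA) - \beta\,p_j(\ELB)\bigr)$ exhibits $f_j$ as a border decomposition by $2m$ $j$-th powers of linear forms (absorbing the prefactors into the linear forms, with the usual rescaling $\eps \mapsto \eps^j$ to guarantee integer $\eps$-exponents), hence $f_j \in \overline{\Sigma^{[2m]}\wedge^{[j]}\Sigma}$. Summing over $j=1,\dots,d$ yields $2md$ summands with exponents at most $d$, and the constant piece $f_0 = \lim_{\eps\to 0}\eps^{-M}(\alpha-\beta) \in \bbC$ contributes the one additional summand $(f_0^{1/d})^d$, giving $f \in \overline{\Sigma^{[2md+1]}\wedge^{[d]}\Sigma}$. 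The main obstacle, compared to \cref{thm:kumgen-twoproducts}, is the appearance of the third (``$E_0$'') piece in the rewritten cross term: handling it cleanly relies on the upgrade of $E_0 \equiv 0 \bmod \epsmodsimple$ to $E_0 \equiv 0 \bmod \epsmod{M}$ coming from convergence of the zero-degree part, which is exactly the strength required to keep the Newton induction going.
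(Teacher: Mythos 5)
Your proof is correct. It rests on the same two pillars as the paper's argument --- the border Newton identities and the crucial observation that convergence of the degree-zero part $\eps^{-M}(\alpha-\beta)$ upgrades $\alpha\simeq\beta$ to $\alpha-\beta\equiv 0 \bmod \epsmod{M}$ --- but it is organized differently. The paper treats \cref{cor:kumgen-twoproducts} as a two-line reduction to \cref{thm:kumgen-twoproducts}: dividing by $\alpha$ and writing $\gamma=\beta/\alpha$, it notes that $e_j(\ELA)-\gamma e_j(\ELB) = (e_j(\ELA)-e_j(\ELB)) + (1-\gamma)e_j(\ELB)$, where the second term, being $\eps^{M}\cdot\eps^{-M}(1-\gamma)\cdot e_j(\ELB)$ up to the prefactor $\eps^{-M+j}$, vanishes in the limit; this puts $f-f_0$ literally into the hypotheses of the theorem. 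You instead rerun the entire Newton induction with the scalars attached, proving the weighted congruences $\alpha p_j(\ELA)-\beta p_j(\ELB)\equiv(-1)^{j-1}j\,(\alpha e_j(\ELA)-\beta e_j(\ELB)) \bmod \epsmod{M-j+1}$ via your identity $\alpha UV-\beta U'V'=(\alpha U-\beta U')V+\tfrac{\beta}{\alpha}U'(P_i-E_0V')$ for the cross terms (which I checked; it is valid, and $\tfrac{1}{\alpha}\in\IC[[\eps]]$ since $\alpha\not\simeq 0$). The paper's route is shorter and avoids touching the induction; yours is self-contained and establishes a slightly more general ``weighted'' border Newton relation that would apply even if one did not already have the unweighted theorem in hand. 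The bookkeeping of the extra $+1$ summand for the constant part $f_0$ and the $\eps\mapsto\eps^j$ rescalings match what the paper does implicitly.
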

\begin{proof}
Let $f_j$ and $f_{\eps, j}$ be the homogeneous parts as in the proof of the preceding Theorem.
Additionally, Let $\alpha_0 = \lim_{\eps\to 0} \alpha$ and $\gamma' = \frac{\beta}{\alpha} \in \IC[[\eps]]$. As mentioned earlier, one can truncate and work with $\gamma \equiv \gamma'~\bmod~\langle \eps^{r}\rangle$, for some large positive integer $r$.
From assumptions of the theorem, $\alpha_0 \neq 0$ and $\gamma \simeq 1$.
We have 
\[
\frac{1}{\alpha_0} f \simeq \frac{1}{\alpha} f \simeq \eps^{-M}\big(\prod_{i=1}^{m}(1+\eps a_{i})-\gamma\prod_{i=1}^{m}(1+\eps b_{i})\big)
\]
By taking degree $0$ part we get $\frac{1}{\alpha_0} f_0 \simeq \frac{1}{\alpha_0} f_{\eps, 0} = \eps^{-M} (1 - \gamma)$, so for $j \geq 1$ we have
\[
\frac{1}{\alpha_0} f_j \simeq  \eps^{-M + j} (e_j(\ELA) - \gamma e_j(\ELB)) = \eps^{-M + j} (e_j(\ELA) - e_j(\ELB)) + \eps^j \frac{f_{\eps, 0}}{\alpha_0} e_j(\ELB) \simeq \eps^{-M + j} (e_j(\ELA) - e_j(\ELB)),
\]
hence
\[
f \simeq f_0 + \alpha_0 \eps^{-M} \big(\prod_{i=1}^{m}(1+\eps a_{i})-\prod_{i=1}^{m}(1+\eps b_{i})\big),
\]
and we reduce to the case considered in \ref{thm:kumgen-twoproducts}.
\end{proof}

We are now ready to prove \ref{thm:rb-deborder}.

\begin{proof}[Proof of~\ref{thm:rb-deborder}]
Since $f \in \overline{RB}_k$, it has an approximate decomposition~\eqref{eq:border-rb}, which we rewrite as
\[
f \simeq \eps^{p} \prod_{i = 1}^d \ell_i - \eps^{p'}\prod_{i = 1}^d \ell'_i
\]
where $\ell_i, \ell'_i \in \bbC[\epsilon][\mathbf{x}]_1$ are not divisible by $\eps$ and $\rank(\ell'_1, \dots \ell'_d) \leq k$ at any $\eps \neq 0$.
Define $\ell_{i0} \in V$ as $\ell_{i0} = \ell_{i}|_{\eps = 0}$ and similarly $\ell'_{i0} = \ell'_{i}|_{\eps = 0}$.
$\ell_{i0}$ and $\ell'_{i0}$ are nonzero and by semicontinuity of rank we have $\rank(\ell'_{10}, \dots \ell'_{d0}) \leq k$.

If $p = p' = 0$, then $f = \prod_{i = 0}^d \ell_{i0} - \prod_{i = 0}^d \ell'_{i0}$.
Similarly, if one of the exponents $p$ and $p'$ is positive, then the corresponding summand tends to $0$ as $\eps \to 0$, and $f$ is a product of linear forms, and if both $p$ and $p'$ are positive, then $f = 0$.
In all these cases we have $f \in RB_k$.

Consider now the case when there are negative exponents.
The convergence of the right hand side of the decomposition implies that $p = p'$ and the lowest degree term $\prod_{i = 0}^d \ell_{i0} - \prod_{i = 0}^d \ell'_{i0}$ is zero.
By unique factorization the sets of linear forms $\ell_{i0}$ and $\ell'_{i0}$ are the same up to scalar multiples, and we can permute and rescale the factors in one of the products so that $\ell_{i0} = \ell'_{i0}$.
Additionally we can assume that $\ell_{10}, \dots, \ell_{r0}$ are linearly independent, where $r = \rank(\ell_{10}, \dots \ell_{d0}) \leq k$.

Since $\ell_{i0}$ for $i \leq r$ are linearly independent, there exists an invertible linear map $A$ such that $\ell_{i0}(A\mathbf{x}) = x_i$ for $i \leq r$.
The linear forms $\ell_{i0}$ lie in the linear span of the first $r$ of them, which means that $\ell_{i0}(A\mathbf{x}) \in \IC[x_1, \dots, x_r]_1$ for all $i$.

Let $M = -p$, $L_i(\mathbf{x}) = \ell_i(A\mathbf{x})$ and $L'_i(\mathbf{x}) = \ell'_i(A\mathbf{x})$.
For the polynomial $g(\mathbf{x}) = f(A\mathbf{x})$ we obtain an approximate decomposition of the following form
\[
g \simeq \eps^{-M} \big(\prod_{i = 1}^d L_i - \prod_{i = 1}^d L'_i\big)
\]
where $L_i, L'_i \in \IC[\eps][\mathbf{x}]_1$ are such that $L_{i0} := L_i|_{\eps = 0} = L'_i|_{\eps = 0}$ are nonzero elements of $\IC[x_1, \dots, x_r]$.

Choose $\gamma_{ij} \in \IC$ for $1 \leq i \leq r$, $0 \leq j \leq d$ so that $\gamma_{i0}, \dots, \gamma_{id}$ are distinct for each $i$ and $L_{i0}(\gamma_{1j_1}, \dots, \gamma_{rj_r}) \neq 0$ for all $i, j_1, \dots, j_r$.
The choice is possible because $L_{k0}$ are nonzero and hence the set of tuples $\gamma$ not satisfying the required conditions is a nontrivial Zariski closed set.
Write 
\begin{align*}
L_i(\gamma_{1j_1}, \dots, \gamma_{rj_r}, x_{r + 1}, \dots, x_n) &= \alpha_i + \eps A_i(x_{r + 1}, \dots, x_n)\\
L'_i(\gamma_{1j_1}, \dots, \gamma_{rj_r}, x_{r + 1}, \dots, x_n) &= \beta_i + \eps B_i(x_{r + 1}, \dots, x_n)
\end{align*}
with $\alpha_i, \beta_i \in \IC[\eps]$ such that $\alpha_i \simeq \beta_i$ and $A_i, B_i \in \IC[\eps][x_{r + 1}, \dots, x_n]_1$.
Set $\alpha = \prod_{i = 1}^d \alpha_i$, $\beta = \prod_{i = 1}^d \beta_i$, $a_i = \frac{A_i}{\alpha_i}$, $b_i = \frac{B_i}{\beta_i}$. Because $\alpha_i|_{\eps = 0} = L_{i0}(\gamma_{1j_1}, \dots, \gamma_{rj_r}) \neq 0$, $a_i$ are well defined in the ring $\IC[[\eps]][\mathbf{x}]$; ditto for $b_i$. As argued earlier, truncating and working with finite precision of $\eps$ suffices, therefore, let $a_i' := a_i \bmod~\langle \eps^r \rangle$, and similarly for $b_i'$, for some large positive integer $r$.
We obtain
\[
g(\gamma_{1j_1}, \dots, \gamma_{rj_r}, x_{r + 1}, \dots, x_n) \simeq \eps^{-M} \big(\alpha \prod_{i = 1}^d (1 + \eps a_i') - \beta \prod_{i = 1}^d (1 + \eps b_i')\big).
\]
By~\ref{cor:kumgen-twoproducts} $g(\gamma_{1j_1}, \dots, \gamma_{rj_r}, \mathbf{x}) \in \overline{\Sigma^{[2d^2 + 1]} \Lambda^{[d]} \Sigma}$.
By~\ref{lem:interpolation-border-waring} $g \in \overline{\Sigma^{[(2d^2 + 1)(d + 1)^{3r}]} \Lambda^{[(r + 1)d]} \Sigma}$, and by~\ref{lem:nonhom-bwr-to-hom} $\bwr(g) \leq (2d^2 + 1)(d + 1)^{3r} = O(d^{3k + 2})$.
Since border Waring rank is invariant under invertible linear transformations, the same is true for $f$.
\end{proof}

As special cases we obtain the following results for product-plus-power and product-plus-two powers.
Note that $RB_1$ consists of polynomials of the form $\prod_{i = 1}^d \ell_i + {\ell'}_1^{d}$, which are exactly the restrictions of $P^{[d]}_{1,1}$.
Similarly, $f \in \overline{RB}_1$ if and only if $f \trianglelefteq P^{[d]}_{1,1}$.
Therefore, \Cref{thm:intro:productpluspower} is a consequence of \Cref{thm:rb-deborder}.

Similarly, the result for the product-plus-two-powers follows for the analysis of $\overline{RB}_2$, because the sum of two powers $x_0^d - x_{d+1}^d$ can be represented as a product of linear forms in two variables $x_0^d - x_{d+1}^d = \prod_{i = 1}^d (x_0 - \zeta^i x_{d+1})$, where $\zeta$ is a primitive $d$-th root of unity. A more careful case-by-case analysis gives the following result.

\begin{theorem}[debordering product-plus-two-powers] \label{thm:product+two-powers}
Let $f \in \IC[x_1,\cdots,x_n]_d$ such that $f \trianglelefteq P_{1,2}^{[d]}$. 
One of the three alternatives is true:
\begin{enumerate}
    \item $f \leq P_{1,2}^{[d]}$, or
    \item $f \leq \prod_{i = 1}^d \,y_i + y_0^{d-1} \cdot y_{d+1}$, or
    \item $\bwr(f) = O(d^8)$.
\end{enumerate} 
\end{theorem}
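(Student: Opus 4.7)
My strategy mirrors the proof of Theorem~\ref{thm:kum-deborder}, but requires a finer analysis of the second summand, which is now a sum of two $d$-th powers rather than a single one. The starting observation is the factorization
\[
y_1^d + y_2^d \;=\; \prod_{k=0}^{d-1}\bigl(y_1 - \zeta^{2k+1} y_2\bigr),
\]
with $\zeta$ a primitive $(2d)$-th root of unity, which realizes $P_{1,2}^{[d]}$ as a restricted binomial whose second product consists of $d$ linear forms in the $2$-dimensional span of $y_1, y_2$. Consequently, any $f \trianglelefteq P_{1,2}^{[d]}$ lies in $\overline{RB}_2$, and I will invoke Theorem~\ref{thm:rb-deborder} with $k=2$ to conclude that either $\bwr(f) = O(d^{3\cdot 2 + 2}) = O(d^8)$, giving alternative (3), or $f \in RB_2$ arising from the non-cancelling regime of that proof.

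In the non-cancelling regime I obtain a decomposition $f = \prod_{i} \ell_{i,0} + F_0$, where $F_0$ is the limit in $S^d V$ of $m_1(\eps)^d + m_2(\eps)^d$ taken from the specific approximation of $P_{1,2}^{[d]}$. Writing $m_j(\eps) = \eps^{q_j}\tilde m_j(\eps)$ with $\tilde m_j(0) \neq 0$, I classify $F_0$ by the pair $(q_1, q_2)$ and the relation between the leading linear forms $\tilde m_{1,0}, \tilde m_{2,0}$. When $q_1 \neq q_2$, or when $q_1 = q_2$ with $\tilde m_{1,0},\tilde m_{2,0}$ linearly independent, or when $q_1 = q_2$ with $\tilde m_{2,0} = c\, \tilde m_{1,0}$ and $c^d \neq -1$, a direct expansion using the factorization above shows that $F_0$ is a sum of at most two $d$-th powers; hence $f \leq \prod y_i + y_1^d + y_2^d = P_{1,2}^{[d]}$, yielding alternative (1).

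The remaining delicate case is $q_1 = q_2$ with $\tilde m_{2,0} = c\, \tilde m_{1,0}$ and $c^d = -1$: then exactly one factor $m_1 - \zeta^{2i_0+1} m_2$ vanishes at $\eps = 0$, and the requirement from Theorem~\ref{thm:rb-deborder} that linear forms not be divisible by $\eps$ forces pulling an extra $\eps^{s}$ out of that single factor, where $s\geq 1$ is the order of vanishing. Writing $\tilde m_1 - \zeta^{2i_0+1}\tilde m_2 = \eps^{s} w$ with $w(0) \neq 0$ and tracking the total $\eps$-exponent $p' = dq_1 + s$, the non-cancelling condition $p' \geq 0$ forces either $s > d|q_1|$ (so $F_0 = 0$ and $f$ reduces to a product of linear forms, giving alternative (1) trivially) or $s = d|q_1|$, where a direct computation yields $F_0 = C\, w_0\, \tilde m_{1,0}^{d-1}$ for a nonzero constant $C$. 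This is a polynomial of the form $L^{d-1} L'$, so $f \leq \prod y_i + y_0^{d-1} y_{d+1}$, giving alternative (2); the regime $s < d|q_1|$ falls into the cancelling case of Theorem~\ref{thm:rb-deborder} and so reduces to alternative (3). The main obstacle will be performing this re-normalization carefully, verifying that the exponent bookkeeping produces exactly the shape $L^{d-1}L'$, and confirming that no other non-standard limit can appear in this sub-case.
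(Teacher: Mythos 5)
Your proposal is correct and follows essentially the same route as the paper: reduce to the restricted-binomial class $\overline{RB}_2$ via \Cref{thm:rb-deborder} (giving the $O(d^{3\cdot 2+2})=O(d^8)$ bound in the cancelling regime), and in the regime where both summands converge individually, classify the limit of the restricted summand, which has border Waring rank at most $2$. The only difference is that where the paper simply invokes the known classification of border-Waring-rank-$2$ forms (\cite{landsberg2010}, cf.\ \Cref{lem:borderwaring2}) to conclude that this limit is either a sum of two $d$-th powers or of the form $a^{d-1}b$, you re-derive that dichotomy by hand through the $\eps$-exponent analysis of the factorization $y_1^d+y_2^d=\prod_k(y_1-\zeta^{2k+1}y_2)$ — a correct but slightly longer way to reach the same two alternatives.
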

\begin{proof}
The proof mostly follows the proof of~\ref{thm:rb-deborder}. For the completeness, we give a detailed proof.

Since, $f \trianglelefteq P_{1,2}^{[d]}$, by definition,~$f = \lim_{\varepsilon \to 0} \left(A + B + C\right)$, where $A := \prod_{i = 1}^d \ell_i(\varepsilon)$, $B := \ell'(\varepsilon)^d$, and $C := \ell''(\varepsilon)^d$. There are a few cases to analyze.

\begin{enumerate}
    \item[(I)] If individually, $\lim_{\varepsilon \to 0} A$, $\lim_{\varepsilon \to 0} B$, and $\lim_{\varepsilon \to 0} C$ exist, then $f \leq P_{1,2}^{[d]}$. 
    \item[(II)] If $g:= \lim_{\varepsilon \to 0} (A+B)$ and $h:= \lim_{\varepsilon \to 0} C$ exist, then note that $g\trianglelefteq P_{1,1}^{[d]}$, then by~\cref{thm:intro:productpluspower}, we have
\begin{enumerate}
    \item[(i)] either $g\leq P_{1,1}^{[d]}$ 
    \item[(ii)] or $\bwr(g) \leq O(d^5)$.
\end{enumerate}
Since, $\bwr(h)=1$, and $f = g+h$, the theorem follows.
\item[(III)] If $g:= \lim_{\varepsilon \to 0} A$ and $h:= \lim_{\varepsilon \to 0} (B+C)$ exist, then note that $g\leq P_{1,0}^{[d]}$, and $\bwr(h)=2$. It is known that this either $\WR(h) = 2$ or $h = \hat{\ell}_1^{d-1} \hat{\ell}_2$ for two linear forms $\hat{\ell}_1,\hat{\ell}_2$, see, e.g., \cite{landsberg2010}. Therefore, $f= g+h$ corresponds to either (1) or (2).
\item[(IV)] If none of (I)--(III) is true, then
note that $f$ can be rewritten as
\[\lim_{\varepsilon \to 0} \left(\prod_{i=1}^d \ell_i(\varepsilon) + \prod_{i=1}^d (\ell'(\varepsilon) - \zeta^{2i-1} \ell''(\varepsilon))\right),\]
where $\zeta$ is the $(2d)$-th primitive root of unity. Further, it is easy to see that $\rank\left(\ell'(\varepsilon) - \zeta \ell''(\varepsilon), \cdots, \ell'(\varepsilon) - \zeta^{2d-1} \ell''(\varepsilon)\right) \le 2$, for every $\varepsilon \ne 0$. Therefore, by definition~$f \in \overline{RB}_2$. Using, \cref{thm:rb-deborder}, we get that $\bwr(f) \le O(d^8)$. 
\end{enumerate}
This finishes the proof.
\end{proof}

\subsection{Lower Bounds}\label{subsec:lb}

In this section, we prove several exponential separations between related polynomials contained in the affine closure of binomials.

\begin{lemma}\label{lem:irreducible-1}
The polynomial $P^{[d]}_{1,2} = \prod_{i \in [d]} x_i + x_{d+1}^d + x_{d+2}^d$ cannot be written as a product of linear forms.
\end{lemma}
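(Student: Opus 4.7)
The plan is to proceed by contradiction: suppose $P_{1,2}^{[d]} = \ell_1 \cdots \ell_d$ is a product of homogeneous linear forms in $\IC[x_1,\ldots,x_{d+2}]$. The argument will pin down the shape of each $\ell_i$ by a restriction, and then derive a contradiction by comparing coefficients of specific monomials. (Implicitly we assume $d \geq 2$; for $d=1$ the polynomial is itself linear and there is nothing to prove.)

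\textbf{Step 1: Normalize the factors.} Restrict both sides to the subspace $x_{d+1}=x_{d+2}=0$. The right-hand side becomes $\prod_{i=1}^d \ell_i|_{x_{d+1}=x_{d+2}=0}$, a product of linear forms in $\IC[x_1,\ldots,x_d]$, while the left-hand side becomes $\prod_{i=1}^d x_i$. By unique factorization in $\IC[x_1,\ldots,x_d]$, after permuting the $\ell_i$ and absorbing nonzero scalars (whose overall product must be $1$), we may assume
\[
\ell_i = x_i + \alpha_i x_{d+1} + \beta_i x_{d+2}, \qquad i=1,\ldots,d,
\]
for some scalars $\alpha_i,\beta_i \in \IC$.

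\textbf{Step 2: Force $\alpha_i=\beta_i=0$.} For each $i \in [d]$, consider the coefficient of the monomial $x_{d+1} \cdot \prod_{j \neq i} x_j$ in $\ell_1 \cdots \ell_d$. The only way to produce this monomial when expanding the product is to choose $\alpha_i x_{d+1}$ from $\ell_i$ and $x_j$ from $\ell_j$ for all $j \neq i$, contributing $\alpha_i$. On the other hand, the monomial $x_{d+1} \prod_{j\neq i} x_j$ does not appear in $P_{1,2}^{[d]} = \prod_i x_i + x_{d+1}^d + x_{d+2}^d$, so its coefficient is $0$. Hence $\alpha_i = 0$. Symmetrically, comparing the coefficient of $x_{d+2} \prod_{j\neq i} x_j$ gives $\beta_i = 0$.

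\textbf{Step 3: Conclude.} With $\alpha_i = \beta_i = 0$ for all $i$, we have $\ell_1 \cdots \ell_d = \prod_{i=1}^d x_i$, which differs from $P_{1,2}^{[d]}$ by the nonzero terms $x_{d+1}^d + x_{d+2}^d$. This contradiction establishes that $P_{1,2}^{[d]}$ admits no factorization into linear forms. No step is genuinely difficult: the main (minor) obstacle is correctly accounting for the scalar ambiguity in the normalization of Step~1, which is handled by the simple rescaling/permutation argument above.
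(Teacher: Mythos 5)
Your proof is correct, but it takes a different route from the paper. You argue by contradiction: restrict to $x_{d+1}=x_{d+2}=0$ to normalize the putative factors (via unique factorization in $\IC[x_1,\ldots,x_d]$), and then kill the cross-terms $\alpha_i,\beta_i$ by comparing the coefficient of $x_{d+1}\prod_{j\neq i}x_j$ (and symmetrically $x_{d+2}\prod_{j\neq i}x_j$), reaching a contradiction. This is a self-contained, elementary computation. The paper instead invokes the notion of essential variables (\cref{sec:essvardebord}): a product of $d$ linear forms involves, after a linear change of coordinates, at most $d$ variables, whereas the first-order partials of $P^{[d]}_{1,2}$ --- namely $\prod_{j\neq i}x_j$ for $i\le d$, $dx_{d+1}^{d-1}$, $dx_{d+2}^{d-1}$ --- are $d+2$ linearly independent polynomials, so $P^{[d]}_{1,2}$ has $d+2$ essential variables. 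The paper's argument is a one-line dimension count and transfers verbatim to \cref{lem:irreducible-2}; your argument is more hands-on and does not need the essential-variables machinery, at the cost of some bookkeeping (permutation/rescaling in Step~1, the case distinction that no restricted factor vanishes, which you leave implicit). Both are valid; the paper's is the faster path given that it already has $\Ness$ set up.
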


\begin{proof}
For every homogeneous polynomial $f$ of degree $d$ which is a product of linear forms, the space of first order partial derivatives has dimension at most $d$. But $\prod_{i \in [d]} x_i + x_{d+1}^d + x_{d+2}^d$ clearly has $d+2$ linearly independent partial derivatives.
\end{proof}

\begin{lemma}
\label{lem:irreducible-2}
The polynomial $P^{[d]}_{2,0} = \prod_{i=1}^d x_i + \prod_{i=d+1}^{2d} x_i$ cannot be written as a product of linear forms.
\end{lemma}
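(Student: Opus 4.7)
The plan is to mirror the proof of the preceding \Cref{lem:irreducible-1}: any homogeneous polynomial of degree $d$ that factors as a product of linear forms is a product of exactly $d$ homogeneous linear forms, and the span of those forms has dimension at most $d$, so by the equivalence between the number of essential variables and the dimension of the linear span of first order partial derivatives (as recalled in \Cref{sec:essvardebord}), such a polynomial has $\Ness(\cdot) \leq d$. It therefore suffices to compute $\Ness(P_{2,0}^{[d]})$ and observe that it exceeds $d$ whenever $d \geq 2$.

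To compute the number of essential variables, I would exhibit the list of first order partial derivatives and show they are linearly independent. Concretely,
\[
\frac{\partial P_{2,0}^{[d]}}{\partial x_i} = \prod_{\substack{j \in [d] \\ j \neq i}} x_j \quad (1 \leq i \leq d), \qquad \frac{\partial P_{2,0}^{[d]}}{\partial x_i} = \prod_{\substack{d < j \leq 2d \\ j \neq i}} x_j \quad (d < i \leq 2d).
\]
Each of these $2d$ derivatives is a distinct squarefree monomial; they involve disjoint variable sets (some use only $x_1,\dots,x_d$, the others only $x_{d+1},\dots,x_{2d}$), so they are linearly independent in $S^{d-1}V$. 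Hence $\Ness(P_{2,0}^{[d]}) = 2d$.

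Putting these together, if $P_{2,0}^{[d]}$ were a product of linear forms, we would have $2d = \Ness(P_{2,0}^{[d]}) \leq d$, which is impossible for $d \geq 2$ (and the case $d = 1$ is trivial since $P_{2,0}^{[1]} = x_1 + x_2$ is already a single linear form, but is of degree $1$, so the statement of the lemma is vacuously meaningful only for $d \geq 2$). There is no real obstacle here; the only minor care needed is verifying linear independence of the partials, which is immediate from the disjointness of the variable supports of the two groups of monomials.
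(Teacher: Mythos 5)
Your proposal is correct and is exactly the route the paper intends: the paper proves \Cref{lem:irreducible-1} by bounding essential variables of a product of linear forms by the degree $d$ and noting the given polynomial has more, and its proof of \Cref{lem:irreducible-2} is literally ``a proof similar to that of \Cref{lem:irreducible-1}.'' You merely make explicit the computation of $\Ness(P_{2,0}^{[d]}) = 2d$ via the linear independence of the $2d$ partial derivatives, and correctly flag that this (and the whole lemma) requires $d \geq 2$.
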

\begin{proof}
It easily follows from a proof similar to that of \cref{lem:irreducible-1}.
\end{proof}

\begin{lemma}\label{lem:topfaninlowerboundf}
For the polynomial $P^{[d]}_{1,2}=\prod_{i\in[d]}x_{i}+x_{d+1}^{d}+x_{d+2}^{d}$, we have $\bwr(f)\geq 2^{\Omega(d)}$.
\end{lemma}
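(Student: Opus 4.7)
The plan is to use the classical \emph{method of partial derivatives} (the middle catalecticant), which gives the standard lower bound $\bwr(f) \ge \dim\partial^{=k} f$ for every $k$, where $\partial^{=k} f \subseteq S^{d-k}V$ denotes the linear span of all order-$k$ partial derivatives of $f$. This inequality holds because for a rank-one polynomial $\ell^d$, every order-$k$ partial derivative is a scalar multiple of $\ell^{d-k}$, so for an approximate decomposition $f = \lim_{\eps \to 0}\sum_{i=1}^{r}\ell_i(\eps)^d$ the rank of the partial-derivative map at $\eps \neq 0$ is at most $r$; upper semicontinuity of rank then yields $\dim \partial^{=k} f \le r$.

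The proof reduces therefore to estimating $\dim \partial^{=k} P_{1,2}^{[d]}$ for a well-chosen $k$. I would take $k = \lfloor d/2 \rfloor$. The space $\partial^{=k}\bigl(\textstyle\prod_{i \in [d]} x_i\bigr)$ is spanned by the squarefree monomials of degree $d-k$ in $x_1,\dots,x_d$, and these are linearly independent, giving dimension exactly $\binom{d}{k}$. The spaces $\partial^{=k}(x_{d+1}^d)$ and $\partial^{=k}(x_{d+2}^d)$ are each one-dimensional, spanned respectively by $x_{d+1}^{d-k}$ and $x_{d+2}^{d-k}$. Since these three subspaces involve disjoint sets of variables, they are in direct sum inside $S^{d-k} V$, so
\[
\dim \partial^{=k} P_{1,2}^{[d]} \;=\; \binom{d}{k} + 2.
\]

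Combining with the catalecticant bound and applying Stirling's estimate $\binom{d}{\lfloor d/2 \rfloor} = \Theta(2^d/\sqrt{d})$ yields
\[
\bwr\bigl(P_{1,2}^{[d]}\bigr) \;\ge\; \binom{d}{\lfloor d/2\rfloor} + 2 \;=\; 2^{\Omega(d)},
\]
as required. There is no serious obstacle: the partial-derivative (catalecticant) method for border Waring rank is standard, and the counting of squarefree monomials is elementary. The only minor care is to observe the direct-sum decomposition of the three summand derivative spaces, which is immediate from the fact that they are supported on pairwise disjoint variable sets $\{x_1,\dots,x_d\}$, $\{x_{d+1}\}$, and $\{x_{d+2}\}$.
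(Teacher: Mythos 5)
Your proof is correct and uses essentially the same tool as the paper, namely the partial-derivative (catalecticant) lower bound for border Waring rank. The paper takes a tiny shortcut: it first sets $x_{d+1}=x_{d+2}=0$ to reduce to $\bwr(x_1\cdots x_d)\ge\binom{d}{\lceil d/2\rceil}$, which is a standard cited fact, whereas you compute the middle catalecticant of $P^{[d]}_{1,2}$ directly and pick up the negligible extra $+2$; both routes are one line and give the same $2^{\Omega(d)}$ conclusion. (Minor terminology point: rank of a matrix is \emph{lower} semicontinuous — the set $\{\rank\le r\}$ is closed — which is the property you are invoking.)
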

\begin{proof}
Evaluating $x_{d+1}=x_{d+2}=0$, we obtain
\[
\bwr(f) \geq \bwr(x_1 \cdots x_d) \geq \binom{d}{\lceil d/2 \rceil}
\]
where the second inequality follows computing the dimension of the space of partial derivatives of order $\lfloor d/2\rfloor$, see, e.g.,~\cite[Prop.\,11.6]{landsberg2010}.
\end{proof}

For polynomials $f$ and $g$ (not necessarily homogeneous) we write 
$f\leq_{\textup{aff}}g$
if there exists an affine linear map $A$ with $f = g\circ A$.
We write $f\trianglelefteq_{\textup{aff}}g$
if there exist affine linear maps $A_\eps$ with $f = \lim_{\eps \to 0} g \circ A_\eps$.

\begin{theorem}[First exp.~gap theorem]\label{thm:lowerboundprodpluspower}
If  $P_{1,2}^{[d]} \trianglelefteq_{\textup{aff}}  P^{[e]}_{1,1}$, then $e \ge \exp(d)$.
\end{theorem}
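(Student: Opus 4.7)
The plan is to reduce the affine degeneration to a homogeneous one via padding, apply the de-bordering dichotomy of Theorem~\ref{thm:kum-deborder}, and combine it with the exponential lower bound on $\bwr(P^{[d]}_{1,2})$ from Lemma~\ref{lem:topfaninlowerboundf}.

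First, I would invoke the standard padding equivalence: for homogeneous $f,g$ of degrees $d\le e$, $f\trianglelefteq_{\textup{aff}} g$ iff $x_0^{e-d}\cdot f\trianglelefteq g$ (lift each affine map $x\mapsto B_\eps x+c_\eps$ to the linear map $(x_0,x)\mapsto B_\eps x+c_\eps x_0$ and use homogeneity of $g$ to separate homogeneous components). Applied here, $F:=x_0^{e-d}\cdot P^{[d]}_{1,2}\trianglelefteq P^{[e]}_{1,1}$, so Theorem~\ref{thm:kum-deborder} gives the dichotomy: either (a) $F\le P^{[e]}_{1,1}$, or (b) $\bwr(F)=O(e^5)$.

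In branch (b), I would take a decomposition $F\simeq\sum_{i=1}^r \ell_i(\eps,x_0,x)^e$ with $r=O(e^5)$, specialize $x_0=1$ to obtain an approximate $\Sigma^{[r]}\wedge^{[e]}\Sigma$-expression for $P^{[d]}_{1,2}$, and then apply Lemma~\ref{lem:nonhom-bwr-to-hom} to its degree-$d$ homogeneous component to get $\bwr(P^{[d]}_{1,2})\le r=O(e^5)$. Combined with $\bwr(P^{[d]}_{1,2})\ge 2^{\Omega(d)}$ from Lemma~\ref{lem:topfaninlowerboundf}, this forces $e\ge 2^{\Omega(d)}$, which is the desired bound.

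The main obstacle will be excluding branch (a). Writing $F=\prod_{i=1}^e L_i+L_0^e$ with $L_i,L_0$ linear in $x_0,x_1,\dots,x_{d+2}$, I would argue via essential variables. If $e=d$, the RHS involves only $d+1$ linear forms, giving $\Ness(F)\le d+1$ and contradicting $\Ness(P^{[d]}_{1,2})=d+2$. If $e>d$, I specialize $x_0=1$ and write $L_i|_{x_0=1}=\alpha_i+\ell_i$, $L_0|_{x_0=1}=\alpha_0+\ell_0$, so that
\[
P^{[d]}_{1,2}=\prod_{i=1}^e(\alpha_i+\ell_i)+(\alpha_0+\ell_0)^e.
\]
Extracting the degree-$e$ part in $x$ gives $\prod_i\ell_i=-\ell_0^e$, and unique factorization yields one of two subcases. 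Either all $\ell_i$ are scalar multiples of a single nonzero $\ell_0$, in which case the RHS lies in $\bbC[\ell_0]$ and $\Ness\le 1$; or $\ell_0=0$ together with some $\ell_i$ vanishing, in which case the nonvanishing set $S:=\{i:\ell_i\neq 0\}$ must satisfy $|S|=d$ (with $\prod_{i\notin S}\alpha_i\neq 0$) to match degrees, so the RHS depends on at most $d$ linear forms and $\Ness\le d$. Both contradict $\Ness(P^{[d]}_{1,2})=d+2$, ruling out branch (a). The degree/essential-variable bookkeeping in this subcase analysis is the delicate step; once completed, the theorem follows immediately from branch (b).
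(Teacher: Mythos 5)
Your proof is correct, and its skeleton matches the paper's: homogenize the affine degeneration to $x_0^{e-d}P^{[d]}_{1,2}\trianglelefteq P^{[e]}_{1,1}$, invoke the dichotomy of \cref{thm:kum-deborder}, and in the ``small border Waring rank'' branch dehomogenize and play $\bwr(P^{[d]}_{1,2})\le O(e^5)$ against the $2^{\Omega(d)}$ lower bound of \cref{lem:topfaninlowerboundf}. Where you genuinely diverge is in excluding the restriction branch $x_0^{e-d}P^{[d]}_{1,2}=\prod_{i\in[e]}L_i+L_0^e$. The paper handles $e=d+1,\dots$ by a case analysis on which linear forms are divisible by (or contain) $x_0$, repeatedly substituting $x_0\mapsto-\hat\ell/c$ to kill a factor and then invoking unique factorization together with \cref{lem:irreducible-1} (that $P^{[d]}_{1,2}$ is not a product of linear forms). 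You instead set $x_0=1$ once, extract the top-degree-$e$ homogeneous component to get $\prod_i\ell_i=-\ell_0^e$, and let unique factorization split into the two subcases ($\ell_0\ne 0$ forcing all $\ell_i$ proportional to $\ell_0$, versus $\ell_0=0$ forcing $|S|=d$ surviving factors), each killed by the count $\Ness(P^{[d]}_{1,2})=d+2$. Your route is shorter and replaces the paper's irreducibility lemma by a single essential-variables computation; the paper's substitution argument is more hands-on but generalizes more directly to the two-powers case treated in \cref{thm:lowerboundproductplustwo}, where the ``restricted'' summand is no longer a single power and a pure top-degree extraction is less immediate. The only bookkeeping worth spelling out in your Case 2 is that $\prod_{i\notin S}\alpha_i\ne 0$ (otherwise the right-hand side is the constant $\alpha_0^e$) and that the top-degree part $\prod_{i\in S}\ell_i$ is a product of nonzero linear forms and hence nonzero, so the degree really is $|S|$; with that, the argument is complete.
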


We remark that by Kumar's result \cite{kum20}, we know that there exists $e \le \exp(d)$, such that $P_{1,2}^{[d]} \trianglelefteq_{\textup{aff}}  P^{[e]}_{1,1}$. Therefore, \cref{thm:lowerboundprodpluspower} is {\em optimal}.

\begin{proof}[Proof of \cref{thm:lowerboundprodpluspower}]
Let~$P_{1,2}^{[d]} \trianglelefteq_{\textup{aff}}  P^{[e]}_{1,1}$. That means that there are affine linear forms $L_i \in \bbC[\eps^{\pm 1}][\x]$ such that $\prod_{i \in [d]} x_i + x_{d+1}^d + x_{d+2}^d + \epsilon \cdot S(\x,\epsilon)= \prod_{i \in [e]} L_i + L_{e+1}^e$. 
By substituting, $x_i \mapsto x_i/x_0$, and multiplying both sides by $x_0^e$, we get that $x_0^{e-d} \cdot P_{1,2}^{[d]} + \epsilon \cdot \hat{S}= \prod_{i \in [e]} \hat{L}_i + \hat{L}_{e+1}^e$, for homogeneous linear forms $\hat{L}_i$, or, equivalently, $x_0^{e-d} \cdot P_{1,2}^{[d]} \trianglelefteq P^{[e]}_{1,1}$. 

By \ref{thm:intro:productpluspower}, we know that $x_{0}^{e-d}\cdot P_{1,2}^{[d]} \trianglelefteq P^{[e]}_{1,1}$ implies either (i) $x_{0}^{e-d}\cdot P_{1,2}^{[d]} = \prod_{i \in [e]} \ell_i + \ell_0^e$, for some linear forms $\ell_i \in \IC[\x]$, or (ii) $\bwr(x_{0}^{e-d}\cdot P_{1,2}^{[d]})= O(e^5)$. We show that (i) is an impossibility while (ii) can happen only when $e \ge \exp(d)$. 

\medskip
{\bf \noindent Proof of Part (ii):}  Fix a random $x_0=\alpha \in \IC$. Note that, this implies that  $P_{1,2}^{[d]}+\epsilon g=\sum_{i\in[k]}\ell_{i}^{e}$ for some
affine forms $\hat{\ell}_{i}\in\bbC[\eps^{\pm 1}][\x]$ and $g\in\IC[\epsilon][\x]$ with $k\in O(e^5)$. Since $P_{1,2}^{[d]}$ is homogeneous, this also implies that $\bwr(P_{1,2}^{[d]})\leq k$.
But then \ref{lem:topfaninlowerboundf} implies that $k\geq 2^{\Omega(d)}$, which in turn implies that $e\geq 2^{\Omega(d)}$.

\medskip
{\bf \noindent Proof of Part (i):} Let $x_0^{e-d} \cdot P_{1,2}^{[d]} = \prod_{i \in [e]} \ell_i + \ell_0^e$. The space of first order partials of the LHS has dimension at least $d+2$, while the one of the RHS has dimension at most $e+1$; since trivially $\prod_{i \in T} \ell_i$, for $T \subset [e]$, such that $|T|=e-1$, and $\ell_0^{e-1}$ certainly span the space of single partial derivatives. Therefore, $e \ge d+1$.  This will be important since we will use the fact that $e-d \ge 1$, in the below.

Further, we can assume that $x_0 \nmid \ell_0$. Otherwise, say $\ell_0 = c \cdot x_0$, for some $c \in \IC$, which implies that $x_0^{e-d} \mid \prod_{i \in [e]} \ell_i$. Hence, without loss of generality, we can assume that $\ell_i = x_0$, for $i \in [e-d]$ (we are assuming constants to be $1$, because we can always rescale and push the constants to the other linear forms). Therefore, RHS is divisible by $x_0^{e-d}$. By dividing it out and renaming the linear forms appropriately, we get 
\[ 
P_{1,2}^{[d]}\;=\; \prod_{i \in [d]} \hat{\ell}_i + c x_0^d\;,
\]
where $\hat{\ell}_i \in \IC[\x]$. Further, we can put $x_0=0$. Note that, $x_0 \nmid \hat{\ell}_i$, for any $i$, since otherwise $x_0$ divides RHS, but it doesn't divide the LHS. After substituting $x_0=0$, we get that 
\[ 
P_{1,2}^{[d]} \;=\; \prod_{i \in [d]} \tilde{\ell}_i\;,
\]
where $\IC[x_1,\hdots,x_{d+2}] \ni \tilde{\ell}_i = \hat{\ell}_i \rvert_{x_0=0} \ne 0$.  From \ref{lem:irreducible-1}, it follows that this is not possible. A similar argument shows that $x_0 \nmid \ell_i$, for any $i \in [d]$; because otherwise that implies $x_0 \mid \ell_0$, and hence the above argument shows a contradiction.

Therefore, we assume that $x_0 \nmid \ell_i$, for $i \in [0,d]$. Now, there are two cases -- (i) $x_0$ appears in $\ell_0$, (ii) $x_0$ does not appear in $\ell_0$. 

If $x_0$ appears in $\ell_0$, then say $\ell_0 = c_0 x_0 + \hat{\ell}_0$, for some $c_0 \neq 0$. Note that $\hat{\ell}_0 \in \IC[x_1,\hdots,x_{d+2}]_1$ is non-zero, since we assume that $x_0 \nmid \ell_0$. Substitute $x_0 = - \hat{\ell}_0/c_0$ (so that $\ell_0$ vanishes). This implies:
\[ 
(- \hat{\ell}_0/c_0)^{e-d} \cdot P_{1,2}^{[d]} \;=\; \prod_{i \in [e]}\,\hat{\ell}_i\;,
\] 
where $\hat{\ell}_i = \ell_i\rvert_{x_0=- \hat{\ell}_0/c_0}$. Since LHS is non-zero, so is each $\hat{\ell}_i$. Since, everything is homogeneous, and we have unique factorization, the above implies that up to renaming, $P_{1,2}^{[d]}= c \cdot \prod_{i \in [d]} \hat{\ell}_i$, which is a contradiction by \ref{lem:irreducible-1}.

If $x_0$ does not appear in $\ell_0$, then there must exist an $i \in [e]$ such that $x_0$ appears in $\ell_i$, otherwise RHS is $x_0$-free which is trivially a contradiction. We also know that $x_0$ cannot divide $\ell_i$, by our assumption. So, say $\ell_i = c_i x_0 + \hat{\ell}_i$, where $\hat{\ell}_i$ is $x_0$-free, and $c_i \in \IC$ is a nonzero element. Substitute $x_0 = -\hat{\ell}_i/c_i$, so that $\ell_i$ vanishes. Since $\ell_0$ is $x_0$-free, we immediately get that
\[ 
(- \hat{\ell}_0/c_0)^{e-d} \cdot P_{1,2}^{[d]} \;=\; \ell_0^e\;.
\]
Again, by unique factorization, we get that $P_{1,2}^{[d]} = c \cdot \ell_0^d$, for some $c \in \IC$, which is a contradiction by \ref{lem:irreducible-1}. This finishes the proof.
\end{proof}

\begin{theorem}[Second exp.~gap theorem]\label{thm:lowerboundproductplustwo}
If $P_{2,0}^{[d]} \trianglelefteq_{\textup{aff}}  P_{1,2}^{[e]}$, then $e \ge \exp(d)$.
\end{theorem}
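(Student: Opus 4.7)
The approach mirrors that of Theorem~\ref{thm:lowerboundprodpluspower}. Since affine degenerations cannot raise degree, $e \geq d$. Applying the standard padding substitution $x_i \mapsto x_i/x_0$ and clearing denominators by multiplying with $x_0^e$, the affine border condition $P_{2,0}^{[d]} \trianglelefteq_{\textup{aff}} P_{1,2}^{[e]}$ is converted into the homogeneous border condition
\[
f \;:=\; x_0^{e-d}\, P_{2,0}^{[d]} \;\trianglelefteq\; P_{1,2}^{[e]},
\]
where $f$ is homogeneous of degree $e$ in $2d+1$ variables. A direct computation shows that $f$ has $2d+1$ essential variables when $e > d$, and $2d$ essential variables when $e=d$; in either case at least $2d$.

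Now I will apply Theorem~\ref{thm:product+two-powers} to $f$, which yields three alternatives: (1) $f \leq P_{1,2}^{[e]}$, i.e.\ $f = \prod_{i=1}^e \ell_i + \ell_0^e + \hat\ell_0^e$; (2) $f \leq \prod_{i=1}^e y_i + y_0^{e-1}y_{e+1}$, i.e.\ $f = \prod_{i=1}^e \ell_i + \ell_0^{e-1}\ell_{e+1}$; or (3) $\bwr(f) = O(e^8)$.

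For alternative (3), I will invoke the standard partial-derivative lower bound $\bwr(f) \geq \dim\langle \partial^{=k} f \rangle$, valid for any $k$ (it follows by applying $\partial^{=k}$ to a Waring rank decomposition $\sum_i \ell_i^e$ and noting the result lives in the $r$-dimensional span of $\{\ell_i^{e-k}\}$; the bound passes to the closure via Grassmannian semicontinuity). Taking $k = \lfloor d/2 \rfloor$ and differentiating with respect to all $|S|$-subsets $S \subseteq \{1,\dots,d\}$ produces $\binom{d}{\lfloor d/2\rfloor}$ linearly independent polynomials $x_0^{e-d}\cdot \prod_{i\in \{1,\dots,d\}\setminus S} x_i$ (the $x_{d+1}\cdots x_{2d}$ term contributes nothing to these derivatives). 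Hence $O(e^8) \geq \binom{d}{\lfloor d/2\rfloor} = 2^{\Omega(d)}$, forcing $e = 2^{\Omega(d)}$ as desired.

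For alternatives (1) and (2), write $f = A + B$ where $A = \prod_{i=1}^e \ell_i$ is a product of linear forms and $B$ has at most two essential variables (spanned by $\ell_0,\hat\ell_0$ in case (1), and by $\ell_0,\ell_{e+1}$ in case (2)). I will derive a contradiction for $d \geq 2$ via a uniform essential-variables argument. If any $\ell_i$ is identically zero, then $A=0$ and $f = B$ has at most $2$ essential variables, contradicting that $f$ has at least $2d \geq 4$ essential variables. Otherwise, pick any nonzero $\ell_i$ and restrict to the hyperplane $H=\{\ell_i=0\}$; since $\ell_i \mid A$, we obtain $f|_H = B|_H$ with at most $2$ essential variables, while restriction to a hyperplane decreases the number of essential variables by at most $1$, so $f|_H$ retains at least $2d-1 \geq 3$ essential variables, a contradiction. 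Thus alternatives (1) and (2) are both impossible for $d \geq 2$, and only alternative (3) remains, yielding the claimed exponential bound. The main technical point is the verification that cases (1) and (2) are ruled out uniformly by the essential-variables mechanism; the rest reduces to standard partial-derivative estimates.
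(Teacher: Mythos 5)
Your reduction to the homogeneous border condition $x_0^{e-d}P_{2,0}^{[d]} \trianglelefteq P_{1,2}^{[e]}$ and your treatment of alternative~(3) via the partial-derivative dimension bound are both correct and agree with the paper. The divergence is in your handling of alternatives~(1) and~(2), where you try to replace the paper's unique-factorization case analysis with a single ``uniform essential-variables'' argument. That argument has a genuine gap.

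The key claim you invoke --- that restriction to a hyperplane decreases the number of essential variables by at most one --- is false. A small counterexample: $g = x_1x_2 + x_3^2$ has $\Ness(g)=3$, but $g|_{\{x_2=0\}}=x_3^2$ has $\Ness=1$, a drop of two. Worse, the counterexample is lurking directly inside your setting: for $f = x_0^{e-d}\,(x_1\cdots x_d + x_{d+1}\cdots x_{2d})$ with $e>d$, the restriction $f|_{\{x_0=0\}}=0$ has zero essential variables, while $\Ness(f)=2d+1$. This is precisely the dangerous case for your argument, since some $\ell_i$ in the product $A=\prod_{i=1}^e\ell_i$ could well be a nonzero scalar multiple of $x_0$; your ``pick any nonzero $\ell_i$'' step then lands on $H=\{x_0=0\}$, where $f|_H = 0 = B|_H$, and there is no contradiction (both sides have zero essential variables). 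Even after one rules out $\ell_i \parallel x_0$, showing that $\Ness(f|_{\{\ell_i=0\}})\geq 3$ requires using the specific multiplicative structure of $f$ (it is $x_0^{e-d}$ times a sum of two squarefree products), not merely the count $\Ness(f)\geq 2d$. That structural information is exactly what the paper extracts via unique factorization in its cases (a) and (b); it cannot be recovered from the value of $\Ness$ alone, because sums $\prod\ell_i + (\text{rank-}\leq\!2\text{ polynomial})$ with $2d+1$ essential variables do exist in general --- the contradiction comes from the particular identity with $x_0^{e-d}P_{2,0}^{[d]}$ on the left, not from dimension counting. You would need to restore a version of that factorization argument to close the gap.
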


\begin{proof}
Let~$P_{2,0}^{[d]} \trianglelefteq_{\textup{aff}}  P_{1,2}^{[e]}$. A similar formulation as above (in the previous theorem) gives us that
$x_{0}^{e-d}\cdot P_{2,0}^{[d]}  \trianglelefteq P^{[e]}_{1,2}$.
By \ref{thm:product+two-powers}, we know that $x_{0}^{e-d}\cdot P_{2,0}^{[d]} \trianglelefteq  P_{1,2}^{[e]}$ implies -- either (i) $x_{0}^{e-d}\cdot P_{2,0}^{[d]} = g+h$, where $g=\prod_{i \in [e]} \ell_i$, for some linear forms $\ell_i \in \IC[\x]_1$, and $\bwr(h)\le 2$, or (ii) $\bwr(x_{0}^{e-d}\cdot P_{2,0}^{[d]})= O(e^8)$. Similarly, as before, we  show that (i) is an impossibility while (ii) can happen only when $e \ge \exp(d)$. Part (ii) proof is exactly to the argument in the proof of \ref{thm:lowerboundprodpluspower}.

To prove the Part (i), there are two cases -- (a) $h= \ell_0^{e} + \ell_{e+1}^{e}$, for $\ell_i \in \IC[\x]_1$, or, (b) $h= \ell_0^{e-1} \cdot \ell_{e+1}$. 

\medskip
{\bf \noindent Case (a):} Let $x_{0}^{e-d}\cdot P_{2,0}^{[d]} = \prod_{i \in [e]}\,\ell_i + \ell_0^{e} + \ell_{e+1}^{e}$. We assume that $x_0$ does not divide~$\ell_i$, for some $i \in \{0,e+1\}$, and each $\ell_i$, for $i \in [e]$, otherwise, we can divide by the maximum power of $x_0$ on both the sides.  

The space of first order partial derivatives of the LHS has dimension at least $2d$ whereas the one of the RHS has dimension at most $e+2$: since trivially $\prod_{i \in T} \ell_i$, for $T \subset [e]$, such that $|T|=e-1$, and $\ell_0^{e-1}, \ell_{e-1}^e$ certainly span the space of single partial derivatives. Therefore, $e \ge 2d-2$. 

Now, we divide this into subcases:

\medskip
(a1) $x_0$ does not appear in $\ell_i$, for any $i \in [e]$,

(a2) $x_0$ appears in $\ell_i$, for some $i \in [e]$.

\medskip
\noindent {\bf Case (a1):} $x_0$ does not appear in $\ell_i$, for $i \in [e]$. In that case, say $\ell_0 = c_0 x_0 + \hat{\ell}_0$, and $\ell_{e+1} = c_{e+1} x_0 + \hat{\ell}_{e+1}$, where $\hat{\ell}_0$ and $\hat{\ell}_{e+1}$ are $x_0$-free, and $c_0, c_{e+1}$ are constants (might be $0$ as well, but both cannot be $0$ since then RHS becomes $x_0$-free). Therefore, the coefficient of $x_0^{e-d}$ (as a polynomial) in RHS is $\gamma_0 \hat{\ell}_0^{d} + \gamma_{e+1} \hat{\ell}_{e+1}^d$, where $\gamma_0 = \binom{e}{d} c_0^{e-d}$, and similarly $\gamma_{e+1}= \binom{e}{d} c_{e+1}^{e-d}$. Comparing with LHS, we get that $P_{2,0}^{[d]}=\gamma_0 \hat{\ell}_0^{d} + \gamma_{e+1} \hat{\ell}_{e+1}^d$. Trivially, over $\IC$, $\gamma_0 \hat{\ell}_0^{d} + \gamma_{e+1} \hat{\ell}_{e+1}^d$ is a product of linear forms, which is a contradiction, using \cref{lem:irreducible-2}.

\medskip
\noindent {\bf Case (a2):} If $x_0$ appears in one of the $\ell_i$ for $i\in [e]$, it can appear in two ways, either $\ell_i$ is a constant multiple of $x_0$, or $\ell_i= c_ix_0 + \hat{\ell}_i$, where $\hat{\ell}_i$ is a nonzero linear form which is $x_0$-free. Let $S_1 \subseteq [e]$ be such that  $\ell_i = c_i \cdot x_0$, for $i \in S_1$, for some nonzero constant $c_i \in \IC$, and $S_2 \subseteq [e]$ be such that $\ell_i =c_ix_0 + \hat{\ell}_i$, where $\hat{\ell}_i$ is nonzero.

Note that if $|S_1| + |S_2| < e-d$, then $x_0^{e-d}$ cannot be contributed from the product and hence it only gets produced from $\ell_0^{e} + \ell_{e+1}^e$, and we get a contradiction in the same way as above. Hence, without loss of generality, assume that $|S_1| + |S_2| \ge e-d$. 

If $S_2$ is non-empty, say $j \in S_2$, then substitute $x_0 = - \hat{\ell}_j/c_j$, so that $\ell_j$ becomes $0$. This substitution gives us the following:
\[ 
(-\hat{\ell}_j/c_j)^{e-d} \cdot P_{2,0}^{[d]}\;=\;\tilde{\ell}_0^{e} + \tilde{\ell}_{e+1}^e\;.
\]
Since, $\tilde{\ell}_0^{e} + \tilde{\ell}_{e+1}^e$ can be written as a product of linear forms, from the unique factorization, it follows that $f$ must be a product of linear forms, which is a contradiction from \cref{lem:irreducible-2}. Hence, we are done when $|S_2|$ is non-empty.

If $S_2$ is empty, since $|S_1| + |S_2| \ge e-d$ by assumption, we have $|S_1| \ge e-d$. In particular, $x_0^{e-d} \mid $ LHS $- \prod \ell_i \implies x_0^{e-d} \mid \ell_0^{e} + \ell_{e+1}^e = \prod_{i} (\ell_0 - \zeta^{2i+1} \ell_{e+1})$, where $\zeta$ is $2e$-th root of unity. Since, $e-d \ge 2$ for $d \ge 4$, this simply implies that there are two indices $i_1$ and $i_2$ such that $\ell_0 - \zeta^{i_1} \ell_{e+1} = c_{i_1} x_0$, and $\ell_0 - \zeta^{i_2} \ell_{e+1} = c_{i_2} x_0$. Together, this implies that both $\ell_0$ and $\ell_{e+1}$ are multiples of $x_0$, which is a contradiction, since we assumed that $x_0$ cannot divide each $\ell_i$, for $i \in [0,e+1]$. Hence, we are done with case (a).

\medskip
\noindent {\bf Case (b):} Let $x_{0}^{e-d}\cdot P_{2,0}^{[d]} = \prod_{i \in [e]}\,\ell_i + \ell_0^{e-1} \cdot \ell_{e+1}$. We assume that $x_0$ does not divide both $\ell_i$, for some $i \in [e]$, and one of the $\ell_0$ or $\ell_{e+1}$, otherwise, we can divide by the maximum power $x_0$ both side. Again, a similar argument shows that $e \ge 2d-2$. 

Similarly, as before, we divide into subcases:
(b1) $x_0$ does not appear in $\ell_i$, for any $i \in [e]$,

(b2) $x_0$ appears in $\ell_i$, for some $i \in [e]$.

\medskip
\noindent {\bf Case (b1):} If $x_0$ does not appear in the first product, i.e,.~any of $\ell_i$, for $i \in [e]$, then it must appear in $\ell_{0}$ (because if it only appears in $\ell_{e+1}$, the degree of $x_0$ is $1$ in RHS, a contradiction). Note that, $x_0 \nmid \ell_{0}$ (and similarly $\ell_{e+1}$), because otherwise, substituting $x_0=0$ makes LHS $0$, while RHS remains $\prod_{i \in [e]} \ell_i$. Hence, let $\ell_0:= c_0 x_0 + \hat{\ell_0}$, where $\hat{\ell}_0$ is $x_0$-free. Substitute $x_0 = - \hat{\ell}_0/c_0$, so that 
\[ 
(- \hat{\ell}_0/c_0)^{e-d} \cdot P_{2,0}^{[d]} \;=\; \prod_{i \in [e]}\ell_i\;.
\] 
This in particular implies that $P_{2,0}^{[d]}$ is a product of linear forms, which is a contradiction by \cref{lem:irreducible-2}.

\medskip
\noindent {\bf Case (b2):} In this case, without loss of generality, $x_0$ appears in $\ell_1$. Note that, $x_0$ cannot divide $\ell_1$, because otherwise, it must divide LHS-$\prod_{i \in [e]} = \ell_{0}^{e-1}\ell_{e+1}$, which implies that $x_0$ must divide one of the $\ell_0$ or $\ell_{e+1}$, contradicting the minimality of $x_0$-division. Therefore, $\ell_1 = c_1 x_0 + \hat{\ell}_1$, where $c_1$ is a nonzero constant, and $\hat{\ell}_1$ is a nonzero linear form which is $x_0$-free. Substitute $x_0= - \hat{\ell}_1/c_1$, both side to get that
\[ 
(-\hat{\ell}_1/c_1)^{e-d}\cdot P_{2,0}^{[d]}\;=\;\hat{\ell}_0^{e-1}\hat{\ell}_{e+1}\;.
\] 
Therefore, again by unique factorization, we get that $f$ must a product of linear forms, which is a contradiction by \cref{lem:irreducible-2}.

\end{proof}

\section{Geometric complexity theory for product-plus-power}
\label{sec:orbitclosure}

In this section, we study computational and invariant theoretic properties of the polynomial 
\[
P^{[d]}_{r,s} = \sum_{k=1}^r \prod_{i=1}^d x_{ki} + \sum_{j=1}^s y_j^d,
\]
defined in \Cref{sec:GCTintro}; this is a polynomial of degree $d$ in $rd + s$ variables. The cases $(r,s) = (1,1), (1,2),(2,0),(0,r)$ correspond, respectively, to the product-plus-power, the product-plus-two-powers, the binomial, and the power sum polynomial mentioned in the previous sections. 

\Cref{thm:stab} determines the stabilizer of $P^{[d]}_{r,s}$ under the action of the group $\GL_{rd+s}$ acting on the variables. The knowledge of the stabilizer $H = \Stab_{\GL_{d+1}}(P_{1,1}^{[d]})$ allows us to determine the representation theoretic structure of the coordinate ring of the orbit of $P^{[d]}_{1,1}$, which is achieved in \Cref{pro:branchingrules}. Recall that the irreducible representations of $\GL_{d+1}$ are indexed by partitions $\la=(\la_1,\la_2,\ldots)$, $\la_1\geq\la_2\geq ...$, with $\ell(\la)\leq d+1$, see \ref{subsec:multiplicitiesorbit}.
Denote by $S_\la(\IC^{d+1})$ the irreducible representation of type $\la$.
In this section we write $S^d V = S_{(d)} V$ for the space of homogeneous polynomials of degree $d$ in $\dim V$ variables.
For every integer $D$ and every partition $\la$ of $dD$, we obtain the following identity:
\[
\mult_\la(\IC[\GL_{d+1}P^{[d]}_{1,1}])\;=\; 
\dim(S_\la \IC^{d+1})^H \;=\;
\sum_{\delta=0}^D \ \sum_{\mu\vdash \delta d, \mu\preceq\la , \ell(\mu)\leq d} \ a_\mu(d,\delta)\,,
\]
where $a_\mu(d,\delta)$ is the \emph{plethysm coefficients}, that is the multiplicity of $S_\mu(\IC^{d+1})$ in $\Sym^d(\Sym^\delta(V))$, see \Cref{pro:branchingrules}. We use this formula, and apply the \cite{IK20} approach to lower bounds on $\mult_\la(\IC[\overline{\GL_{d+1}P_{0,d}^{[d]}}])$,
to find a sequence of partitions where $\mult_\la(\IC[\GL_{d+1}P^{[d]}_{1,1}]) < \mult_\la(\IC[\overline{\GL_{d+1}P_{0,d}^{[d]}}])$, see \Cref{thm:obstruction}. This implies $ \overline{\GL_{d+1}P_{0,d}^{[d]}} \not\subseteq \overline{\GL_{d+1}P^{[d]}_{1,1}}$ if $d\geq 3$.

We  implement this approach explicitly and we determine via a computer calculation an abundance of multiplicity obstructions against generic polynomials, see \Cref{sec:calctables}.

In \Cref{pro:polystable}, we prove that $P^{[d]}_{r,s}$ is polystable, in the sense of invariant theory. This guarantees the existence of a \emph{fundamental invariant}, in the sense of \cite{BI17}: in \Cref{pro:alontarsi}, we show a connection between the degree of this fundamental invariant and the Alon-Tarsi conjecture on Latin squares in combinatorics.

\subsection{Stabilizer}\label{subsec:stabilizer}

Consider the action of the general linear group $\GL_n$ on the homogeneous components of the polynomial ring $\bbC[x_1 \vvirg x_n]$, by linear change of variables, as described in \Cref{sec:GCTintro}. For a homogeneous polynomial $f \in \bbC[\bfx]_d$, write $\Stab_{\GL_n}(f)$ for its stabilizer under this action. It is an immediate fact that $\Stab_{\GL_n}(f)$ is a closed algebraic subgroup of $\GL_n$. It may consists of several connected (irreducible) components: the \emph{identity component}, denoted $\Stab^0_{\GL_n}(f)$ is the connected component containing the identity; $\Stab^0_{\GL_n}(f)$ is a closed, normal subgroup of $\Stab_{\GL_n}(f)$ \cite[Lemma 2.1]{Ges:Geometry_of_IMM}; the quotient $\Stab_{\GL_n}(f) / \Stab^0_{\GL_n}(f)$ is a finite group.

The Lie algebra $\frakg$ of an algebraic group $G$ can be geometrically identified with the tangent space to $G$ at the identity element. Moreover, if $G$ is a subgroup of $\GL_n$, then $\frakg$ is naturally a subalgebra of $\frakgl_n = \End(\bbC^n)$; moreover $\frakg$ uniquely determined the identity component of $G$.

It is a classical fact that the Lie algebra of $\Stab_{\GL_n}(f)$ is the annihilator of $f$ under the Lie algebra action of $\frakgl_n$ on $\bbC[\bfx]_d$; denote this annihilator by $\frakann_{\frakgl_n}(f)$. Typically, in order to determine $\Stab_{\GL_n}(f)$, one first computes $\frakann_{\frakgl_n}(f)$, which uniquely determines $\Stab_{\GL_n}^0(f)$. Then, one determines $\Stab_{\GL_n}(f)$ as a subgroup of the normalizer $N_{\GL_n} \Stab_{\GL_n}^0(f)$. The last step is often challenging; some methods to do this systematically in simplified settings are presented in \cite{GarGur:simple_poly_stab,Ges:Geometry_of_IMM,GesIkPa:GCTMatrixPowering}.

First, we record a general result regarding the stabilizer of sums of polynomials in disjoint sets of variables. This is the symmetric version of \cite[Thm. 4.1(i)]{ConGesLanVenWan:GeometryStrassenAsyRankConj}. We say that a polynomial $f \in S^d \bbC^n$ is \emph{concise} (in $S^d \bbC^n$) if the first order partials of $f$ are linearly independent.

\begin{lemma}\label{lemma: direct sum stabilizer algebras}
Let $V = V_1 \oplus V_2$ and let $f \in \bbC[V^*]_d = S^d V$ be a homogeneous polynomial with $f = f_1 + f_2$, where $f_i \in S^d V_i$ are both concise, with $d \geq 3$. Then 
\begin{enumerate}
 \item[(i)] $\frakann_{\frakgl(V)}(f_1) = \frakann_{\frakgl(V_1)}(f_1) \oplus \Hom(V_2,V)$;
 \item[(ii)] $\frakann_{\frakgl(V)}(f_1 + f_2) = \frakann_{\frakgl(V_1)}(f_1) \oplus \frakann_{\frakgl(V_2)}(f_2)$.
\end{enumerate}
\end{lemma}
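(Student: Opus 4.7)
The plan is to work in coordinates and exploit the natural bigrading of $S^d V$ induced by $V = V_1 \oplus V_2$. I would fix bases $x_1, \ldots, x_n$ of $V_1$ and $y_1, \ldots, y_m$ of $V_2$, so that $S^d V = \IC[\bfx, \y]_d$ splits as $\bigoplus_{a + b = d} S^a V_1 \otimes S^b V_2$, and decompose any $X \in \frakgl(V)$ into four blocks $X = X^{11} + X^{12} + X^{21} + X^{22}$ with $X^{ab} \in \Hom(V_b, V_a)$. With the paper's convention the infinitesimal action reads
\[
X \cdot g \;=\; \sum_{i,j} X_{ji}\, z_j\, \frac{\partial g}{\partial z_i},
\]
where $(z_1, \ldots, z_{n+m}) = (\bfx, \y)$. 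The key observation is that each block $X^{ab}$, applied to $f_1 \in S^d V_1$ or $f_2 \in S^d V_2$, lands in a single bigraded component of $\IC[\bfx, \y]_d$: acting on $f_1 + f_2$, the blocks $X^{11}, X^{21}, X^{12}, X^{22}$ produce contributions of bidegrees $(d, 0)$, $(d-1, 1)$, $(1, d-1)$, $(0, d)$ respectively.

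For part~(i), since $\partial_{y_k} f_1 = 0$, only $X^{11}$ and $X^{21}$ contribute to $X \cdot f_1$, giving pieces of bidegrees $(d, 0)$ and $(d-1, 1)$. These are already distinct for $d \geq 2$, so each must vanish separately. The first vanishing is exactly the statement $X^{11} \in \frakann_{\frakgl(V_1)}(f_1)$. The second rewrites as
\[
\sum_{i \in \bfx}\Bigl(\sum_{k \in \y} X_{ki}\, y_k\Bigr) \frac{\partial f_1}{\partial x_i} \;=\; 0,
\]
which I would analyse in two steps: first expand in the $y_k$ to see that for each $k$, $\sum_i X_{ki}\, \partial_{x_i} f_1 = 0$ in $\IC[\bfx]_{d-1}$; then invoke the linear independence of $\{\partial f_1/\partial x_i\}_i$ (the content of conciseness of $f_1$, as recalled in \Cref{sec:essvardebord}) to conclude $X_{ki} = 0$ for all $i, k$, i.e.\ $X^{21} = 0$. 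The remaining blocks $X^{12}$ and $X^{22}$ are unconstrained and jointly span $\Hom(V_2, V_1) \oplus \Hom(V_2, V_2) = \Hom(V_2, V)$. This proves (i).

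For part~(ii), the same analysis applied to $X \cdot (f_1 + f_2)$ yields four pieces of bidegrees $(d, 0)$, $(d-1, 1)$, $(1, d-1)$, $(0, d)$, where the first two come from $X \cdot f_1$ and the last two from $X \cdot f_2$. The hypothesis $d \geq 3$ enters precisely here: it is exactly the condition that makes these four bidegrees pairwise distinct---for $d = 2$ the middle two coincide into $(1,1)$, allowing the $X^{21}$-contribution from $f_1$ and the $X^{12}$-contribution from $f_2$ to cancel each other and thereby create extra annihilators. Under $d \geq 3$ each of the four pieces vanishes independently: the extreme pieces give $X^{11} \in \frakann_{\frakgl(V_1)}(f_1)$ and $X^{22} \in \frakann_{\frakgl(V_2)}(f_2)$, while the mixed pieces, together with conciseness of $f_1$ and $f_2$ respectively, force $X^{21} = X^{12} = 0$ exactly as in (i). The only delicate step in the whole argument is this separation of bigraded contributions, which is precisely what $d \geq 3$ buys; I do not anticipate any further technical obstacle.
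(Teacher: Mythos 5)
Your proposal is correct and follows essentially the same route as the paper: decompose $X$ into the four blocks determined by $V = V_1 \oplus V_2$, observe that each block sends $f_1$ (resp.\ $f_2$) into a distinct bigraded component of $S^dV$ (which is where $d \geq 3$ is used, exactly as you note), and use conciseness, i.e.\ linear independence of the first-order partials, to kill the off-diagonal blocks. Your explicit bigrading argument and your remark about the $(1,1)$-collision at $d=2$ are just a more coordinate-level rendering of the paper's statement that the mixed terms lie in $V_2 \otimes S^{d-1}V_1$ and $V_1 \otimes S^{d-1}V_2$ and are therefore linearly independent.
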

\begin{proof}
For both statements, the inclusion of the right-hand term into the left-hand term is clear. We prove the reverse inclusion.

For $X \in \frakgl(V)$, write $X = \sum_{i,j=1}^2 X_{ij}$, with $X_{ij} \in \Hom(V_i,V_j)$. 
 
 The proof of (i) amounts to showing that if $X \in \frakann_{\frakgl(V)}(f_1)$, then $X_{12} = 0$ and $X_{11} \in \frakann_{\frakgl(V_1)}(f_1)$. Suppose $X.f_1 = 0$. Notice $X.f_1 = X_{11}.f_1 + X_{12}.f_1$; here $X_{11}.f_1 \in S^d V_1$ and $X_{12}.f_1 \in V_2 \otimes S^{d-1}V_1$. In particular, both terms must vanish. The term $X_{12}.f_1$ is a sum of at most $\dim V_2$ linearly independent elements, each of which is a linear combination of first order partials of $f_1$. Therefore, all such linear combinations must be $0$, and since $f_1$ is concise in $S^d V_1$, we obtain $X_{12} = 0$. The condition $X_{11}.f_1  =0$ is, by definition, equivalent to $X_{11}\in \frakann_{\frakgl(V_1)}(f_1)$. This conclude the proof of (i).
 
 To prove (ii), we show that if $X \in \frakann_{\frakgl(V)}(f)$, then $X_{12}= 0$, $X_{21} = 0$ and $X_{ii} \in \frakann_{\frakgl(V_i)}(f_i)$. Suppose $X.f= 0$. We have $X.f = (X_{11} + X_{12}).f_1 + (X_{21}+X_{22}).f_2$. Notice 
 \begin{align*}
\begin{array}{lll}  
   X_{11} f_1 \in S^d V_1, & ~~~~~~~ &    X_{12} f_1 \in V_2 \otimes S^{d-1} V_1, \\
   X_{21} f_2 \in V_1 \otimes S^{d-1} V_2, & ~~~~~~~ &    X_{22} f_2 \in S^d V_2 .
   \end{array}
 \end{align*}
 Since $d \geq 3$, the four terms are linearly independent, hence they all must vanish individually. This concludes the proof.
\end{proof}

We can now determine the stabilizer of $P^{[d]}_{r,s}$. Let $\bbT^{\SL_n}$ denote the subgroup of diagonal elements in $\SL_n$. We use the \emph{wreath product} notation: given a group $G$, the wreath product $G \wreath \frakS_k$ for the semidirect product $G^{\times k} \rtimes \frakS_k$ where $\frakS_k$ acts on the direct product $G^{\times k}$ by permuting the direct factors; we refer to \cite[Sec. 1.6]{RobinsonGroupTheory} for details on this construction. 
\begin{theorem}\label{thm:stab}
 For $d \geq 3$ and for every $r,s$, we have
 \[
 \Stab_{\GL(V)}( P^{[d]}_{r,s} ) = ([\bbT^{\SL_d} \rtimes \frakS_{d}] \wreath \frakS_r ) \times (\bbZ_d \wreath \frakS_s);
 \]
each copy of $\bbT^{\SL_d} \rtimes \frakS_{d}$ acts by rescaling and permuting the variables in one of the $r$ sets $\{ x_{ji} : i = 1 \vvirg d\}$ for $j=1 \vvirg r$; the group $\frakS_r$ permutes (set-wise) these sets; the group $\bbZ_d \wreath \frakS_s$ acts by rescaling (by a $d$-th root of $1$) and permuting the variables in the set $\{ y_i : i = 1 \vvirg s\}$.
\end{theorem}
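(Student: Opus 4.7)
The plan is to verify the containment $\supseteq$ by direct inspection: each copy of $\bbT^{\SL_d}$ rescales the $d$ variables within one block with product of scalings equal to $1$, and hence preserves the corresponding monomial; $\frakS_d$ permutes the variables within a block; $\frakS_r$ permutes the $r$ product-blocks; $\bbZ_d$ rescales $y_i$ by a $d$-th root of unity (preserving $y_i^d$); and $\frakS_s$ permutes the $y_i$.

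For $\subseteq$, the first step is to compute the identity component via the Lie algebra. Write $V = V_1 \oplus \cdots \oplus V_r \oplus W_1 \oplus \cdots \oplus W_s$ where $V_i$ spans the $i$-th product-block and $W_i = \langle y_i \rangle$. Every summand of $P^{[d]}_{r,s}$ is concise in its own variable set, so iterating \cref{lemma: direct sum stabilizer algebras}(ii) yields
\[
\frakann_{\frakgl(V)}(P^{[d]}_{r,s}) \;=\; \bigoplus_{i=1}^{r} \frakann_{\frakgl(V_i)}\!\bigl(\textstyle\prod_{j} x_{ji}\bigr) \;\oplus\; \bigoplus_{i=1}^{s} \frakann_{\frakgl(W_i)}(y_i^d).
\]
Acting with each matrix unit $E_{kl} \in \frakgl(V_i)$ on the squarefree product $\prod_j x_{ji}$ shows that the only annihilating elements are the trace-zero diagonal matrices, which integrate to $\bbT^{\SL_d}$; and $\frakann_{\frakgl(W_i)}(y_i^d) = 0$ since $\frakgl(W_i)$ is one-dimensional and acts on $y_i^d$ by multiplication by $d$. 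Hence $\Stab^{0}_{\GL(V)}(P^{[d]}_{r,s}) = \prod_{i=1}^r \bbT^{\SL_d}_i$, acting block-diagonally.

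Next, any $g \in \Stab_{\GL(V)}(P^{[d]}_{r,s})$ normalizes $\Stab^{0}$. The fixed-point subspace of $\Stab^{0}$ on $V$ is $W := \bigoplus_i W_i$, and the sum of nontrivial weight spaces is $\bigoplus_i V_i$, so $g$ preserves the splitting $V = (\bigoplus_i V_i) \oplus W$. On $\bigoplus_i V_i$ the torus $\Stab^{0}$ acts with $rd$ distinct one-dimensional weight spaces $\langle x_{ji}\rangle$, so $g$ must send each $x_{ji}$ to $\alpha_{ji}\, x_{\tau(j,i)}$ for some bijection $\tau$ of $[d] \times [r]$ and nonzero scalars $\alpha_{ji}$. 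The main step is then to use $g \cdot P^{[d]}_{r,s} = P^{[d]}_{r,s}$ to force block-preservation: the identity on the $V$-part reads $\sum_i \prod_j g(x_{ji}) = \sum_{i} \prod_j x_{ji}$, and by unique factorization in $\bbC[V^*]$ each product $\prod_j g(x_{ji})$ must be a squarefree monomial of the form $\prod_j x_{j,\sigma(i)}$ for a permutation $\sigma \in \frakS_r$. Consequently $\tau$ decomposes as a block-permutation $\sigma \in \frakS_r$ together with within-block permutations $\tau_i \in \frakS_d$, and the scalars obey $\prod_j \alpha_{ji} = 1$, producing the wreath product $(\bbT^{\SL_d} \rtimes \frakS_d) \wr \frakS_r$.

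On the $W$-part the restriction of $g$ stabilizes $\sum_i y_i^d$; the same Lie algebra argument gives trivial identity component, and a monomial analysis (each $y_i$ is sent to a scalar multiple of some $y_{\sigma'(i)}$, with the scalar satisfying $\lambda^d = 1$) produces $\bbZ_d \wr \frakS_s$. The hard part of the argument is the block-preservation conclusion in the third paragraph: a priori the normalizer of $\prod_i T_i$ in $\GL(V)$ could induce automorphisms mixing the torus factors and hence mixing the $V_i$'s, and it is precisely the squarefree monomial rigidity of each product $\prod_j x_{ji}$, combined with unique factorization, that rules this out.
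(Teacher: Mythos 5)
Your argument is correct and takes a genuinely cleaner route than the paper for the block-preservation step. Both proofs compute the identity component $\Stab^0 = (\bbT^{\SL_d})^{\times r}$ identically via \cref{lemma: direct sum stabilizer algebras} and then observe $\Stab \subseteq N_{\GL(V)}(\Stab^0)$; the two then diverge. The paper invokes the Hessian determinant $H = (\prod_{ij}x_{ij}\prod_k y_k)^{d-2}$, which every stabilizing $g$ preserves up to scaling, and uses unique factorization of $H$ to force $g$ to be a monomial matrix on all variables at once. You instead extract the monomial-matrix structure on the $x$-part directly from the weight-space decomposition: the zero weight space of $\Stab^0$ is exactly $W = \langle y_1, \dots, y_s \rangle$ and the nonzero weight spaces $\langle x_{ji}\rangle$ are pairwise distinct one-dimensional, so any $g$ normalizing $\Stab^0$ must preserve the splitting $V = (\bigoplus V_i)\oplus W$ and permute-and-scale the $x_{ji}$; comparing the $r$ disjointly-supported squarefree monomials on both sides of $\sum_i\prod_j g(x_{ji}) = \sum_i\prod_j x_{ji}$ then yields the block permutation and the $\prod_j\alpha_{ji}=1$ constraint. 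This avoids computing the Hessian and also makes explicit why the torus factors cannot be intertwined, which is the point you flag as the crux. Two remarks: (1) calling the monomial-matching step ``unique factorization'' is a slight misnomer — each $\prod_j g(x_{ji})$ is already a scaled squarefree monomial by the weight argument, so what you need is just comparison of monomial expansions, not factorization of a polynomial into irreducibles; (2) for the $W$-part your ``monomial analysis'' is an appeal to the classical fact $\Stab_{\GL_s}(\sum y_i^d) = \bbZ_d\wr\frakS_s$ rather than a self-contained argument — there is no torus forcing monomial-matrix structure on $W$, and the standard proof of that classical fact is itself a Hessian argument. The paper's single Hessian computation handles both parts uniformly, whereas your approach trades this for a cleaner $x$-part at the cost of still needing the power-sum result (and hence implicitly the Hessian) for the $y$-part. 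Both are valid; you should cite the power-sum stabilizer explicitly as the paper does.
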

\begin{proof}
It is clear that the group on the right-hand side is contained in the stabilizer $\Stab_{\GL(V)}( P^{[d]}_{r,s} )$. We show the reverse inclusion.

First, we determine the identity component of $\Stab_{\GL(V)}( P^{[d]}_{r,s} )$. By \Cref{lemma: direct sum stabilizer algebras}, the annihilator of $P^{[d]}_{r,s}$ in $\frakgl(V)$ is the direct sum of the annihilators of its summands.  This guarantees that the identity component of the stabilizer is the product of the identity components for the summands of $P^{[d]}_{r,s}$. The identity component for each square-free monomial is a copy of $\bbT^{\SL_d}$, see, e.g., \cite[Sec. 7.1.2]{Lan:GeometryComplThBook}. The identity component of each power is trivial. Therefore, we deduce $\Stab_{\GL(V)}^0(P^{[d]}_{r,s}) = ({\bbT^{\SL_d}})^{\times r}$.
 
 Since $\Stab_{\GL(V)}^0(P^{[d]}_{r,s})$ is a normal subgroup of $\Stab_{\GL(V)}(P^{[d]}_{r,s})$, we have 
 \[
  \Stab_{\GL(V)}( P^{[d]}_{r,s} ) \subseteq N_{\GL_{rd+s}}({\bbT^{\SL_d}}^{\times r}) = ([\bbT^{\SL_d} \rtimes \frakS_d] \wreath \frakS_r ) \rtimes Q
 \]
where $Q$ is the parabolic subgroup stabilizing the subspace spanned by the $x_{ij}$ variables; here $N_{\GL_{rd+s}} (=)$ denotes the normalizer subgroup.
 
In order to determine the discrete component, we follow the same argument as the one used for the power sum polynomial $P^{[d]}_{0,s}$ in \cite[Section 8.12.1]{Lan:GeometryComplThBook}. In particular, $ \Stab_{\GL(V)}( P^{[d]}_{r,s} )$ stabilizes the Hessian determinant of $P^{[d]}_{r,s}$, up to scaling. A direct calculation shows that this Hessian determinant, up to scaling, is 
\[
 H = (\prod_{i,j}x_{ij} \prod_k y_k)^{d-2}.
\]
Unique factorization implies that $\Stab_{\GL(V)}( P^{[d]}_{r,s} ) \cap Q \subseteq \bbT \rtimes \frakS_s$, where $\bbT$ is the torus of diagonal matrices acting on the $y_j$ variables. Hence this subgroup commutes with $[\bbT^{\SL_d} \rtimes \frakS_d] \wreath \frakS_r $ and we deduce 
\[
 \Stab_{\GL(V)}( P^{[d]}_{r,s} ) \cap Q = \Stab_{\GL_s}(y_1^d + \cdots + y_s^d) = \bbZ_d \wreath \frakS_s.
\]
This concludes the proof.
\end{proof}

In the context of geometric complexity theory it is important to know if the polynomial is characterized by its stabilizer~\cite{MS08}.
While this property fails for the polynomials $P^{[d]}_{r, s}$, a slightly weaker statement is true --- every polynomial stabilized by $\Stab(P^{[d]}_{r, s})$ is in the orbit of $P^{[d]}_{r, s}$ or a very special restriction of $P^{[d]}_{r, s}$.
This is similar to the properties of minrank tensors and slice rank tensors considered in~\cite{https://doi.org/10.48550/arxiv.1911.02534}.

\begin{theorem}
\label{thm:characterized}
    If a polynomial $f \in \bbC[x_{11}, \dots, x_{dr}, y_1, \dots, y_s]_d$ is stabilized by $\Stab(P^{[d]}_{r, s})$, then
    \[
    f = \alpha \sum_{i=1}^r \textprod_{j=1}^{d} x_{ji} + \beta \sum_{i=1}^s y_{i}^d,
    \]
    for some $\alpha, \beta \in \bbC$.
\end{theorem}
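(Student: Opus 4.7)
The plan is to split the argument into two stages: first use the identity component of $\Stab(P^{[d]}_{r,s})$, which is a torus, to restrict $f$ to the span of a very small set of monomials; then use the discrete symmetries (the symmetric groups of the wreath products) to pin down the coefficients.

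By \Cref{thm:stab}, the stabilizer contains the torus
\[
T \;:=\; (\bbT^{\SL_d})^{\times r} \times (\bbZ_d)^{\times s},
\]
where the $j$-th factor $\bbT^{\SL_d}$ acts by diagonal matrices of determinant $1$ on $\{x_{1j}\vvirg x_{dj}\}$, and the $i$-th $\bbZ_d$ rescales $y_i$ by a $d$-th root of unity. The first step is to classify degree-$d$ monomials invariant under $T$. A monomial $\prod_{j,k} x_{kj}^{b_{kj}} \prod_i y_i^{e_i}$ is invariant under the $j$-th $\bbT^{\SL_d}$ iff $b_{1j}=\cdots=b_{dj}=:c_j$, and invariant under the $i$-th $\bbZ_d$ iff $d \mid e_i$. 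Writing $e_i = d f_i$, the total degree condition becomes $d\sum_j c_j + d\sum_i f_i = d$, i.e.\ $\sum_j c_j + \sum_i f_i = 1$, and since all summands are non-negative integers exactly one of them equals $1$. Hence the only $T$-invariant monomials of degree~$d$ are
\[
\textprod_{k=1}^{d} x_{kj}\ (j=1\vvirg r) \qquad \text{and} \qquad y_i^d\ (i=1\vvirg s).
\]

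Thus any $f$ stabilized by $T$ (in particular by $\Stab(P^{[d]}_{r,s})$) can be written as
\[
f \;=\; \sum_{j=1}^r \alpha_j \textprod_{k=1}^{d} x_{kj} \;+\; \sum_{i=1}^s \beta_i y_i^d
\]
for some scalars $\alpha_j, \beta_i \in \bbC$. Now invoke the discrete part of the stabilizer: the symmetric group $\frakS_r$ from the wreath product $[\bbT^{\SL_d}\rtimes\frakS_d]\wr\frakS_r$ permutes the $r$ blocks of $x$-variables, and hence permutes the monomials $\prod_k x_{kj}$; invariance of $f$ under this action forces $\alpha_1=\cdots=\alpha_r =: \alpha$. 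Similarly, the $\frakS_s$ factor of $\bbZ_d \wr \frakS_s$ permutes the $y_i$'s and forces $\beta_1=\cdots=\beta_s=:\beta$. This yields the claimed form, completing the proof.

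There is no real obstacle here once the stabilizer is known: the torus computation is straightforward combinatorics of weights, and the symmetric group step is immediate. The only mild care required is to parse the wreath products in \Cref{thm:stab} correctly to ensure that both the diagonal torus and the permutation of blocks are genuinely available symmetries to exploit.
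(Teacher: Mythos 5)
Your proof is correct, and it uses the same basic ingredients as the paper's (torus part and symmetric-group part of the stabilizer determined in \Cref{thm:stab}), but it organizes them a bit more cleanly. The paper first exploits only the ``scale a whole block by a $d$-th root of unity'' elements (which sit inside $\bbT^{\SL_d}$ as scalar matrices, together with the $\bbZ_d$'s) to show each monomial of $f$ involves variables from a single block, so $f = \sum_i f_i(x_{1i},\ldots,x_{di}) + \sum_i \beta_i y_i^d$; then it invokes $\frakS_r$ and $\frakS_s$ to equalize the $f_i$ and $\beta_i$; and only at the end does it use the \emph{rest} of the torus $\bbT^{\SL_d}$ (the $\lambda$--$\lambda^{-1}$ rescalings) to force $f_1 = \alpha\prod_j x_{j1}$. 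You instead run the full weight computation for $T = (\bbT^{\SL_d})^{\times r}\times\bbZ_d^{\times s}$ up front, which directly cuts the ambient space down to the $(r+s)$-dimensional span of $\prod_k x_{kj}$ and $y_i^d$, leaving only the symmetric groups to do. This is a genuine (if modest) streamlining: it collapses the paper's first and third steps into a single weight-space argument, and it makes transparent that the torus alone already pins the monomial support. Both arguments are correct; one small point worth making explicit (which you implicitly use) is that because $T$ acts diagonally on monomials, $T$-invariance of $f$ forces each monomial appearing in $f$ with nonzero coefficient to be $T$-invariant individually — this is the step that reduces the problem to classifying $T$-invariant monomials.
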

\begin{proof}
    Partition the set of variables into the subsets $X_{i} = \{x_{1i}, \dots, x_{di}\}$ and $Y_i = \{y_i\}$.
    Note that $\Stab(P^{[d]}_{r, s})$ contains the transformation which scales all variables in one of the subsets by a $d$-th root of unity, acting as identity on all other variables.
    It follows that each monomial of $f$ contains variables from only one of the subsets, because if this was not the case then the transformation described above multiplies the monomial by a coefficient different from $1$.
    Thus we have
    \[
    f = \sum_{i = 1}^r f_i(x_{1i}, \dots, x_{di}) + \sum_{i = 1}^s \beta_i y_i^d.
    \]

    Since $f$ is fixed under the symmetric group $\frakS_s$ permuting $y_1, \dots, y_s$, the coefficients $\beta_i$ are all equal.
    Since $f$ is fixed under the symmetric group $\frakS_r$ permuting the subsets $X_1, \dots, X_r$, all the polynomials $f_i$ also coincide.

    Finally, the stabilizer group contains the transformations scaling $x_{j1}$ by $\lambda$ and $x_{k1}$ by $\lambda^{-1}$.
    This transformation scales a monomial $x_{j1}^{p_j} x_{k1}^{p_k} \dots$ by $\lambda^{p_j - p_k}$. It follows that each monomial of $f_1$ must have the same degree with respect to each variable, that is, $f_1 = \alpha \prod_{j = 1}^d x_{j1}$.
\end{proof}

\subsection{Multiplicities in the coordinate ring of the orbit}
\label{subsec:multiplicitiesorbit}

A \emph{partition} $\la = (\la_1,\la_2,\ldots)$ is a finite non-increasing sequence of nonnegative integers. We write $\ell(\la) := \max\{i\mid\la_i\neq 0\}$, and $\la\vdash D$ means $\sum_i \la_i = D$.
To each partition $\la$ we associate its Young diagram, which is a top-left justified array of boxes with $\la_i$ boxes in row $i$. For example, the Young diagram of $\la=(4,4,3)$ is
$\young(\ \ \ \ ,\ \ \ \ ,\ \ \ )$.
The transpose of the Young diagram is obtained by switching rows and columns.
Denote the partition corresponding to this Young diagram by $\la^t$, for example $(4,4,3)^t = (3,3,3,2)$.
A group homomorphism $\varrho:\GL_D \to \GL(V)$, where $V$ is a finite dimensional complex vector space, is called a \emph{representation} of $\GL_D$. A representation is \emph{polynomial} if each entry of the matrix corresponding to the linear map $\varrho(g)$ is given by a polynomial in the entries of the elements of $\GL_D$.
A linear subspace that is closed under the group operation is called a \emph{subrepresentation}.
A representation with only the two trivial subrepresentations is called \emph{irreducible}.
The irreducible polynomial representations of $\GL_{d+1}$ are indexed by partitions $\la$ with $\ell(\la)\leq d+1$, see for example~\cite[Ch.~8]{Ful97}. Denote by $S_\la(\IC^{d+1})$ the irreducible representation of type $\la$.
For a $\GL_{d+1}$-representation $V$, write $\mult_\la(V)$ to denote the multiplicity of $\la$ in $V$, i.e., the dimension of the space of equivariant maps from $S_\la(\IC^{d+1})$ to $V$, or equivalently, the number of summands of isomorphism type $\la$ in any decomposition of $V$ into a direct sum of irreducible representations.

In this section, we consider the representations given by the homogeneous components of the coordinate ring of $\bar{\GL_{d+1} \cdot P_{1,1}^{[d]}}$. By \Cref{thm:stab}, the stabilizer of $P_{1,1}^{[d]}$ under the action of $\GL_{d+1}$ is $H := \Stab_{\GL_{d+1}}(P^{[d]}_{1,1}) \simeq \bbZ_d  \times (\bbT^{\SL_d} \rtimes \frakS_d)$.

The stabilizer is used to determine the multiplicities in the coordinate ring of the group orbit $\mult_\la(\IC[\GL_{d+1} P^{[d]}_{1,1}])$. It is a general fact that $\IC[\GL_{d+1} P^{[d]}_{1,1}] \simeq \IC[\GL_{d+1}]^H$: in other words the coordinate ring of the orbit coincides with the subring of $H$-invariant elements in the coordinate ring of the group $\GL_{d+1}$. This ring of $H$-invariants can be determined using the Algebraic Peter-Weyl Theorem \cite[Thm. 4.2.7]{GooWal:Symmetry_reps_invs}, a powerful tool that allows us to compute the relevant multiplicities. In turn, we have  
\[
\mult_\la(\IC[\GL_{d+1} P^{[d]}_{1,1}]) = \dim (S_\la \IC^{d+1})^{H}.
\]
We determine the dimension of these invariant spaces by classical representation branching rules, see \Cref{pro:branchingrules}.

For partitions $\mu$ and $\la$, define $\mu\preceq\la$ if and only if the Young diagram of $\mu$ is contained in the one of $\la$,  (that is for every $i$, one has $\mu_i\leq \la_i$) and the skew diagram $\la/\mu$ given by the difference has at most 1 box in each column (that is $\la^t_i-\mu^t_i\leq 1$). In other words, $\lambda$ can be obtained from $\mu$ by adding a suitable number of boxes, with no two of them in the same column.
The plethysm coefficient is defined to be the multiplicity $a_\mu(d,D) := \mult_\mu(S^d(S^D(\bbC^N)))$, and it does not depend on $N$ as long as $N \geq d$.
\begin{proposition}\label{pro:branchingrules}
For $\la\vdash dD$ we have
\[\mult_\la(\IC[\GL_{d+1}P^{[d]}_{1,1}]) = \dim(S_\la \IC^{d+1})^H =
\sum_{\delta=0}^D\sum_{\substack{\mu\vdash \delta d \\ \mu\preceq\la \\\ell(\mu)\leq d}} a_\mu(d,\delta).
\]
\end{proposition}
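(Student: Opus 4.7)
The plan is to use the Algebraic Peter-Weyl identification $\mult_\la(\IC[\GL_{d+1}P^{[d]}_{1,1}]) = \dim(S_\la \IC^{d+1})^H$ (noted just before the statement) and reduce the computation of $H$-invariants to the branching $\GL_{d+1}\downarrow \GL_d\times\GL_1$. The key structural observation, direct from \Cref{thm:stab} specialised to $r=s=1$, is that $H = (\bbT^{\SL_d}\rtimes\frakS_d)\times\bbZ_d$ sits inside the block-diagonal Levi $L := \GL_d\times\GL_1$: the factor $\bbT^{\SL_d}\rtimes\frakS_d$ is exactly the $\GL_d$-stabilizer of $x_1\cdots x_d$, and $\bbZ_d$ acts on the remaining variable $y_1$ via $d$-th roots of unity.

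I would then apply the classical Gelfand-Tsetlin / Pieri branching rule, which gives
\[
S_\la \IC^{d+1}\big|_L = \bigoplus_{\mu \preceq \la,\ \ell(\mu)\le d} S_\mu \IC^d \otimes S_{(|\la|-|\mu|)} \IC^1,
\]
where the condition that $\la/\mu$ be a horizontal strip is exactly the paper's $\mu\preceq\la$. Since the two commuting factors of $H$ act on distinct tensor factors, $H$-invariants split across the direct sum, and each summand becomes $(S_\mu \IC^d)^{\bbT^{\SL_d}\rtimes\frakS_d}\otimes (S_{(|\la|-|\mu|)} \IC^1)^{\bbZ_d}$.

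For the $\GL_1$-tensor factor, $S_{(N)} \IC^1$ is one-dimensional of weight $N$, so the $\bbZ_d$-invariants are nonzero (and then one-dimensional) iff $d\mid N$; combined with $|\la|=dD$, this forces $|\mu|=d\delta$ for some $0\le \delta\le D$, producing the outer sum. For the $\GL_d$-tensor factor, the $\bbT^{\SL_d}$-invariants of $S_\mu \IC^d$ are precisely the $(\delta,\ldots,\delta)$-weight space (all coordinates of a $\bbT^{\SL_d}$-invariant weight must coincide, and their sum equals $|\mu|=d\delta$), on which $\frakS_d$ acts as the Weyl group. The remaining identification
\[
\dim (S_\mu \IC^d)^{\bbT^{\SL_d}\rtimes\frakS_d} \;=\; a_\mu(d,\delta)
\]
is the only non-routine step: it is a standard GCT fact, obtained from Algebraic Peter-Weyl for the orbit of $x_1\cdots x_d \in S^d\IC^d$ (whose $\GL_d$-stabilizer is exactly $\bbT^{\SL_d}\rtimes\frakS_d$) compared with the plethysm decomposition $S^d(S^\delta W)=\bigoplus_\mu S_\mu W\otimes\IC^{a_\mu(d,\delta)}$; I would cite \cite{BLMW11,BI17} and include a short verification for the reader. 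Plugging these three ingredients into the split decomposition yields exactly the claimed double sum.
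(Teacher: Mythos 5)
Your argument is essentially identical to the paper's proof: restrict $S_\la \IC^{d+1}$ to the Levi $\GL_1\times\GL_d$ via Pieri's rule, split the $H$-invariants across the two tensor factors, use the $\bbZ_d$-action on the $\GL_1$-factor to force $|\mu|\in d\bbZ$, and identify $\dim(S_\mu\IC^d)^{\bbT^{\SL_d}\rtimes\frakS_d}$ with the plethysm coefficient $a_\mu(d,\delta)$. For that last identification the paper simply cites Gay's theorem \cite{Gay76}; your proposed re-derivation via Peter--Weyl for the orbit of $x_1\cdots x_d$ would need care, since the graded coordinate ring of that orbit closure in degree $\delta$ naturally sits inside $S^\delta(S^d(\IC^d)^*)$ rather than $S^d(S^\delta \IC^d)$, and bridging the two is precisely the content of Gay's theorem (equivalently, of the Hermite--Hadamard--Howe map), so citing it directly is the cleaner route.
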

\begin{proof}
\[
(S_\la \IC^{d+1})^H
=
(S_\la (\bbC\oplus\bbC^d)\downarrow^{\GL_{d+1}}_{\GL_1\times\GL_d})^{\bbZ_d  \times (\bbT^{\SL_d}  \rtimes \frakS_d)}
\stackrel{\text{Pieri's rule}}{=}
\bigoplus_{\substack{\mu\preceq\la\\ \ell(\mu)\leq d}}(S^{|\la|-|\mu|}\bbC^1)^{\bbZ_d}\otimes (S_\mu\bbC^d)^{\bbT^{\SL_d} \rtimes \frakS_d},
\]
where Pieri's rule is a well-known decomposition rule, see for example \cite[p.~80, Exe.~6.12]{FH91}.
Now, $\dim((S^{|\la|-|\mu|}\bbC^1)^{\bbZ_d}) = 1$ if and only if $|\la|-|\mu|$ is a multiple of $d$ if and only if $|\mu|$ is a multiple of $d$.
Otherwise it is~0.
Hence
\[
\dim (S_\la V)^H
=
\sum_{\delta=0}^d\sum_{\substack{\mu\vdash\delta d\\\mu\preceq\la\\\ell(\mu)\leq d}} \underbrace{\dim (S_\mu\bbC^d)^{\bbT^{\SL_d} \rtimes \frakS_d}}_{=a_{\mu}(d,\delta)}
\]
The last underbrace equality is Gay's theorem \cite{Gay76}.
\end{proof}
The condition that $\ell(\mu)\leq d$ is not necessary, because if $\ell(\mu)>d$, then $a_\mu(d,\delta)=0$.

A computer calculation shows that this indeed gives multiplicity obstructions in the sense of \Cref{sec:GCTintro}. We record the result of the this calculation in \Cref{sec:calctables}. We used the HWV software \cite{BHIM22} to directly calculate that (10, 6, 4, 4) and (8, 8, 4, 4) are the only types in the vanishing ideal for $D=8$, $d=3$. For $d=3$ there are no equations in degree $1,\ldots,7$. In particular, none of Brill's equations from \cite{Gor94}, which are of degree $d+1$, vanishes on $\GL_{d+1} P^{[d]}_{1,1} \cap S^d\IC^d$. 

\subsection{Polystability}

A polynomial $f \in S^d V$ is called \emph{polystable} if its $\SL(V)$-orbit is closed.
Polystability is an important property in GCT, as it implies the existence of a \emph{fundamental invariant} that connects the $\GL$-orbit with the $\GL$-orbit closure, see \cite[Def.~3.9 and Prop.~3.10]{BI17}. This connection can be used to exhibit multiplicity obstructions, as was done in \cite{IK20}.

\begin{proposition}\label{pro:polystable}
Let $d\geq 2$. The polynomial $P^{[d]}_{r,s}$ is polystable, i.e., the orbit $\SL_{rd+s} P^{[d]}_{r,s}$ is closed.
\end{proposition}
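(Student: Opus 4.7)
The plan is to invoke the Matsushima-Onishchik criterion: for a reductive algebraic group $G$ acting on an affine variety $X$, an orbit $G\cdot v$ is closed in $X$ if and only if the stabilizer $\Stab_G(v)$ is reductive. Since $\SL(V)$ is semisimple (and hence reductive) and $S^d V$ is affine, the claim will reduce to showing that $\Stab_{\SL(V)}(P^{[d]}_{r,s})$ is a reductive subgroup of $\SL(V)$.

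To carry this out, I would first recall from \Cref{thm:stab} that
\[
K \;:=\; \Stab_{\GL(V)}(P^{[d]}_{r,s}) \;=\; \bigl([\bbT^{\SL_d} \rtimes \frakS_d] \wreath \frakS_r\bigr) \times (\bbZ_d \wreath \frakS_s),
\]
whose identity component $K^\circ = (\bbT^{\SL_d})^{\times r}$ is a torus and whose component group $K/K^\circ$ is finite. Then I would verify that $K^\circ \subseteq \SL(V)$: each factor $\bbT^{\SL_d}$ acts by rescaling one block of $d$ variables $x_{1j},\ldots,x_{dj}$ by scalars whose product is $1$, while acting trivially on all remaining variables of $V$, so the determinant of the induced action on $V$ equals $1$.

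The desired stabilizer is then $\Stab_{\SL(V)}(P^{[d]}_{r,s}) = K \cap \SL(V)$, which is exactly the kernel of the character $\det|_K\colon K \to \IC^*$. Because $K^\circ$ lies in this kernel, it coincides with the identity component of $K\cap\SL(V)$, and the component group of $K\cap\SL(V)$ embeds into the finite group $K/K^\circ$, hence is finite. An algebraic group that is an extension of a reductive group (here a torus) by a finite group is itself reductive, so the Matsushima criterion applies and gives the closedness of the $\SL(V)$-orbit.

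I do not anticipate a serious obstacle: the heavy lifting is already done by \Cref{thm:stab}, and the only thing to check beyond that is that the identity-component torus $K^\circ$ is not pruned by the intersection with $\SL(V)$. This is immediate from the definition of $\bbT^{\SL_d}$, so the proof will be short.
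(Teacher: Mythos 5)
There is a genuine gap: the criterion you attribute to Matsushima--Onishchik is not an equivalence, and the implication you need is precisely the one that fails. Matsushima's theorem asserts that for a reductive group $G$ and a closed subgroup $H$, the homogeneous space $G/H$ is affine if and only if $H$ is reductive. Hence a closed orbit in an affine $G$-variety has reductive stabilizer, but the converse is false: a reductive stabilizer only guarantees that the orbit is an \emph{affine variety}, not that it is \emph{closed} in the ambient space. For a counterexample, let $\bbC^*$ act on $\bbC^2$ by $t\cdot(x,y)=(tx,t^{-1}y)$ and take $v=(1,0)$: the stabilizer is trivial (hence reductive) and the orbit $\bbC^*\times\{0\}$ is affine, yet its closure in $\bbC^2$ contains the origin, so the orbit is not closed. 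The same phenomenon persists for semisimple groups acting linearly, so no amount of care in computing $\Stab_{\SL(V)}(P^{[d]}_{r,s})$ can, by itself, yield polystability.

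The paper's proof instead invokes a Kempf/Hilbert--Mumford type criterion (Proposition~2.8 of \cite{BI17}), which requires two things beyond what you check. First, that the centralizer in $\SL(V)$ of $R := \Stab_{\GL(V)}(P^{[d]}_{r,s})\cap\bbT^{\GL(V)}$ is exactly $\bbT^{\SL(V)}$; this is a centralizer condition used to rule out destabilizing one-parameter subgroups outside the standard torus, and this is where \Cref{thm:stab} is used --- but note it is \emph{not} the same as saying the stabilizer is reductive. Second, and this is the step your sketch omits entirely, the vector $(1,\ldots,1)$ must lie in the convex cone generated by the exponent vectors of the monomials of $P^{[d]}_{r,s}$; this is the actual Hilbert--Mumford-type condition preventing the orbit from degenerating along a one-parameter subgroup of the torus. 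If you prefer an argument in the spirit of Luna (pass to the fixed-point locus of a reductive subgroup $H\subset\Stab_{\SL(V)}(P^{[d]}_{r,s})$ and check closedness of the $N_{\SL(V)}(H)$-orbit there), that is possible, but it still bottoms out in a weight-cone verification analogous to the second condition above, not in a reductivity statement alone.
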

\begin{proof}
If $d = 2$, then $P^{[2]}_{r,s}$ is a polynomial of degree $2$ defining a quadratic form of maximal rank. This is polystable.

Suppose $d \geq 3$. A criterion for polystability is given in \cite[Prop. 2.8]{BI17}, based on works of Hilbert, Mumford, Luna, and Kempf. 

In order to apply this criterion, consider the group $R =  \Stab(P^{[d]}_{r,s}) \cap \bbT$, where $\bbT$ denotes the torus of diagonal matrices in $\GL_{rd+s}$, in the basis defined by the variables. By \Cref{thm:stab}, we deduce $R =(\bbT^{\SL_d})^{\times r} \times \bbZ_d^{\times s}$. This is a group consisting entirely of diagonal matrices and it is easy to verify that its centralizer in $\SL_{d+1}$ coincides with $\bbT^{\SL_{d+1}}$. This proves the first property of the criterion.

For the second property, consider the exponent vectors of the monomials appearing in $P^{[d]}_{r,s}$. For a monomial $m$, write $\wt(m)$ for its exponent vector. It is immediate to verify that 
\[
\sum_{i=1}^r \wt(x_{i1}\cdots x_{id}) + \frac{1}{d}\sum_{j=1}^s \wt(y_j^d) = (1 \vvirg 1);
\]
this shows that the vector $(1 \vvirg 1)$ lies in the convex cone generated by the exponent vectors of the monomials of $P^{[d]}_{r,s}$. This proves the second part of the criterion and concludes the proof.
\end{proof}

\cref{pro:polystable} reduces to the following in the special case $r=s=1$:
\begin{corollary}
Let $d\geq 2$. The product-plus-power polynomial $P^{[d]}_{1,1}$ is polystable, i.e., the orbit $\SL_{d+1}P^{[d]}_{1,1}$ is closed.
\end{corollary}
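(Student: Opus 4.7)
The plan is entirely straightforward: the corollary is nothing more than the specialization of \Cref{pro:polystable} to the case $r = s = 1$. Under this specialization the ambient vector space $V$ has dimension $rd + s = d + 1$, so $\SL(V) = \SL_{d+1}$, and the closedness of $\SL_{d+1} P^{[d]}_{1,1}$ is exactly the content of the proposition applied with these parameter values. There is nothing new to prove, and no obstacle to overcome.

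For the reader's convenience I would unpack the two conditions of the Hilbert-Mumford-Luna-Kempf criterion (as cited in the proof of \Cref{pro:polystable}) in this specific case, just to make the specialization concrete. By \Cref{thm:stab}, the intersection $R = \Stab_{\GL_{d+1}}(P^{[d]}_{1,1}) \cap \bbT^{\GL_{d+1}}$ equals $\bbT^{\SL_d} \times \bbZ_d$, where the first factor rescales $x_1,\dots,x_d$ by elements of determinant one and the second factor rescales $y_1$ by a $d$-th root of unity. Its centralizer in $\SL_{d+1}$ is the full diagonal torus $\bbT^{\SL_{d+1}}$, verifying the first hypothesis.

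For the second hypothesis, the two monomials of $P^{[d]}_{1,1}$ have exponent vectors $(1,\dots,1,0)$ and $(0,\dots,0,d)$, and the convex combination
\[
\wt(x_1 \cdots x_d) + \tfrac{1}{d}\,\wt(y_1^d) = (1,\dots,1,1)
\]
produces the all-ones vector, i.e.\ the weight corresponding to the center of $\GL_{d+1}$. This places $(1,\dots,1)$ in the convex cone of monomial weights, which is the second requirement of the criterion. Together these two checks reproduce polystability for the $r=s=1$ case, as predicted by \Cref{pro:polystable}. Since the general proposition has already been established in full generality, the corollary follows at once.
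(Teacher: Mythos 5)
Your proposal is correct and matches the paper's intent: the corollary carries no independent proof and is simply the specialization of \Cref{pro:polystable} to $r=s=1$, where the ambient dimension $rd+s$ becomes $d+1$. Your optional unpacking of the two Hilbert--Mumford--Luna--Kempf conditions in this special case is accurate (the torus intersection $\bbT^{\SL_d}\times\bbZ_d$, its centralizer being the full diagonal torus, and the weight identity $(1,\dots,1,0)+\tfrac{1}{d}(0,\dots,0,d)=(1,\dots,1,1)$) and just reproduces the proposition's proof with $r=s=1$ substituted in.
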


\subsection{Fundamental invariants and the Alon-Tarsi conjecture}
\label{sec:fundinvalontarsi}

The \emph{fundamental invariant} $\Phi$ of a polystable polynomial $f \in S^D V$ is the unique (up to scaling) smallest degree $\SL(V)$-invariant function in $\IC[\overline{\GL(V)f]}$, see Def.~3.8 in \cite{BI17}.
It describes the connection between the orbit and the orbit-closure of $f$: more formally, the coordinate ring of the orbit $\IC[\GL(V)f]$ is canonically isomorphic to the localization at $\Phi$ of the coordinate ring of the orbit-closure, that is $\IC[\overline{\GL(V)f}]_\Phi$; see \cite[Pro.~3.9]{BI17}.
This connection can be used to exhibit multiplicity obstructions, as was done in \cite{IK20}.

It is known that for even $d$ the orbit closure
$\overline{\GL_{d}(x_1 \cdots x_d)}$ of a squarefree monomial
has fundamental invariant of degree $d$ if and only if the Alon-Tarsi conjecture on Latin squares holds for $d$; see \cite{KumLan:ConnectionsAlonTarsi_HadHowe} and \cite[Prop.~3.26]{BI17}; otherwise the fundamental invariant has higher degree. In this section we show an analogous result for the orbit closure $\overline{\GL_{d+1}(x_1 \cdots x_d+x_{d+1}^d)}$: if $d$ is even this orbit closure has fundamental invariant of degree $d+1$ if and only if the Alon-Tarsi conjecture on Latin squares holds for $d$; otherwise the fundamental invariant has higher degree.

\begin{proposition}
\label{pro:alontarsi}
Let $d$ be even.
The degree of the fundamental invariant of $P^{[d]}_{1,1}$ is $d+1$ if and only if the Alon-Tarsi conjecture for $d$ is true, otherwise it is of higher degree.
\end{proposition}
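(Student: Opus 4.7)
The plan is to adapt the argument of \cite[Prop.~3.26]{BI17} for $\overline{\GL_d(x_1\cdots x_d)}$, exploiting the scaling between the two summands of $P^{[d]}_{1,1}$ to reduce the analysis of its $\SL_{d+1}$-invariants to the same signed Latin-square count of order $d$ that governs the product polynomial case.

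\emph{Step 1 (Minimal invariant degree).} By \Cref{pro:polystable} the fundamental invariant of $P^{[d]}_{1,1}$ exists, so the task is to identify the smallest $D$ with $\bbC[\overline{\GL_{d+1}P^{[d]}_{1,1}}]^{\SL_{d+1}}_D \ne 0$. A $\GL_{d+1}$-weight $\la=(\mu^{d+1})$ is $\SL_{d+1}$-trivial only when $(d+1)\mu=Dd$, and $\gcd(d,d+1)=1$ forces $(d+1)\mid D$. Hence the minimum possible degree is $D=d+1$, with irreducible type $\la=(d^{d+1})$. Since any nonzero $\SL_{d+1}$-invariant is nonzero on the entire $\GL_{d+1}$-orbit (up to a $\bbC^*$-weight), the fundamental invariant has degree $d+1$ iff there is some $\Phi\in S^{d+1}(S^d V^*)^{\SL_{d+1}}$ with $\Phi(P^{[d]}_{1,1})\ne 0$; otherwise it sits in a strictly larger multiple of $d+1$.

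\emph{Step 2 (Scaling reduction).} Using the one-parameter subgroup $g_t=\mathrm{diag}(1,\dots,1,t)\in\GL_{d+1}$, a direct computation yields $g_t\cdot P^{[d]}_{1,1}=x_1\cdots x_d+t^d x_{d+1}^d$. Decomposing $g_t=t^{1/(d+1)}h_t$ with $h_t\in\SL_{d+1}$ and noting that the center $\bbC^*\subset\GL_{d+1}$ acts on $S^{d+1}(S^d V^*)$ with weight $d(d+1)$, one obtains $\Phi(g_t\cdot P^{[d]}_{1,1})=t^d\,\Phi(P^{[d]}_{1,1})$. Letting $h(s):=\Phi(x_1\cdots x_d+s\,x_{d+1}^d)$, the polynomial identity $h(t^d)=t^d h(1)$ forces $h(s)=s\cdot h(1)$; in particular $\Phi(x_1\cdots x_d)=0$, all higher polarizations $\Phi_k$ with $k\ge 2$ vanish at $(x_1\cdots x_d;x_{d+1}^d)$, and $\Phi(P^{[d]}_{1,1})=(d+1)\,\Phi_1(x_1\cdots x_d;\,x_{d+1}^d)$, where $\Phi_1$ is the $(d,1)$-symmetric polarization of $\Phi$.

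\emph{Step 3 (Alon-Tarsi count).} By the First Fundamental Theorem for $\SL_{d+1}$, each $\Phi\in S^{d+1}(S^d V^*)^{\SL_{d+1}}$ is a linear combination of $\epsilon$-contractions that consume the $d(d+1)$ index slots of the $d+1$ copies of $S^d V^*$ using $d$ copies of the Levi-Civita tensor $\epsilon\in\Lambda^{d+1}V$. Evaluating such a contraction on $(x_1\cdots x_d,\dots,x_1\cdots x_d;\,x_{d+1}^d)$, the $(d+1)$-st slot forces the index $d+1$ into each $\epsilon$-factor, while each of the first $d$ slots contributes a permutation of $\{1,\dots,d\}$; the nonvanishing contractions are therefore indexed by $d\times d$ arrays whose rows and columns are each permutations of $[d]$ -- i.e.\ by Latin squares of order $d$ -- weighted by the product of row signs of the $\epsilon$-factors. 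For even $d$ this signed sum equals (up to a nonzero multiplicative constant) the Alon-Tarsi number $\mathrm{AT}(d)$, so some $\Phi\in S^{d+1}(S^d V^*)^{\SL_{d+1}}$ satisfies $\Phi(P^{[d]}_{1,1})\ne 0$ iff $\mathrm{AT}(d)\ne 0$, proving the proposition. The main obstacle is this last identification: one must faithfully extract the Latin-square sign sum from the invariant-theoretic expansion of $\Phi$, a computation essentially parallel to the one in the proof of \cite[Prop.~3.26]{BI17} with additional bookkeeping to handle the extra variable $x_{d+1}$ distinguishing the $\GL_{d+1}$-setting of $P^{[d]}_{1,1}$ from the $\GL_d$-setting of $x_1\cdots x_d$.
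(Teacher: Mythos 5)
Your proposal reaches the same Latin-square sign count as the paper, but via a genuinely different route, and Steps~1 and~2 are correct and worth comparing. Step~1 replaces the paper's explicit tableau enumeration with a divisibility argument on weights: since an $\SL_{d+1}$-invariant of degree $D$ in $S^D(S^d V^*)$ must have rectangular weight $\lambda = ((Dd/(d+1))^{d+1})$, coprimality of $d$ and $d+1$ forces $(d+1)\mid D$, so $d+1$ is the smallest candidate degree. Step~2 is a clean scaling/polarization reduction that the paper does not make explicit: using $g_t=\mathrm{diag}(1,\dots,1,t)$ you establish that $h(s):=\Phi(x_1\cdots x_d+s\,x_{d+1}^d)$ is linear in $s$, whence $\Phi(P^{[d]}_{1,1})=(d+1)\,\Phi_1(x_1\cdots x_d;\,x_{d+1}^d)$. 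The paper arrives at the same state a posteriori, by observing directly that only proper placements sending exactly one row to the summand $x_{d+1}^d$ survive the $\epsilon$-determinants.

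Step~3, however, is where the proof actually lives, and you leave it as a sketch. The paper does the careful work here: it writes the candidate invariant $f_T$ explicitly as a sum over proper placements $\vartheta$, partitions the surviving summands by the row of $T$ that receives the block $(d+1,\ast)$, and then uses the evenness of the number of columns ($=d$) to argue that all $d+1$ parts of this partition contribute the same column-sign count of Latin squares, giving $(d+1)\cdot\mathrm{AT}(d)$. Your $\epsilon$-contraction reformulation leaves two things unverified: (i) that the weight you extract — the product of row signs of the $\epsilon$-factors — coincides, up to a nonzero multiplicative constant, with the column sign governing the Alon--Tarsi conjecture; and (ii) where the evenness of $d$ actually enters. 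In the paper it is used precisely when comparing the $d+1$ parts of the row-partition, but in your polarized picture there is no such partition, so the parity dependence must enter in the sign computation itself, yet it is invisible in the sketch as written. You flag this as "the main obstacle" and assert it parallels \cite{BI17}; the instinct is right — it is essentially the paper's own computation — but until the bookkeeping is carried out the argument is incomplete precisely at the point where the hypotheses (evenness and Alon--Tarsi) become relevant.
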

\begin{proof}
We follow the presentation in \cite{CIM17, BI17, blaser2020complexity}.
For a partition $\la$ we place positive integers into the boxes of the Young diagram and call it a \emph{tableau} $T$ of shape $\la$.
The vector of numbers of occurrences of 1s, 2s, etc, is called the \emph{content} of $T$.
The content is $n \times d$ if $T$ has exactly $d$ many 1s, $d$ many 2s, $\ldots$, $d$ many $n$s.
The set of boxes of the Young diagram of $\la$ is denoted by $\textup{boxes}(\la)$.
The boxes that have the same number are said to form a \emph{block}.

Let $m=n+1$.
Fix a tableau $T$ of shape $\la$ with content $n \times d$ and fix a tensor $p = \sum_{i=1}^r \ell_{i,1}\otimes\cdots\otimes\ell_{i,d} \in \otimes^d \IC^m$.
A placement
\[
\vartheta : \textup{boxes}(\la) \to [r] \times [d]
\]
is called \emph{proper} if
the first coordinate of $\vartheta$ is constant in each block
and the second coordinate of $\vartheta$ in each block is a permutation.
We define the determinant of a matrix that has more rows than columns as the determinant of its largest top square submatrix.

For a tableau $T$ with content $\Delta\times d$
we define the polynomial $f_T$ via its evaluation on $p$:
\begin{equation}\label{eq:sumpropertheta}
    f_T(p)
    := \sum_{\text{proper\ }\vartheta} \ \prod_{c = 1}^{\lambda_1}\det{}_{\vartheta, c} \ \text{ with } \ \det{}_{\vartheta, c} := \det\left(\ell_{\vartheta(1, c)} \dots \ell_{\vartheta(\mu_c, c)}\right)
\end{equation}

The degree of $f_T$ is $\Delta$.
The polynomial $f_T$ is $\SL_m$-invariant if and only if the shape of $T$ is rectangular with exactly $m$ many rows.
It is easy to see that $f_T=0$ if $T$ has any column in which a number appears more than once.
Moreover, it is easy to see that $f_T$ is fixed (up to sign) when two entries in $T$ are exchanged within a column.
So, up to sign, there is only one $T$ that could give an $\SL_m$-invariant of degree $d+1$: It is the tableau with $m=d+1$ many rows and $d$ columns that has only entries $i$ in row $i$. For $n=4$ it looks as follows.

\[
T = \young(1111,2222,3333,4444,5555)\]

For this $T$ it remains to verify that $f_T$ does not vanish identically on to orbit closure $\overline{\GL_{d+1}(x_1\cdots x_d+x_{d+1}^d)}$. Since $f_T$ is $\SL_{d+1}$-invariant, this is equivalent to $f_T$ not vanishing at the point $x_1\cdots x_d+x_{d+1}^d$.
So we now evaluate $f_T(x_1\cdots x_d+x_{d+1}^d)$.
The nonzero summands in \cref{eq:sumpropertheta} must place $(d+1,\ast)$ into one of the blocks.
We can partition the summands according to the row in which $(d+1,\ast)$ is placed.
Since the number of columns is even,
each part of the partition contributes the same number to the overall sum. That number is the column sign of the unique Latin square that is obtained when removing the row in which $(d+1,\ast)$ is placed.
Hence the whole sum if $d+1$ times the difference of the column-even and column-odd Latin squares, so its nonvanishing is equivalent to the Alon-Tarsi conjecture for $d$.
\end{proof}

\begin{remark}
Other fundamental invariants connected to the Alon-Tarsi conjecture have recently been studied in \cite{LZX21,10.1093/imrn/rnac311}.
\end{remark}

\subsection{New obstructions}
\label{subsec:obstructions}
For two partitions $\la$ and $\mu$, their sum is defined coordinatewise, i.e., $(\la+\mu)_i = \la_i+\mu_i$. We write $a \times b$ for the partition $(b,b,\ldots,b)$ of $ab$.
For example, if $d=3$, then the Young diagram to $\la := (5d-1,1)+((d+1)\times (10d))$ is the following:\\
{\tiny
\yng(44,31,30,30)
}\\

This section is devoted to proving the following result, which is a restatement of \Cref{thm:intro:obstructions}.
\begin{theorem}\label{thm:obstruction}
Let $d \geq 3$ be even, and let $\la := (5d-1,1)+\big((d+1)\times (10d)\big)$. Then we have representation theoretic multiplicity obstructions:
\[
\mult_\la(\overline{\IC[\GL_{d+1} P^{[d]}_{1,1}]}) \leq 4 < 5 = \mult_\la(\overline{\IC[\GL_{d+1} (x_1^d+\cdots +x_{d+1}^d)]}),\]
and hence
$\overline{\GL_{d+1} (x_1^d+\cdots +x_{d+1}^d)} \not\subseteq \overline{\GL_{d+1} P^{[d]}_{1,1}}$.
\end{theorem}
We point out that these obstructions of \Cref{thm:obstruction} are only based on the symmetries of the two polynomials as in \cite{IK20}.

The upper and the lower bound are proved independently, see \Cref{pro:boundFOUR} and \Cref{pro:boundFIVE}, which proves the theorem. Fix the following notation: $\kappa := (5d-1,1)$,  $\blacksquare:=(d+1)\times (10d)$,
$\square:=d\times (10d)$,
$\Delta := |\square|/d = 10d$,
and 
$\la := \kappa+\blacksquare$.

\begin{proposition}
\label{pro:boundFOUR}
$\mult_\la(\IC[\overline{\GL_{d+1}P^{[d]}_{1,1}}]) \leq \mult_\la(\IC[\GL_{d+1}P^{[d]}_{1,1}]) = \mult_\kappa(\IC[\GL_{d+1}P^{[d]}_{1,1}]) = 4$.
\end{proposition}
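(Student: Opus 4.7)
The plan is to split the statement into its three constituent pieces, using throughout the Peter-Weyl identification $\mult_\la(\IC[\GL_{d+1} P^{[d]}_{1,1}]) = \dim (S_\la V)^H$ from \Cref{subsec:multiplicitiesorbit}, where $V = \IC^{d+1}$ and $H = (\bbT^{\SL_d} \rtimes \frakS_d) \times \bbZ_d$ is the stabilizer described in \Cref{thm:stab}. The first inequality follows from the general fact that the orbit $\GL_{d+1} P^{[d]}_{1,1}$ is Zariski dense in its closure, so that restriction of regular functions provides an injective $\GL_{d+1}$-equivariant homomorphism $\IC[\overline{\GL_{d+1} P^{[d]}_{1,1}}] \hookrightarrow \IC[\GL_{d+1} P^{[d]}_{1,1}]$; passing to $S_\la V$-isotypic components preserves the injection and gives the inequality on multiplicities.

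For the middle equality, I would observe that $\la - \kappa = \blacksquare$ is the rectangle $10d \cdot (1^{d+1})$ with $d+1 = \dim V$ rows, so $S_\la V \cong S_\kappa V \otimes \det^{10d}$ as $\GL_{d+1}$-modules; it then suffices to verify that $\det^{10d}$ restricts to the trivial character of $H$. The subtorus $\bbT^{\SL_d}$ acts with determinant $1$ on $V$ by definition; a permutation in $\frakS_d$ acts on $V$ with determinant $\pm 1$, which is annihilated by the even exponent $10d$; and a generator $\zeta$ of $\bbZ_d$ acts as $\mathrm{diag}(1,\dots,1,\zeta)$ on $V$, hence with determinant $\zeta$ satisfying $\zeta^{10d} = 1$. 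Therefore $(S_\la V)^H = (S_\kappa V)^H$ as vector spaces, which gives the equality of multiplicities.

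For the final equality I would apply \Cref{pro:branchingrules} to $\kappa = (5d-1,1)$ with $D = 5$. Since $\kappa^t = (2, 1^{5d-2})$, any $\mu \preceq \kappa$ has $\ell(\mu) \leq 2$ and must be of the form $(j)$ or $(j,1)$; combined with $\mu \subseteq \kappa$ and $|\mu| = \delta d$, the valid candidates are $(\delta d)$ and $(\delta d - 1, 1)$ for $\delta \in \{1,2,3,4\}$, together with $\mu = \kappa$ for $\delta = 5$, while $\delta = 0$ is excluded because $\kappa/\emptyset$ has a column of height $2$. For these two-row $\mu$ the plethysm coefficient $a_\mu(d,\delta)$ is stable in $\dim V$ and coincides with the $\GL_2$ plethysm coefficient, given by Hermite reciprocity / Cayley-Sylvester as $a_{(d\delta - k, k)}(d,\delta) = p(k;d,\delta) - p(k-1;d,\delta)$, where $p(k;d,\delta)$ counts partitions of $k$ inside a $d \times \delta$ box. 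Substituting $k = 0$ and $k = 1$ yields $a_{(d\delta)}(d,\delta) = 1$ and $a_{(d\delta - 1, 1)}(d,\delta) = 0$ for every $\delta \geq 1$, and hence $a_\kappa(d, 5) = 0$ as well. The four surviving contributions, coming from $\mu = (d),(2d),(3d),(4d)$, each equal $1$ and sum to $4$. The main technical input I expect to require care is the invocation of Hermite reciprocity for the two-row plethysm coefficients; the enumeration of $\mu \preceq \kappa$ and the triviality of $\det^{10d}|_H$ are direct bookkeeping from \Cref{thm:stab}, and work uniformly in $d$ because $10d$ is always even.
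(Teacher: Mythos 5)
Your proposal is correct, and the first and third steps essentially coincide with the paper's argument: the inequality comes from the dense (open) orbit / localization injection $\IC[\overline{\GL_{d+1}P^{[d]}_{1,1}}]\hookrightarrow\IC[\GL_{d+1}P^{[d]}_{1,1}]$, and the value $4$ comes from enumerating $\mu\preceq\kappa$ in \Cref{pro:branchingrules} and using $a_{(nm)}(n,m)=1$, $a_{(nm-1,1)}(n,m)=0$ (the paper dismisses the hooks by shape, you via Cayley--Sylvester; both are fine). Where you genuinely diverge is the middle equality. The paper stays entirely inside the branching formula: it observes that $\mu\preceq\la$ with $\ell(\mu)\leq d$ forces $\mu\supseteq\square=d\times(10d)$, and then invokes the plethysm column-stability identity $a_{\nu+\square}(d,i+\Delta)=a_\nu(d,i)$, which is where the evenness of the number of columns of $\square$ enters. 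You instead work upstream of the branching rule, using $S_\la V\cong S_\kappa V\otimes\det^{10d}$ and checking that $\det^{10d}$ restricts trivially to $H=\bbZ_d\times(\bbT^{\SL_d}\rtimes\frakS_d)$: the torus has determinant $1$, the sign of $\frakS_d$ is killed because $10d$ is even, and the $\bbZ_d$-character is killed because $d\mid 10d$. This is cleaner and more conceptual -- it makes transparent why the block $\blacksquare$ was chosen with $d+1$ rows and an even, $d$-divisible number of columns -- and it bypasses the plethysm stability fact entirely; the paper's route has the advantage of producing the explicit term-by-term identification of the two multiplicity sums, which is the form reused in the computations of \S\oldref{sec:calctables}. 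Both arguments are valid and give the same chain of (in)equalities.
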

\begin{proof}
The ring $\IC[\overline{\GL_{d+1}P^{[d]}_{1,1}}]$ is 
a localization of the ring $\IC[\GL_{d+1}P^{[d]}_{1,1}]$, see \cite{BI17}, which implies $\mult_\la(\IC[\overline{\GL_{d+1}P^{[d]}_{1,1}}]) \leq \mult_\la(\IC[\GL_{d+1}P^{[d]}_{1,1}])$.
We observe that $a_{\nu+\square}(d,i+\Delta) = a_{\nu}(d,i)$, because $\square$ has an even number of columns and exactly $d$ rows.
Then we calculate:
\begin{eqnarray*}
\mult_\la(\IC[\GL_{d+1}P^{[d]}_{1,1}])
&=&
\sum_{\delta=0}^5\sum_{\substack{\mu\vdash \delta d \\ \mu\preceq\la \\\ell(\mu)\leq d}} a_\mu(d,\delta)
\\
&=&
 a_{(d)+\square}(d,1+\Delta)+a_{(d-1,1)+\square}(d,1+\Delta)+
a_{(2d)+\square}(d,2+\Delta)
 \\
&&+a_{(2d-1,1)+\square}(d,2+\Delta)+a_{(3d)+\square}(d,3+\Delta)+
a_{(3d-1,1)+\square}(d,3+\Delta)
\\
&&+
a_{(4d)+\square}(d,4+\Delta)+a_{(4d-1,1)+\square}(d,4+\Delta)+
a_{(5d-1,1)+\square}(d,5+\Delta)
\\
&=&
 a_{(d)}(d,1)+a_{(d-1,1)}(d,1)+
a_{(2d)}(d,2)+a_{(2d-1,1)}(d,2)+a_{(3d)}(d,3) \\
&&+
a_{(3d-1,1)}(d,3)+
a_{(4d)}(d,4)+a_{(4d-1,1)}(d,4)+
a_{(5d-1,1)}(d,5)
\\
&=& 4,
\end{eqnarray*}
because $a_{(nm)}(n,m)=1$, and $a_{(nm-1,1)}(n,m)=0$, because $(nm-1,1)$ is of hook shape, see \cite[Prop.~19.3.20]{BI18}. Note that there is no summand $a_{(5d)}(d,5)$ and no summand $a_{(0)}(d,0)$,
because $(5d) \not\preceq (5d-1,1)$, and $(0) \not\preceq (5d-1,1)$.
\end{proof}

\begin{proposition}
\label{pro:boundFIVE}
$\mult_\la(\IC[\overline{\GL_{d+1}(x_1^d+\cdots+x_{d+1}^d)}]) \geq 5$.
\end{proposition}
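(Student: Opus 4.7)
My plan is to adapt the strategy of~\cite{IK20} to the power sum $q := P^{[d]}_{0,d+1} = y_1^d + \cdots + y_{d+1}^d$, leveraging its large stabilizer and polystability. The first ingredient is polystability (\cref{pro:polystable}), which by~\cite[\S3]{BI17} supplies a fundamental invariant $\Phi \in \bbC[\overline{\GL_{d+1}q}]^{\SL_{d+1}}$ such that $\bbC[\overline{\GL_{d+1}q}]_{\Phi} \simeq \bbC[\GL_{d+1}q]$. Multiplication by powers of $\Phi$ is $\SL_{d+1}$-equivariant and injective on $\bbC[\overline{\GL_{d+1}q}]$, and shifts partitions by a multiple of $(1^{d+1})$. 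Since $\la = \kappa + \blacksquare$ has last row $\la_{d+1} = 10d$, which is comfortably above the threshold set by $\deg\Phi$, this stabilization forces the equality $\mult_\la(\bbC[\overline{\GL_{d+1}q}]) = \mult_\la(\bbC[\GL_{d+1}q])$. So I reduce the problem to lower-bounding the multiplicity in the coordinate ring of the orbit.

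The second step applies the Algebraic Peter-Weyl Theorem together with \cref{thm:stab}: $\mult_\la(\bbC[\GL_{d+1}q]) = \dim(S_\la\bbC^{d+1})^{H'}$, where $H' := \Stab_{\GL_{d+1}}(q) = \bbZ_d \wreath \frakS_{d+1}$. I then compute this dimension by a branching calculation analogous to \cref{pro:branchingrules}. The key difference is that $H'$ does not respect the $\GL_1 \times \GL_d$ splitting used there, because $\frakS_{d+1}$ mixes all $d+1$ variables; however the subgroup $H \cap H'$ (where $H = \Stab(P^{[d]}_{1,1})$) is still visible, and averaging the Pieri expansion over the cosets of $\frakS_d \subset \frakS_{d+1}$ (equivalently, using Gay-type character averaging for wreath-product subgroups) yields $\dim(S_\la\bbC^{d+1})^{H'}$ as a sum of plethysm coefficients $a_\mu(d,\delta)$ indexed by sub-partitions $\mu \preceq \la$ with $\ell(\mu) \leq d$, subject to additional symmetry constraints weaker than those in \cref{pro:branchingrules}.

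Using the rectangular stability $a_{\nu+\square}(d, i+\Delta) = a_\nu(d,i)$ as in \cref{pro:boundFOUR}, this sum reduces to plethysm coefficients on small shapes. I expect to recover the four unit contributions $a_{(d)}(d,1) = a_{(2d)}(d,2) = a_{(3d)}(d,3) = a_{(4d)}(d,4) = 1$ unchanged, giving a running total of $4$, plus one extra contribution that is \emph{forbidden} for $H$ but permitted by the larger $H'$-symmetry. The parity hypothesis on $d$ and the inequality $\binom{2(d-1)}{d-1} \geq 2d$ for odd $d$ enter exactly here: for even $d$ the extra invariant should arise from an Alon--Tarsi-type highest-weight-vector construction (cf.\ \cref{pro:alontarsi}) supplying a non-zero $\SL_{d+1}$-invariant in the expected degree; for odd $d$ the classical Alon--Tarsi argument fails, and instead one bounds a specific plethysm component via the dimension estimate $\dim S_\nu \bbC^d \geq \binom{2(d-1)}{d-1}$ for an appropriate $\nu$, which is forced to exceed $2d$ by the hypothesis.

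The main obstacle will be the branching computation in Step 2: isolating the precise additional invariants contributed by the wreath-product symmetry and showing they are linearly independent from the four inherited from $H$. The cleanest route is to construct the fifth invariant \emph{explicitly} as a highest weight vector in $S_\la\bbC^{d+1}$ built from the symmetries of $q$, analogous to~\cite[Sec.~4]{IK20}, and verify non-vanishing by evaluating at $q$ itself---an evaluation which will reduce precisely to the Alon--Tarsi (resp.\ dimension) input appearing in the statement.
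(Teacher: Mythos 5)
Your Step~1 is conceptually aligned with the paper: both invoke polystability of the power sum and use the fundamental invariant to relate the orbit closure to the orbit. In the paper this is handled by citing Theorem~4.2 of \cite{IK20}, which gives precisely the inequality $\mult_{\kappa+\blacksquare}(\bbC[\overline{\GL_{d+1}q}]) \geq \mult_\kappa(\bbC[\GL_{d+1}q])$ once a padding hypothesis on $\blacksquare$ is met; your phrasing as an equality by ``stabilization'' is stronger than needed and not justified as stated, but the inequality direction you need is fine.

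The genuine gap is Step~2. You acknowledge that the $\GL_1 \times \GL_d$ restriction used in \cref{pro:branchingrules} is unavailable for the full wreath stabilizer $\bbZ_d \wreath \frakS_{d+1}$, and you propose ``averaging the Pieri expansion over cosets of $\frakS_d \subset \frakS_{d+1}$ (equivalently, using Gay-type character averaging for wreath-product subgroups).'' This is not a derivation and does not in fact yield a usable branching formula; the invariants of $\frakS_{d+1}$ cannot be obtained by coset-averaging $\frakS_d$-invariants, since the latter is a much larger space. What is actually needed, and what the paper uses, is the closed formula of Theorem~4.1 in \cite{IK20}: the multiplicity $\mult_\kappa(\bbC[\GL_m(x_1^d+\cdots+x_m^d)])$ is a sum over partitions $\varrho\vdash_m D$ of products of multi-Littlewood--Richardson coefficients $c^\kappa_{\mu^1,\ldots,\mu^D}$ with plethysm coefficients $a_{\mu^i}(\hat\varrho_i, id)$, where $\hat\varrho$ is the frequency vector. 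For $\kappa=(5d-1,1)$, the hook-shape constraint forces each $\mu^i$ to be a row, reducing the multi-LR coefficient to a count of semistandard tableaux of shape $(5d-1,1)$; the frequency vectors of the seven partitions of $5$ show that exactly five of them have two nonzero entries, each contributing $1$, giving exactly $5$. Your expectation of ``the four contributions from \cref{pro:boundFOUR} plus one extra'' is an intuition, but the paper's computation is of a structurally different shape (a sum over $\varrho\vdash 5$, not a sum over $\delta$ and $\mu\preceq\la$), so the claimed matching of four terms would require an argument you have not supplied.

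Finally, your third and fourth steps invoke an Alon--Tarsi mechanism and an explicit highest-weight-vector construction, but neither is relevant here. The Alon--Tarsi story in \cref{pro:alontarsi} concerns the fundamental invariant of $P^{[d]}_{1,1}$, not of the power sum, and it plays no role in the proof of this proposition. The parity hypothesis and the inequality $\binom{2(d-1)}{d-1}\geq 2d$ in \cref{thm:obstruction} are there to satisfy the numerical hypotheses of Theorem~4.2 of \cite{IK20} (the check $\sum_i 2\lceil \varrho_i/(2(d-2))\rceil \leq 10 = e_\Xi$), not to produce an extra invariant. Moreover, an explicit HWV evaluation is precisely what the \cite{IK20} method is designed to avoid; falling back to it would abandon the symmetry-only approach that the theorem statement emphasizes.
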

\begin{proof}
We use the Main Technical Theorem 4.2 from \cite{IK20}.
Consider all partitions $\varrho$ of 5, and observe that $\sum_{i=1}^{d+1} 2\lceil \frac{\varrho_i}{2(d-2)}\rceil \leq 10$.
In the notation of \cite{IK20}, we set $e_\Xi := 10$, which is exactly how many $(d+1)\times d$ blocks form $\blacksquare$.

For a partition $\varrho \vdash_m D$ the \emph{frequency notation} $\hat\varrho \in \IN^m$ is defined via
$\hat\varrho_i := |\{j \mid \varrho_j = i\}|$.
For example, the frequency notation of $\varrho=(3,3,2,0)$ is $\hat\varrho=(0,1,2,0)$.
We observe that $|\varrho|=\sum_i i \hat\varrho_i$.
We first use Theorem 4.1 from \cite{IK20} (with adjusted notation):

Let $m := d+1$, $D:=5$, $\kappa = (5d-1,1) \vdash_{m} Dd$.
Define 
\[
b(\kappa,\varrho,d,D) := \sum_{\mu^1,\mu^2,\ldots,\mu^D \atop \mu^i \vdash d i \hat\varrho_i} c_{\mu^1,\mu^2,\ldots,\mu^D}^\kappa \prod_{i=1}^D a_{\mu^i}(\hat\varrho_i,i\cdot d).
\]
Then
\[
\mult_{\kappa} \IC[\GL_m (x_1^d + x_2^d + \cdots + x_m^d)] = \sum_{\varrho\vdash_m D} b(\kappa,\varrho,d,D).
\]
For the multi-Littlewood-Richardson coefficient to be nonzero, it is necessary that all $\mu^i \subseteq (5d-1,1)$, so each $\mu^i$ is either a single row or a hook $(\widehat\varrho_i\cdot i \cdot d -1 ,1)$.
But $a_{nm-1,1}(n,m) = 0$ and $a_{nm}(n,m) = 1$, so we can assume that the sum has only the summand with $\mu^i = (\widehat\varrho_i\cdot i \cdot d)$ and the product of plethysm coefficients is 1.
Hence, the multi-Littlewood-Richardson coefficient counts the number of semistandard tableaux of shape $(5d-1,1)$ and content $(\mu^1,\ldots,\mu^5)$.

It is instructive to look at all possible $\widehat\varrho$:
$\widehat{(1,1,1,1,1)}=(5)$,
$\widehat{(2,1,1,1)}=(3,1)$,
$\widehat{(2,2,1)}=(1,2)$,
$\widehat{(3,1,1)}=(2,0,1)$,
$\widehat{(3,2)}=(0,1,1)$,
$\widehat{(4,1)}=(1,0,0,1)$,
$\widehat{(5)}=(0,0,0,0,1)$.
We observe that $\widehat\varrho$ has exactly two nonzero entries in 5 cases, and only one nonzero entry in 2 cases. There are no semistandard tableaux of shape $(5d-1,1)$ with only one entry, and there is exactly one semistandard tableaux of shape $(5d-1,1)$ with two symbols and fixed content.
Hence
\[
\mult_{\kappa} \IC[\GL_{d+1} (x_1^d + x_2^d + \cdots + x_{d+1}^d)] = 5.
\]
Note that this argument works indeed for all $d\geq 3$, even though for $d=3$ we do not have $\varrho=(1,1,1,1,1)$ in the sum (because it has more than $d+1=4$ rows, but its contribution is zero anyway).

We now apply Theorem 4.2 from \cite{IK20}, which implies
\[
\mult_{\blacksquare + \kappa} \IC[\overline{\GL_{d+1} (x_1^d + x_2^d + \cdots + x_{d+1}^d)}] \geq 5.\qedhere
\]
\end{proof}

\subsubsection{Acknowledgements} P.D.\ is supported by the project titled~``Computational Hardness of Lattice Problems and Implications", funded by National Research Foundation (NRF) Singapore. The work of F.G.\ is partially supported by  the Thematic Research Programme ``Tensors: geometry, complexity and quantum entanglement'', University of Warsaw, Excellence Initiative -- Research University and the Simons Foundation Award No. 663281 granted to the Institute of Mathematics of the Polish Academy of Sciences for the years 2021-2023.
C.I.\ was supported
by the DFG grant IK 116/2-1 and the EPSRC grant EP/W014882/1.
Part of the work was done while V.L. was affiliated with the QMATH Centre, University of Copenhagen.
V.L. acknowledges financial support from VILLUM FONDEN via the QMATH Centre of Excellence (Grant No. 10059)
and the European Union (ERC Grant Agreements 818761 and 101040907). 
Views and opinions expressed are however those of the author(s) only and do not necessarily reflect those of the European Union or the European Research Council Executive Agency. Neither the European Union nor the granting authority can be held responsible for them.

{\small
%\bibliographystyle{alphaurl}
%\bibliography{biblioBorder.bib}
\newcommand{\etalchar}[1]{$^{#1}$}

}

\newpage
\appendix

\section{Calculation tables}
\label{sec:calctables}

We list the partitions $\la$ for which the plethysm coefficient $a := a_\la(\delta,d)$ exceeds the multiplicity $b := \mult_\la(\IC[\GL_{d+1}(x_1\cdots x_d+x_{d+1}^d)])$. We write $\la_{a > b}$.
We list $\la$ always with all $d+1$ parts, i.e., with all trailing zeros. $\la$ always has $d\delta$ many boxes.
If we list a case $(d,\delta)$ and not list $(d,\delta')$ with $\delta'<\delta$, then this means that $(d,\delta')$ is empty.

\subsection*{$d=3$, $\delta=8$:}
$(8,8,4,4)_{2>1}$, 
$(10,6,4,4)_{4>3}$

\subsection*{$d=4$, $\delta=6$:}
$(6,6,4,4,4)_{1>0}$, 
$(7,7,5,5,0)_{1>0}$, 
$(7,7,7,3,0)_{1>0}$, 
$(8,5,5,3,3)_{1>0}$

\subsection*{$d=4$, $\delta=7$:}
$(7,7,5,5,4)_{1>0}$, 
$(7,7,6,5,3)_{1>0}$, 
$(7,7,7,4,3)_{1>0}$, 
$(7,7,7,5,2)_{1>0}$, 
$(7,7,7,7,0)_{1>0}$, 
$(8,6,6,4,4)_{4>1}$, 
$(8,7,5,4,4)_{1>0}$, 
$(8,7,5,5,3)_{2>0}$, 
$(8,7,6,4,3)_{4>2}$, 
$(8,7,6,5,2)_{4>1}$, 
$(8,7,7,3,3)_{3>0}$, 
$(8,7,7,4,2)_{1>0}$, 
$(8,7,7,5,1)_{3>0}$, 
$(8,8,4,4,4)_{4>2}$, 
$(8,8,5,4,3)_{4>1}$, 
$(8,8,6,4,2)_{9>4}$, 
$(8,8,7,3,2)_{3>1}$, 
$(8,8,7,4,1)_{4>3}$, 
$(8,8,8,2,2)_{3>2}$, 
$(9,6,5,4,4)_{3>0}$, 
$(9,6,5,5,3)_{1>0}$, 
$(9,6,6,4,3)_{5>3}$, 
$(9,6,6,5,2)_{4>3}$, 
$(9,7,4,4,4)_{2>1}$, 
$(9,7,5,4,3)_{7>2}$, 
$(9,7,5,5,2)_{5>1}$, 
$(9,7,6,3,3)_{5>3}$, 
$(9,7,6,4,2)_{10>5}$, 
$(9,7,6,5,1)_{6>4}$, 
$(9,7,7,3,2)_{5>1}$, 
$(9,7,7,4,1)_{5>2}$, 
$(9,7,7,5,0)_{2>1}$, 
$(9,8,4,4,3)_{5>2}$, 
$(9,8,5,3,3)_{4>1}$, 
$(9,8,5,4,2)_{11>5}$, 
$(9,8,5,5,1)_{4>3}$, 
$(9,8,6,3,2)_{11>6}$, 
$(9,8,6,4,1)_{12>11}$, 
$(9,8,7,2,2)_{5>3}$, 
$(9,8,7,3,1)_{8>6}$, 
$(9,9,4,3,3)_{3>1}$, 
$(9,9,4,4,2)_{2>1}$, 
$(9,9,5,3,2)_{7>5}$, 
$(9,9,5,4,1)_{6>4}$, 
$(10,5,5,5,3)_{1>0}$, 
$(10,6,4,4,4)_{7>2}$, 
$(10,6,5,4,3)_{6>2}$, 
$(10,6,5,5,2)_{2>0}$, 
$(10,6,6,4,2)_{13>8}$, 
$(10,7,4,4,3)_{8>4}$, 
$(10,7,5,3,3)_{7>3}$, 
$(10,7,5,4,2)_{14>6}$, 
$(10,7,5,5,1)_{6>2}$, 
$(10,7,6,3,2)_{14>8}$, 
$(10,7,6,4,1)_{15>13}$, 
$(10,7,7,2,2)_{1>0}$, 
$(10,7,7,3,1)_{10>5}$, 
$(10,8,4,3,3)_{2>1}$, 
$(10,8,4,4,2)_{17>9}$, 
$(10,8,5,3,2)_{15>8}$, 
$(10,8,5,4,1)_{17>14}$, 
$(10,8,6,2,2)_{17>10}$, 
$(10,9,4,3,2)_{10>7}$, 
$(10,9,4,4,1)_{10>9}$, 
$(10,9,5,2,2)_{10>6}$, 
$(10,10,4,2,2)_{9>5}$, 
$(11,5,4,4,4)_{2>1}$, 
$(11,5,5,4,3)_{3>0}$, 
$(11,6,4,4,3)_{8>4}$, 
$(11,6,5,3,3)_{3>2}$, 
$(11,6,5,4,2)_{13>6}$, 
$(11,6,5,5,1)_{3>2}$, 
$(11,6,6,3,2)_{10>9}$, 
$(11,7,4,3,3)_{6>3}$, 
$(11,7,4,4,2)_{14>9}$, 
$(11,7,5,3,2)_{18>9}$, 
$(11,7,5,4,1)_{18>15}$, 
$(11,7,6,2,2)_{12>7}$, 
$(11,7,7,2,1)_{8>7}$, 
$(11,8,4,3,2)_{17>10}$, 
$(11,8,5,2,2)_{17>12}$, 
$(11,9,3,3,2)_{5>3}$, 
$(11,9,4,2,2)_{12>9}$, 
$(11,10,3,2,2)_{6>4}$, 
$(12,4,4,4,4)_{4>3}$, 
$(12,5,4,4,3)_{4>2}$, 
$(12,5,5,3,3)_{3>0}$, 
$(12,5,5,4,2)_{3>1}$, 
$(12,5,5,5,1)_{1>0}$, 
$(12,6,4,4,2)_{17>11}$, 
$(12,6,5,3,2)_{12>8}$, 
$(12,6,5,4,1)_{13>12}$, 
$(12,6,6,2,2)_{13>10}$, 
$(12,7,3,3,3)_{1>0}$, 
$(12,7,4,3,2)_{17>11}$, 
$(12,7,5,2,2)_{14>10}$, 
$(12,8,3,3,2)_{4>3}$, 
$(12,8,4,2,2)_{23>18}$, 
$(12,9,3,2,2)_{9>8}$, 
$(13,5,4,3,3)_{2>0}$, 
$(13,5,4,4,2)_{8>6}$, 
$(13,5,5,3,2)_{4>2}$, 
$(13,5,5,4,1)_{4>3}$, 
$(13,6,4,3,2)_{13>11}$, 
$(13,6,5,2,2)_{13>11}$, 
$(13,7,3,3,2)_{5>3}$, 
$(13,7,4,2,2)_{16>14}$, 
$(13,8,3,2,2)_{12>11}$, 
$(14,5,4,3,2)_{7>5}$, 
$(15,5,3,3,2)_{1>0}$

\subsection*{$d=4$, $\delta=8$:}
$(7,7,7,7,4)_{1>0}$, 
$(8,6,6,6,6)_{2>1}$, 
$(8,7,6,6,5)_{1>0}$, 
$(8,7,7,5,5)_{3>0}$, 
$(8,7,7,6,4)_{1>0}$, 
$(8,7,7,7,3)_{2>0}$, 
$(8,8,6,6,4)_{7>1}$, 
$(8,8,7,5,4)_{3>0}$, 
$(8,8,7,6,3)_{5>0}$, 
$(8,8,8,4,4)_{8>2}$, 
$(8,8,8,5,3)_{2>1}$, 
$(8,8,8,6,2)_{7>2}$, 
$(9,6,6,6,5)_{2>1}$, 
$(9,7,6,5,5)_{3>0}$, 
$(9,7,6,6,4)_{5>1}$, 
$(9,7,7,5,4)_{7>0}$, 
$(9,7,7,6,3)_{6>0}$, 
$(9,7,7,7,2)_{3>0}$, 
$(9,8,5,5,5)_{1>0}$, 
$(9,8,6,5,4)_{14>2}$, 
$(9,8,6,6,3)_{12>3}$, 
$(9,8,7,4,4)_{10>1}$, 
$(9,8,7,5,3)_{18>2}$, 
$(9,8,7,6,2)_{13>2}$, 
$(9,8,7,7,1)_{3>0}$, 
$(9,8,8,4,3)_{11>2}$, 
$(9,8,8,5,2)_{12>4}$, 
$(9,8,8,6,1)_{7>4}$, 
$(9,9,5,5,4)_{6>0}$, 
$(9,9,6,4,4)_{5>0}$, 
$(9,9,6,5,3)_{15>3}$, 
$(9,9,6,6,2)_{5>2}$, 
$(9,9,7,4,3)_{14>1}$, 
$(9,9,7,5,2)_{17>3}$, 
$(9,9,7,6,1)_{7>2}$, 
$(9,9,7,7,0)_{2>0}$, 
$(9,9,8,3,3)_{8>1}$, 
$(9,9,8,4,2)_{8>1}$, 
$(9,9,8,5,1)_{9>3}$, 
$(9,9,9,3,2)_{3>1}$, 
$(9,9,9,4,1)_{3>0}$, 
$(10,6,6,6,4)_{9>3}$, 
$(10,7,5,5,5)_{3>0}$, 
$(10,7,6,5,4)_{15>1}$, 
$(10,7,6,6,3)_{13>3}$, 
$(10,7,7,4,4)_{5>0}$, 
$(10,7,7,5,3)_{19>1}$, 
$(10,7,7,6,2)_{8>0}$, 
$(10,7,7,7,1)_{4>0}$, 
$(10,8,5,5,4)_{7>0}$, 
$(10,8,6,4,4)_{31>4}$, 
$(10,8,6,5,3)_{32>5}$, 
$(10,8,6,6,2)_{29>8}$, 
$(10,8,7,4,3)_{35>5}$, 
$(10,8,7,5,2)_{34>6}$, 
$(10,8,7,6,1)_{18>6}$, 
$(10,8,8,3,3)_{4>1}$, 
$(10,8,8,4,2)_{33>9}$, 
$(10,8,8,5,1)_{15>9}$, 
$(10,9,5,4,4)_{15>1}$, 
$(10,9,5,5,3)_{16>1}$, 
$(10,9,6,4,3)_{39>6}$, 
$(10,9,6,5,2)_{38>8}$, 
$(10,9,6,6,1)_{16>9}$, 
$(10,9,7,3,3)_{21>5}$, 
$(10,9,7,4,2)_{43>8}$, 
$(10,9,7,5,1)_{28>9}$, 
$(10,9,8,3,2)_{24>7}$, 
$(10,9,8,4,1)_{24>10}$, 
$(10,9,9,2,2)_{2>0}$, 
$(10,9,9,3,1)_{8>3}$, 
$(10,10,4,4,4)_{12>2}$, 
$(10,10,5,4,3)_{18>3}$, 
$(10,10,5,5,2)_{7>0}$, 
$(10,10,6,3,3)_{8>2}$, 
$(10,10,6,4,2)_{42>10}$, 
$(10,10,6,5,1)_{18>7}$, 
$(10,10,6,6,0)_{11>10}$, 
$(10,10,7,3,2)_{23>6}$, 
$(10,10,7,4,1)_{26>12}$, 
$(10,10,8,2,2)_{17>5}$, 
$(10,10,8,3,1)_{13>9}$, 
$(10,10,9,2,1)_{6>4}$, 
$(11,6,6,5,4)_{9>1}$, 
$(11,6,6,6,3)_{10>4}$, 
$(11,7,5,5,4)_{11>0}$, 
$(11,7,6,4,4)_{22>3}$, 
$(11,7,6,5,3)_{31>4}$, 
$(11,7,6,6,2)_{19>6}$, 
$(11,7,7,4,3)_{25>3}$, 
$(11,7,7,5,2)_{25>2}$, 
$(11,7,7,6,1)_{11>2}$, 
$(11,7,7,7,0)_{2>0}$, 
$(11,8,5,4,4)_{26>3}$, 
$(11,8,5,5,3)_{23>2}$, 
$(11,8,6,4,3)_{60>11}$, 
$(11,8,6,5,2)_{58>13}$, 
$(11,8,6,6,1)_{24>13}$, 
$(11,8,7,3,3)_{26>4}$, 
$(11,8,7,4,2)_{64>14}$, 
$(11,8,7,5,1)_{40>15}$, 
$(11,8,8,3,2)_{28>9}$, 
$(11,8,8,4,1)_{30>17}$, 
$(11,9,4,4,4)_{11>1}$, 
$(11,9,5,4,3)_{45>6}$, 
$(11,9,5,5,2)_{33>5}$, 
$(11,9,6,3,3)_{36>8}$, 
$(11,9,6,4,2)_{78>19}$, 
$(11,9,6,5,1)_{46>20}$, 
$(11,9,7,3,2)_{57>14}$, 
$(11,9,7,4,1)_{58>24}$, 
$(11,9,8,2,2)_{20>7}$, 
$(11,9,8,3,1)_{37>21}$, 
$(11,9,9,2,1)_{9>5}$, 
$(11,10,4,4,3)_{21>5}$, 
$(11,10,5,3,3)_{20>4}$, 
$(11,10,5,4,2)_{52>12}$, 
$(11,10,5,5,1)_{20>7}$, 
$(11,10,6,3,2)_{56>16}$, 
$(11,10,6,4,1)_{56>29}$, 
$(11,10,7,2,2)_{30>9}$, 
$(11,10,7,3,1)_{46>26}$, 
$(11,10,8,2,1)_{25>20}$, 
$(11,11,4,3,3)_{10>2}$, 
$(11,11,4,4,2)_{10>3}$, 
$(11,11,5,3,2)_{26>7}$, 
$(11,11,5,4,1)_{23>12}$, 
$(11,11,6,2,2)_{13>5}$, 
$(11,11,6,3,1)_{30>18}$, 
$(11,11,7,2,1)_{19>15}$, 
$(12,6,5,5,4)_{4>0}$, 
$(12,6,6,4,4)_{21>3}$, 
$(12,6,6,5,3)_{14>3}$, 
$(12,6,6,6,2)_{17>8}$, 
$(12,7,5,4,4)_{19>1}$, 
$(12,7,5,5,3)_{22>1}$, 
$(12,7,6,4,3)_{49>10}$, 
$(12,7,6,5,2)_{46>9}$, 
$(12,7,6,6,1)_{17>10}$, 
$(12,7,7,3,3)_{23>3}$, 
$(12,7,7,4,2)_{32>5}$, 
$(12,7,7,5,1)_{26>6}$, 
$(12,8,4,4,4)_{25>5}$, 
$(12,8,5,4,3)_{56>8}$, 
$(12,8,5,5,2)_{32>5}$, 
$(12,8,6,3,3)_{32>7}$, 
$(12,8,6,4,2)_{109>29}$, 
$(12,8,6,5,1)_{54>27}$, 
$(12,8,7,3,2)_{62>17}$, 
$(12,8,7,4,1)_{65>31}$, 
$(12,8,8,2,2)_{30>13}$, 
$(12,8,8,3,1)_{27>20}$, 
$(12,9,4,4,3)_{33>6}$, 
$(12,9,5,3,3)_{35>7}$, 
$(12,9,5,4,2)_{80>18}$, 
$(12,9,5,5,1)_{32>11}$, 
$(12,9,6,3,2)_{88>28}$, 
$(12,9,6,4,1)_{88>45}$, 
$(12,9,7,2,2)_{41>14}$, 
$(12,9,7,3,1)_{71>40}$, 
$(12,9,8,2,1)_{34>28}$, 
$(12,10,4,3,3)_{14>4}$, 
$(12,10,4,4,2)_{52>16}$, 
$(12,10,5,3,2)_{63>18}$, 
$(12,10,5,4,1)_{62>32}$, 
$(12,10,6,2,2)_{60>23}$, 
$(12,10,6,3,1)_{71>48}$, 
$(12,10,7,2,1)_{50>41}$, 
$(12,11,3,3,3)_{2>0}$, 
$(12,11,4,3,2)_{32>11}$, 
$(12,11,4,4,1)_{25>16}$, 
$(12,11,5,2,2)_{32>14}$, 
$(12,11,5,3,1)_{46>31}$, 
$(12,11,6,2,1)_{41>38}$, 
$(12,12,3,3,2)_{3>2}$, 
$(12,12,4,2,2)_{19>10}$, 
$(12,12,4,3,1)_{13>11}$, 
$(13,5,5,5,4)_{1>0}$, 
$(13,6,5,4,4)_{15>1}$, 
$(13,6,5,5,3)_{9>0}$, 
$(13,6,6,4,3)_{26>7}$, 
$(13,6,6,5,2)_{24>8}$, 
$(13,7,4,4,4)_{17>4}$, 
$(13,7,5,4,3)_{45>7}$, 
$(13,7,5,5,2)_{28>3}$, 
$(13,7,6,3,3)_{30>8}$, 
$(13,7,6,4,2)_{73>21}$, 
$(13,7,6,5,1)_{39>18}$, 
$(13,7,7,3,2)_{34>7}$, 
$(13,7,7,4,1)_{36>15}$, 
$(13,7,7,5,0)_{12>11}$, 
$(13,8,4,4,3)_{38>9}$, 
$(13,8,5,3,3)_{33>6}$, 
$(13,8,5,4,2)_{88>23}$, 
$(13,8,5,5,1)_{32>13}$, 
$(13,8,6,3,2)_{91>31}$, 
$(13,8,6,4,1)_{91>55}$, 
$(13,8,7,2,2)_{43>17}$, 
$(13,8,7,3,1)_{65>41}$, 
$(13,9,4,3,3)_{25>6}$, 
$(13,9,4,4,2)_{55>18}$, 
$(13,9,5,3,2)_{85>28}$, 
$(13,9,5,4,1)_{78>41}$, 
$(13,9,6,2,2)_{62>26}$, 
$(13,9,6,3,1)_{94>67}$, 
$(13,9,7,2,1)_{59>50}$, 
$(13,10,3,3,3)_{4>1}$, 
$(13,10,4,3,2)_{55>21}$, 
$(13,10,4,4,1)_{46>33}$, 
$(13,10,5,2,2)_{57>24}$, 
$(13,10,5,3,1)_{75>54}$, 
$(13,10,6,2,1)_{69>68}$, 
$(13,11,3,3,2)_{15>6}$, 
$(13,11,4,2,2)_{32>17}$, 
$(13,11,4,3,1)_{44>37}$, 
$(13,12,3,2,2)_{13>8}$, 
$(13,13,2,2,2)_{1>0}$, 
$(14,5,5,4,4)_{2>0}$, 
$(14,5,5,5,3)_{3>0}$, 
$(14,6,4,4,4)_{18>4}$, 
$(14,6,5,4,3)_{26>4}$, 
$(14,6,5,5,2)_{11>1}$, 
$(14,6,6,3,3)_{8>4}$, 
$(14,6,6,4,2)_{45>17}$, 
$(14,6,6,5,1)_{17>13}$, 
$(14,7,4,4,3)_{31>9}$, 
$(14,7,5,3,3)_{29>6}$, 
$(14,7,5,4,2)_{63>17}$, 
$(14,7,5,5,1)_{24>8}$, 
$(14,7,6,3,2)_{63>23}$, 
$(14,7,6,4,1)_{62>40}$, 
$(14,7,7,2,2)_{14>4}$, 
$(14,7,7,3,1)_{38>21}$, 
$(14,8,4,3,3)_{18>4}$, 
$(14,8,4,4,2)_{66>24}$, 
$(14,8,5,3,2)_{78>27}$, 
$(14,8,5,4,1)_{76>47}$, 
$(14,8,6,2,2)_{70>33}$, 
$(14,8,6,3,1)_{83>68}$, 
$(14,9,3,3,3)_{5>2}$, 
$(14,9,4,3,2)_{63>26}$, 
$(14,9,4,4,1)_{52>38}$, 
$(14,9,5,2,2)_{61>29}$, 
$(14,9,5,3,1)_{85>65}$, 
$(14,10,3,3,2)_{15>6}$, 
$(14,10,4,2,2)_{57>30}$, 
$(14,10,4,3,1)_{56>53}$, 
$(14,11,3,2,2)_{22>14}$, 
$(14,12,2,2,2)_{11>9}$, 
$(15,5,4,4,4)_{6>2}$, 
$(15,5,5,4,3)_{8>0}$, 
$(15,5,5,5,2)_{2>0}$, 
$(15,6,4,4,3)_{22>7}$, 
$(15,6,5,3,3)_{12>3}$, 
$(15,6,5,4,2)_{38>12}$, 
$(15,6,5,5,1)_{10>4}$, 
$(15,6,6,3,2)_{31>17}$, 
$(15,6,6,4,1)_{30>26}$, 
$(15,7,4,3,3)_{18>5}$, 
$(15,7,4,4,2)_{45>20}$, 
$(15,7,5,3,2)_{57>21}$, 
$(15,7,5,4,1)_{54>35}$, 
$(15,7,6,2,2)_{40>20}$, 
$(15,7,6,3,1)_{57>49}$, 
$(15,7,7,2,1)_{25>23}$, 
$(15,8,3,3,3)_{2>0}$, 
$(15,8,4,3,2)_{58>26}$, 
$(15,8,4,4,1)_{49>42}$, 
$(15,8,5,2,2)_{59>33}$, 
$(15,8,5,3,1)_{74>64}$, 
$(15,9,3,3,2)_{19>9}$, 
$(15,9,4,2,2)_{51>32}$, 
$(15,10,3,2,2)_{28>19}$, 
$(16,4,4,4,4)_{7>4}$, 
$(16,5,4,4,3)_{10>3}$, 
$(16,5,5,3,3)_{6>0}$, 
$(16,5,5,4,2)_{8>2}$, 
$(16,5,5,5,1)_{2>0}$, 
$(16,6,4,3,3)_{7>3}$, 
$(16,6,4,4,2)_{36>18}$, 
$(16,6,5,3,2)_{30>14}$, 
$(16,6,5,4,1)_{29>22}$, 
$(16,6,6,2,2)_{27>18}$, 
$(16,7,3,3,3)_{3>0}$, 
$(16,7,4,3,2)_{42>21}$, 
$(16,7,5,2,2)_{36>22}$, 
$(16,7,5,3,1)_{54>50}$, 
$(16,8,3,3,2)_{13>7}$, 
$(16,8,4,2,2)_{53>37}$, 
$(16,9,3,2,2)_{26>20}$, 
$(17,4,4,4,3)_{5>4}$, 
$(17,5,4,3,3)_{4>0}$, 
$(17,5,4,4,2)_{15>9}$, 
$(17,5,5,3,2)_{8>3}$, 
$(17,5,5,4,1)_{8>5}$, 
$(17,6,4,3,2)_{26>17}$, 
$(17,6,5,2,2)_{24>19}$, 
$(17,7,3,3,2)_{10>5}$, 
$(17,7,4,2,2)_{33>27}$, 
$(17,8,3,2,2)_{24>22}$, 
$(18,4,4,4,2)_{9>8}$, 
$(18,5,4,3,2)_{11>7}$, 
$(18,6,3,3,2)_{4>3}$, 
$(19,5,3,3,2)_{1>0}$

\subsection*{$d=5$, $\delta=7$, $\la_1 \leq 8$:}
$(8,7,7,5,5,3)_{1>0}$, 
$(8,7,7,6,4,3)_{1>0}$, 
$(8,7,7,6,5,2)_{1>0}$, 
$(8,7,7,7,3,3)_{1>0}$, 
$(8,8,7,5,4,3)_{2>1}$, 
$(8,8,7,6,3,3)_{1>0}$, 
$(8,8,7,6,4,2)_{3>2}$, 
$(8,8,7,6,5,1)_{2>1}$, 
$(8,8,7,7,4,1)_{1>0}$

\end{document}